\documentclass[11pt,onecolumn]{article}

\RequirePackage[utf8]{inputenc}
\RequirePackage[T1]{fontenc}
\usepackage{amsmath,amsthm,amssymb,amsfonts,mathtools,epsfig,graphicx}
\RequirePackage{setspace}
\AtBeginDocument{%
  \setlength{\abovedisplayskip}{5pt plus 2pt minus 2pt}%
  \setlength{\belowdisplayskip}{5pt plus 2pt minus 2pt}%
  \setlength{\abovedisplayshortskip}{3pt plus 2pt minus 2pt}%
  \setlength{\belowdisplayshortskip}{3pt plus 2pt minus 2pt}%
}
\usepackage{mathpazo}
\usepackage{tgpagella}
\usepackage[tracking=true,expansion=true,protrusion=true]{microtype}
\usepackage[linesnumbered,ruled,vlined]{algorithm2e}
\usepackage{dsfont}
\usepackage{enumerate}
\usepackage{url}
\usepackage{bm}
\usepackage{thmtools}
\usepackage{enumitem}
\usepackage{mathbbol}
\RequirePackage[
  a4paper,
  textheight=9in,
  textwidth=6.5in,
  top=1in,
  headheight=14pt,
  headsep=25pt,
  footskip=30pt,
]{geometry}

\usepackage{float}
\usepackage{booktabs}
\usepackage{array}
\newcolumntype{L}[1]{>{\raggedright\arraybackslash}p{#1}}
\newcolumntype{C}[1]{>{\centering\arraybackslash}p{#1}}
\RequirePackage{xcolor}
\definecolor{Black}{RGB}{0, 0, 0}
\definecolor{White}{RGB}{255, 255, 255}
\definecolor{Gray}{RGB}{127, 127, 127}
\definecolor{LightGray}{RGB}{216, 216, 216}
\definecolor{Orange}{RGB}{237, 125, 49}
\definecolor{LightOrange}{RGB}{251, 229, 214}
\definecolor{Yellow}{RGB}{255, 192, 0}
\definecolor{LightYellow}{RGB}{255, 242, 200}
\definecolor{Blue}{RGB}{0, 0, 255}
\definecolor{LightBlue}{RGB}{222, 235, 247}
\definecolor{Green}{RGB}{0, 128, 0}
\definecolor{LightGreen}{RGB}{226, 240, 217}
\definecolor{Navy}{RGB}{30, 70, 235}
\definecolor{LightNavy}{RGB}{218, 227, 243}
\RequirePackage[
  plainpages=false,
  colorlinks=true,
  linkcolor=Navy,
  citecolor=Green,
  urlcolor=Green,
]{hyperref}
\RequirePackage[
  backend=biber,
  style=alphabetic,
  url=false,
  doi=false,
  isbn=false,
  maxalphanames=4,
  minalphanames=3,
  maxnames=99,
  backref=true,
]{biblatex}
\DeclareSourcemap{
  \maps{
    \map{
      \step[fieldset=doi, null]
      \step[fieldset=url, null]
      \step[fieldset=isbn, null]
      \step[fieldset=editor, null]
    }
  }
}
\renewcommand*{\bibpagerefpunct}{\addperiod\space}
\renewbibmacro*{pageref}{%
  \iflistundef{pageref}
    {}
    {\printlist[pageref][-\value{listtotal}]{pageref}\nopunct}}
\renewbibmacro*{setpageref}{%
  \iflistundef{pageref}
    {}
    {\setunit{\bibpagerefpunct}\newblock
     \printlist[pageref][-\value{listtotal}]{pageref}\nopunct}}
\defbibheading{bibliography}[\refname]{%
  \ifcsundef{chapter}
    {\section*{#1}\addcontentsline{toc}{section}{#1}}
    {\chapter*{#1}\addcontentsline{toc}{chapter}{#1}}%
  \markboth{#1}{#1}}
  
\RequirePackage[nameinlink]{cleveref}
\RequirePackage{tcolorbox}

\let\oldnl\nl
\newcommand{\nonl}{\renewcommand{\nl}{\let\nl\oldnl}}

\newcommand{\RNC}{\ensuremath{\mathbf{RNC}}}
\newcommand{\NC}{\ensuremath{\mathbf{NC}}}
\makeatletter
\providecommand{\rep@title}{}
\newtheorem*{rep@theorem}{\rep@title}
\newcommand{\newreptheorem}[2]{%
\newenvironment{rep#1}[1]{%
 \def\rep@title{#2 \ref{##1}}%
 \begin{rep@theorem}}%
 {\end{rep@theorem}}}
\makeatother

\newtheorem{theorem}{Theorem}
\newtheorem{corollary}[theorem]{Corollary}

\newtheorem{lemma}[theorem]{Lemma}
\newtheorem{definition}[theorem]{Definition}
\newtheorem{proposition}[theorem]{Proposition}

\newtheorem{fact}[theorem]{Fact}
\newreptheorem{proposition}{Proposition}
\newreptheorem{theorem}{Theorem}

\crefname{algocfline}{algorithm}{algorithms}
\Crefname{algocfline}{Algorithm}{Algorithms}
\crefname{table}{Table}{Tables}
\Crefname{table}{Table}{Tables}
\crefname{algorithm}{Algorithm}{Algorithms}
\Crefname{algorithm}{Algorithm}{Algorithms}
\crefname{algocf}{Algorithm}{Algorithms}
\Crefname{algocf}{Algorithm}{Algorithms}
\crefname{AlgoLine}{Line}{Lines}
\Crefname{AlgoLine}{Line}{Lines}

\DeclareMathOperator*{\polylog}{polylog}

\DeclareMathOperator*{\len}{len}
\DeclareMathOperator*{\maxwidth}{maxw}

\newcommand{\norm}[1]{\left\Vert#1\right\Vert}
\newcommand{\abs}[1]{\left\vert#1\right\vert}
\newcommand{\set}[1]{\left\{#1\right\}}
\newcommand{\tp}[1]{\left(#1\right)}

\begin{document}
\renewcommand{\d}[1]{\ensuremath{\operatorname{d}\!{#1}}}
\def\myparagraph#1{\vspace{2pt}\noindent{ #1~~}}
\newcommand{\eqdef}{{\stackrel{\mbox{\tiny \tt ~def~}}{=}}}

\long\def\ignore#1{}
\def\myps[#1]#2{\includegraphics[#1]{#2}}
\def\etal{{\em et al.}}
\def\Bar#1{{\bar #1}}
\def\br(#1,#2){{\langle #1,#2 \rangle}}
\def\setZ[#1,#2]{{[ #1 .. #2 ]}}
\def\Pr#1{{{\mathbb P}\!\bm{\left[} #1 \bm{\right]}}}

\def\maxsuperscript[#1]{\raisebox{#1}{\scriptsize $({\max})$}}

\def\bmin{{\beta_{\min}}}
\def\bmax{{\beta_{\max}}}
\def\betamin{\beta_{\min}}
\def\betamax{\beta_{\max}}

\def\bP{{\bf P}}

\def\E{{\mathbb E}}
\def\Var{{\mathbb V}\text{ar}}
\def\P{{\mathbb P}}
\def\X{{\mathbb X}}
\def\Y{{\mathbb Y}}
\def\Z{{\mathbb Z}}
\def\W{{\mathbb W}}
\def\U{{\mathbb U}}
\def\S{{\mathbb S}}
\def\R{{\mathbb R}}
\def\Vrel(#1){{{\mathbb S}[{#1}]}}

\def\Pcoef#1{{P^{#1}_{\tt count}}}
\def\Ptcoef#1{{P^{#1}_{\tt count-weak}}}
\def\Pratio#1{{P^{{#1}}_{\tt ratio-point}}}
\def\PratioAll#1{{P^{{#1}}_{\tt ratio-all}}}

\def\setof#1{{\left\{#1\right\}}}
\def\suchthat#1#2{\setof{\,#1\mid#2\,}}
\def\event#1{\setof{#1}}
\def\q={\quad=\quad}
\def\qq={\qquad=\qquad}
\def\ee{{\mathrm e}}
\def\calA{{\cal A}}
\def\calB{{\cal B}}
\def\calC{{\cal C}}
\def\calD{{\cal D}}
\def\calE{{\cal E}}
\def\calF{{\cal F}}
\def\calG{{\cal G}}
\def\calI{{\cal I}}
\def\calJ{{\cal J}}
\def\calH{{\cal H}}
\def\calHfrac{{{\cal L}}}
\def\calL{{\cal L}}
\def\calM{{\cal M}}
\def\calN{{\cal N}}
\def\calK{{\cal K}}
\def\calP{{\cal P}}
\def\calQ{{\cal Q}}
\def\calR{{\cal R}}
\def\calS{{\cal S}}
\def\A{\mathfrak A}
\def\calT{{\cal T}}
\def\calU{{\cal U}}
\def\calV{{\cal V}}
\def\calO{{\cal O}}
\def\calX{{\cal X}}
\def\calY{{\cal Y}}
\def\calZ{{\cal Z}}
\def\psfile[#1]#2{}
\def\psfilehere[#1]#2{}

\newcommand{\eps}{\varepsilon}
\newcommand\myqed{{}}

\title{Near-Optimal Parallel Approximate Counting via Sampling}

\author{%
\makebox[0pt][c]{%
\resizebox{0.96\textwidth}{!}{%
David G. Harris\thanks{Department of Computer Science, University of Maryland, College Park, USA. \texttt{davidgharris29@gmail.com}} \quad
Vladimir Kolmogorov\thanks{Institute of Science and Technology Austria, Klosterneuburg, Austria. \texttt{vnk@ist.ac.at}} \quad
Hongyang Liu\thanks{School of Computer Science, State Key Laboratory for Novel Software Technology, New Cornerstone Science Laboratory, Nanjing University, Nanjing, China. \texttt{\{liuhongyang,\,zhangyiyao\}@smail.nju.edu.cn}, \texttt{yinyt@nju.edu.cn}} \quad
Yitong Yin\footnotemark[\value{footnote}] \quad
Yiyao Zhang\footnotemark[\value{footnote}]%
}}%
}

\date{}
\maketitle

\begin{abstract}
  The computational equivalence between approximate counting and sampling is well established for polynomial-time algorithms.
  The most efficient general reduction from counting to sampling is based on simulated annealing.
  In this approach, the counting problem is formulated as estimating the ratio
  $Q={Z(\betamax)}/{Z(\betamin)}$ between partition functions
  $Z(\beta)=\sum_{x\in \Omega} \exp(\beta H(x))$
  of Gibbs distributions $\mu_\beta$ over $\Omega$ with Hamiltonian $H$,
  given access to a sampling oracle for $\mu_\beta$ at any $\beta \in [\betamin, \betamax]$.
  The sample complexity (measured by the number of oracle calls)
  is typically expressed in terms of $q$ and $h$,
  which respectively bound $\ln Q$ and $H$.
  The best upper bound achieved by known annealing algorithms with relative error $\eps$ is $O(q \eps^{-2} \log h)$.
  However, all known algorithms attaining this near-optimal complexity are inherently sequential, or \emph{adaptive}: the queried parameters $\beta$ depend on previous samples.

  We develop a simple non-adaptive algorithm for approximate counting using $O(q \eps^{-2} \log^2 h)$ samples,
  as well as an algorithm that achieves $O(q \eps^{-2} \log h)$ samples with just two additional adaptive rounds,
  matching the best sample complexity of sequential algorithms.
  These algorithms naturally yield work-efficient parallel (\textbf{RNC}) counting algorithms.
  We discuss applications to several classic models,
  including the anti-ferromagnetic 2-spin, monomer-dimer and ferromagnetic Ising models.
\end{abstract}

\section{Introduction}
A fundamental theme in randomized computation is the intrinsic connection between \emph{sampling} combinatorial objects and \emph{counting} those objects.
This interplay culminated in the Monte Carlo method and has become a central paradigm in algorithm design.
In theoretical computer science, a classic result of Jerrum, Valiant and Vazirani~\cite{jerrum_random_1986} established that, for all self-reducible problems, approximate counting and sampling are computationally equivalent up to polynomial time.

A wide range of counting problems can be formulated as computing the partition function of a Gibbs distribution,
for which significantly more efficient approximation algorithms have been developed by leveraging samples from the corresponding Gibbs distributions.
Classic examples include volume estimation~\cite{dyer_random_1989}, approximation of the permanent~\cite{jerrumPolynomialtimeApproximationAlgorithm2004}, and approximately counting combinatorial objects such as matchings, independent sets, and spin configurations~\cite{jerrum_polynomial-time_1993,lubyApproximatelyCountingFour1997}, or constraint-satisfying solutions~\cite{fengFastSamplingCounting2021}.

Formally, given a real-valued function $H(\cdot)$ over a finite set $\Omega$, and a weight function $F: \Omega \rightarrow \mathbb R_{\geq 0}$,  the {\em Gibbs distribution} is the family of distributions $\mu^{\Omega}_{\beta}$ over $\Omega$,
parameterized by $\beta$ over an interval $[\betamin, \betamax]$, of the form
$$
\mu^{\Omega}_\beta(\omega)=\frac{\ee^{\beta H(\omega)} F(\omega)}{Z(\beta)}.
$$

The normalization factor $Z(\beta)=\sum_{\omega \in\Omega} F(\omega) \ee^{\beta H(\omega)}$ is called the {\em partition function}. These distributions appear in a number of sampling algorithms and are also common in physics, where the parameter $-\beta$ corresponds to the inverse temperature, and $H(\omega)$ is called the {\em Hamiltonian} of the system.  In the \emph{unweighted} setting, we have $F(\omega) = 1$ for all $\omega$.
Given boundary values $[\betamin, \betamax]$, a key parameter is the partition ratio $Q = Z(\betamax) / Z(\beta_{\min})$.

Since our algorithms only use the sampled value of the Hamiltonian, it is convenient to aggregate states with the same value of $H$.
For each $x \in \mathbb R$, we set $c_x = \sum_{\omega \in H^{-1}(x)} F(\omega)$.
The induced distribution of $X=H(\omega)$, where $\omega$ is drawn from the Gibbs distribution at parameter $\beta$, is the associated \emph{gross Gibbs distribution}.
We denote it by $\mu_{\beta}$; explicitly, 
$$\mu_{\beta}(x) = c_x \ee^{\beta x}/Z(\beta), \quad\text{and} \quad Z(\beta) = \sum_x c_x \ee^{\beta x}.
$$

For $\beta=-\infty$, we let $\mu_{-\infty}$ be the point mass at $0$ and $Z(-\infty)=c_0$.

In the abstract gross-distribution model, the coefficients $c_x$ can only be known up to scaling.  The precise details of the underlying functions $F, H$ and the state space $\Omega$ become essentially irrelevant.

There are two key parameters that will be used to estimate the complexity of our algorithms. We assume  (after rescaling if necessary)  that we are given known values  $q,h  \geq 2$ with
$$
\log Q \leq q, \qquad H(\Omega) \subseteq \{0 \} \cup [1,\infty), \qquad \mathbb{E}\!_{\omega \sim \mu_{\betamax}} \left[H(\omega)\right]\le h.
$$
Here, $q$ is a known finite upper bound on the logarithm of the ratio, and $h$ is a known bound on the expectations. Often, we can simply take $h = \max \{x : c_x > 0 \}$.  In the unweighted setting, when $\betamax \leq 0$, we can simply take $q = \log |\Omega|$.

\medskip

\subsection{Estimating the Partition Ratio}
Many counting problems can be reformulated in terms of the Gibbs partition ratio $Q$ for chosen values $\betamin$ and $\betamax$.  Our goal is to develop \emph{black-box} algorithms to estimate $Q$ using oracle access to samples from the gross distribution $\mu_{\beta}$ at query points $\beta \in [\betamin, \betamax]$.  This approach is often referred to as \emph{simulated annealing}, reflecting the process of gradually varying the parameter $\beta$ from $\betamin$ to $\betamax$.  

 As a concrete example, consider the problem of counting independent sets in an $n$-vertex graph $G$. We take $\Omega = \mathcal I(G)$ to be the set of independent sets of $G$, with $F(I)=1, H(I)=|I|$ for each $I$. The partition function becomes
$Z(\beta)=\sum_{I\in\mathcal I(G)}\ee^{\beta |I|}$.
Taking $\betamax=0$ gives $Z(0)=|\mathcal I(G)|$, while taking $\betamin=-\infty$ gives the trivial value $Z(-\infty)=1$, since only the empty  set contributes.
So estimating $Q=Z(0)/Z(-\infty)$ is exactly independent-set counting.
Here one may take $h=n$ and $q=n\log 2$.

 We define the \emph{sample complexity} to be the number of oracle queries. Early non-adaptive simulated annealing algorithms, due to \cite{dyer_random_1989, Bezakova08}, achieve sample complexity $O(q^2 \eps^{-2} \log h)$ for estimating $Q$ to relative error $\eps$.
A long line of work has since focused on improving this complexity.
The work of~\cite{Stefankovic:JACM09} introduced the first adaptive simulated annealing algorithm, achieving a near-linear sample complexity $O(q \eps^{-2} \polylog(q,h))$.
Subsequent improvements were based on the Tootsie Pop Algorithm (TPA) of~\cite{tpa,TPA:journal} and the paired product estimator (PPE) introduced by~\cite{Huber:Gibbs}.
This line of work~\cite{Huber:Gibbs,Kolmogorov:COLT18,harrisParameterEstimationGibbs2024} culminated in the best known upper bound $O(q \eps^{-2} \log h)$ on sample complexity, achieved in~\cite{harrisParameterEstimationGibbs2024}.
In this oracle model, \cite{harrisParameterEstimationGibbs2024} established a near-matching lower bound of $\Omega(q \eps^{-2})$.\footnote{If the domain is further assumed to be integer-valued, i.e. $H(\Omega)$ is supported on $\{0, 1, \dots, h \}$, then better upper and lower bounds can be shown in some cases; see \cite{harrisParameterEstimationGibbs2024} for further details.}

From the parallel perspective, standard non-adaptive annealing procedures are easy to parallelize since all oracle queries are fixed in advance, but they typically use $O(q^2\eps^{-2}\log h)$ samples.  By contrast, near-optimal annealing algorithms, such as TPA-based methods, are highly adaptive: later query points depend on earlier samples. Hence existing reductions are either easily parallelizable or sample-efficient, but not both.

Our first contribution is a simple non-adaptive algorithm with near-optimal sample complexity.
\begin{theorem}
  \label{mainth1}
  There is a non-adaptive sampling algorithm to estimate $Q$ to relative error $\eps$ with success probability at least $0.7$ in $O(q \eps^{-2} \log^2 h)$ sample complexity.
\end{theorem}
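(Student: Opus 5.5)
The plan is to build a \emph{non-adaptive} cooling schedule that covers every scale of the Hamiltonian at once, and then estimate $Q$ as a telescoping product of ratios along it.

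\textbf{Schedule.} The difficulty for non-adaptive schedules is that the right step size at $\beta$ is about $1/\E_{\mu_\beta}[H]$, which is unknown in advance. I would handle this with a multiscale grid. For each $k=0,1,\dots,\lceil\log_2 h\rceil$, take the grid $G_k$ of points spaced $2^{-k}$ apart in $[\betamin,\betamax]$, kept only where that spacing is relevant. The schedule is the union over $k$. The point of the union is that, for every $\beta$, some level $k$ with $2^k\approx \E_{\mu_\beta}[H]$ exists, since $1\le H\le h$ on its support up to the atom at $0$.

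Since $\log Z$ is convex with derivative $\E_{\mu_\beta}[H]$, the region where $\E_{\mu_\beta}[H]\in[2^{k},2^{k+1})$ is an interval $I_k$. On $I_k$ the increase of $\log Z$ is at least $2^k|I_k|$, and the total increase is at most $q$. Hence level $k$ contributes only $O(q+1)$ points inside $I_k$. To stay non-adaptive, I cannot restrict to $I_k$. Instead, level $k$ would use all points of $G_k$ but truncate the range, starting from the largest $\beta$ at which $\E[H]\le 2^k$ is guaranteed. It may be simpler to run $2^{-k}$-spaced grids over the whole range and rely on the fact that only $O(q)$ consecutive pairs are ``active''. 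This is the first place I expect friction. Simpler still, if $\betamin$ is finite I would use the standard trick of a geometric schedule with $O(q\log h)$ points, where each point's spacing is set relative to a known bound on $\E[H]$ derived from convexity and the ratio bound $q$.

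\textbf{Estimator.} For consecutive schedule points $\beta_i<\beta_{i+1}$, I would use the paired product estimator. Sample $X\sim\mu_{\beta_i}$ and $Y\sim\mu_{\beta_{i+1}}$, with midpoint $m=(\beta_i+\beta_{i+1})/2$. Then $W_i=\ee^{(m-\beta_i)X}$ and $V_i=\ee^{(m-\beta_{i+1})Y}$ have means $Z(m)/Z(\beta_i)$ and $Z(m)/Z(\beta_{i+1})$. The relative variance of each is $Z(\beta_i)Z(2m-\beta_i)/Z(m)^2-1$. This is a second difference of $\log Z$, i.e.\ at most $\ee^{\Delta_i}-1$, where $\Delta_i = \log Z(\beta_{i+1})+\log Z(\beta_i)-2\log Z(m)$.

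The key quantity is therefore $\sum_i\Delta_i$, the total ``curvature'' along the schedule. The claim to prove is $\sum_i \Delta_i = O(q\log^2 h)$ for the non-adaptive grid, with each $\Delta_i=O(1)$. Given this, I would take $N=O(\eps^{-2}\sum_i \Delta_i)$ samples, distributing them proportionally or simply $O(\eps^{-2}\log^2 h\cdot\ldots)$ per point. Then the products $\prod \bar W_i/\prod\bar V_i$ have relative variance $O(\eps^2)$, and Chebyshev gives success probability $0.7$. This yields total sample complexity $O(q\eps^{-2}\log^2h)$.

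\textbf{Main obstacle.} The hard step is bounding the curvature sum without knowing where the mass of $H$ lives. My plan is to bound $\Delta_i\le \delta_i(\E_{\beta_{i+1}}H-\E_{\beta_i}H)$, with $\delta_i$ the step length, using monotonicity of the derivative. Summing within one level $k$ with step $2^{-k}$ telescopes. Then $\sum_i \delta_i\,\Delta(\E H)$ is at most $2^{-k}$ times the range of $\E H$ over the active window. Restricted to where $\E H\lesssim 2^{k}$, this is $O(1)$ per unit of $\log Z$ increase, and so $O(q)$.

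The extra $\log h$ factors come from two sources. There are $\log h$ levels. In addition, outside its natural window a level incurs a charge that must be shown to be geometrically summable, or at worst $O(q\log h)$. I would try first to prove a per-level bound of $O(q\log h)$, from which the $\log^2h$ follows directly. I would also verify that $\Delta_i=O(1)$ holds automatically whenever the schedule includes the finer levels.
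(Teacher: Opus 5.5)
Your variance accounting is wrong, and because of that you are aiming at the wrong target. With $k_i$ samples at step $i$, the relative variance of $\prod_i\bar W_i$ is $\prod_i\bigl(1+(\ee^{\Delta_i}-1)/k_i\bigr)-1\ge\sum_i(\ee^{\Delta_i}-1)/k_i$. Optimizing the allocation under $\sum_i k_i=N$ (Cauchy--Schwarz) shows you need $N\ge\eps^{-2}\bigl(\sum_i\sqrt{\ee^{\Delta_i}-1}\bigr)^2$. When the curvature is spread evenly, this is about $\eps^{-2}\cdot\len(B)\cdot\sum_i\Delta_i$, so your formula $N=O(\eps^{-2}\sum_i\Delta_i)$ drops a factor of the schedule length.

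Take the configuration you plan to prove: $\Theta(q\log^2 h)$ steps with $\Delta_i=\Theta(1)$, total curvature $\Theta(q\log^2h)$. Your budget then leaves relative variance of order $q\log^2 h\cdot\eps^2$. Fixing it costs $\Omega(q^2\log^4h/\eps^2)$ samples, which is the old quadratic bound. What the theorem needs is the opposite: total curvature $\kappa(B)=\sum_i\Delta_i=O(1)$. Then $k=O(\eps^{-2})$ samples per point suffice, since the relative variance is at most $(\ee^{\kappa(B)}-1)/k$. The whole $\log^2 h$ factor must come from $\len(B)$, not from the curvature.

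The missing ingredient is that small widths force small total curvature. If every step has $z(\beta_i,\beta_{i+1})\le\theta\le 1$, then $\kappa(B)\le 5\theta\log(h/\theta)$. This follows from $\kappa(\beta_i,\beta_{i+1})\le z(\beta_i,\beta_{i+1})\min\{1,\log(z'(\beta_{i+1})/z'(\beta_i))\}$, which telescopes in $\log z'$. You also need $z(-\infty,\beta)\le 2z'(\beta)$ whenever $z'(\beta)\le 1/2$. That fact disposes of the region where $\E_{\mu_\beta}[H]$ is far below $1$, which your ``some level with $2^k\approx\E[H]$'' ignores. Taking $\theta=1/(5\log h)$ gives $\kappa(B)\le 3$.

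Your non-adaptivity problem is also not solved by a union of grids, full or truncated on one side. The union contains the finest grid on essentially the whole range, so its length is not bounded in terms of $q$. In a non-adaptive scheme, ``inactive'' pairs must still be sampled. What works is the idea you mention in passing but do not carry out. Step down from $\betamax$ by $\theta/s(\beta)$, where $s(\beta)=\min\{h,q/(\betamax-\beta)\}$ is the known upper bound on $z'(\beta)$. Stop once $s<\theta/2$ or $\betamin$ is crossed. This schedule has max-width $\theta$ and length $O(q\log(h/\theta)/\theta)$, i.e.\ $O(q\log^2h)$ for $\theta=1/(5\log h)$, not $O(q\log h)$. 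It also handles $\betamin=-\infty$.

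With corrected accounting, a constant-width version would also work. Here $\theta=\Theta(1)$ gives length $O(q\log h)$ and $\kappa(B)=O(\log h)$ with each $\Delta_i=O(1)$. The relative variance is then at most $\exp\bigl((\ee-1)\kappa(B)/k\bigr)-1$, so $O(\eps^{-2}\log h)$ samples per point give the same total. That claim, $\kappa(B)=O(\log h)$, is not the $O(q\log^2 h)$ you set out to prove.
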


We further show that, with just \emph{two} additional rounds of adaptivity, we can match the best serial algorithms.
\begin{theorem}
  \label{mainth2}
  There is an algorithm using three rounds of sampling to estimate $Q$ to relative error $\eps$ with success probability at least $0.7$  in $O(q \eps^{-2} \log h)$ expected sample complexity.
\end{theorem}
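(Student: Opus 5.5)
We would prove Theorem~\ref{mainth2} with a three-round scheme built on Theorem~\ref{mainth1}. Round one is a coarse non-adaptive pass that learns an approximate cooling schedule. Round two, at moderate accuracy, refines that schedule. Round three is a single non-adaptive paired-product-estimator (PPE) pass at the target accuracy $\eps$, run along the learned schedule.

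The guiding fact is the one behind the sequential $O(q\eps^{-2}\log h)$ algorithms. Define the log-partition function $z(\beta)=\ln Z(\beta)$. If consecutive points $\beta_{i}<\beta_{i+1}$ satisfy $z(\beta_{i+1})-z(\beta_i)=O(1)$, then the PPE ratios
\[
W_i=\frac{Z(\tfrac{\beta_i+\beta_{i+1}}{2})}{Z(\beta_i)}, \qquad V_i=\frac{Z(\tfrac{\beta_i+\beta_{i+1}}{2})}{Z(\beta_{i+1})}
\]
have relative variance $O(1)$. With about $q$ intervals along a schedule that is \emph{balanced in $z$-increments}, the product estimator then needs only $O(q\eps^{-2})$ samples in total. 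The real difficulty is that $\beta$ must also be discretized near $\betamin$ and in high-curvature regions, which is where the $\log h$ factor comes from.

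\textbf{Round one.} We run the non-adaptive algorithm of Theorem~\ref{mainth1}, but only at constant relative accuracy per scale. Concretely, we query a fixed geometric grid of $O(\log h)$ scales in $\beta$ (equivalently, a grid in the mean $\E_{\mu_\beta}[H]\in[1,h]$), each with $O(q)$ samples. This costs $O(q\log h)$ samples and yields, with constant probability, a rough estimate $\hat z$ of $z$ that is accurate to within an additive $O(1)$ on each of $O(q)$ pieces.

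\textbf{Round two.} From $\hat z$ we construct a candidate schedule $\betamin=\beta_0<\dots<\beta_t=\betamax$ with $t=O(q+\log h)$, whose estimated increments $\hat z(\beta_{i+1})-\hat z(\beta_i)$ are $\Theta(1)$. We then spend $O(q\log h)$ samples checking the schedule: we estimate each increment to constant accuracy and subdivide any interval where the check fails. Expected subdivisions are constant per interval, which is why the bound holds in expectation.

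\textbf{Round three.} On the final schedule we run PPE. We allocate $n_i\propto \eps^{-2}$ samples per midpoint and endpoint, adjusted to the estimated relative variances via the optimal allocation $n_i\propto\sqrt{\mathrm{Var}_i}$. The sum of relative variances is bounded by $O(t\cdot\eps^2/N)$, so a total of $N=O((q+\log h)\eps^{-2})$ samples, which is $O(q\eps^{-2}\log h)$, suffices. Chebyshev's inequality then gives success probability $0.8$, and a union bound with the earlier rounds gives $0.7$.

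\textbf{Main obstacle.} The step I expect to be hardest is proving that a schedule learned from only $O(q\log h)$ coarse samples is provably balanced. The coarse estimates must not miss intervals where $z$ jumps sharply, for example phase-transition-like regions where $z''$ is large. I would try first to use the convexity of $z$ together with the bound $z'(\beta)=\E_{\mu_\beta}H\le h$. Convexity lets a mean estimate at two endpoints sandwich the increment between them, so a missed jump can only come from an interval whose endpoint means differ substantially, and that is exactly what round two detects and subdivides.
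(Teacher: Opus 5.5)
Your three-round outline matches the paper's round count. The proposal has two genuine gaps, though: the PPE cost accounting is wrong, and the schedule-learning rounds are not established.

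\textbf{The PPE count omits curvature.} The factor for interval $i$ has per-sample relative variance $\ee^{\kappa_i}-1$, and these relative variances add across factors. With $n_i$ samples on interval $i$, the product's relative variance is at least $\sum_i(\ee^{\kappa_i}-1)/n_i$. By Cauchy--Schwarz, any allocation therefore needs $N\ge\bigl(\sum_i\sqrt{\ee^{\kappa_i}-1}\bigr)^2/\eps^2$. If you know only that each of $t\approx q$ intervals has width $O(1)$, hence relative variance $O(1)$, you get $N=O(t^2\eps^{-2})=O(q^2\eps^{-2})$, the old quadratic bound, not $O(t\eps^{-2})$. Your $N=O((q+\log h)\eps^{-2})$ would need $\sum_i(\ee^{\kappa_i}-1)=O(1)$, which width-$\Theta(1)$ intervals do not give. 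For widths at most $1$, \Cref{kappa-cor} bounds their total curvature only by $O(\log h)$. You reach $O(q\eps^{-2}\log h)$ only through the inequality $q+\log h\le q\log h$.

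The idea you are missing is \Cref{kappaa1}:
\[
\kappa(\beta_i,\beta_{i+1})\le z(\beta_i,\beta_{i+1})\log\bigl(z'(\beta_{i+1})/z'(\beta_i)\bigr).
\]
This telescopes along the schedule, so widths $\theta$ give total curvature $O(\theta\log(h/\theta))$ (\Cref{kappa-thm}). The paper takes $\theta=1/(5\log h)$, which means $O(q\log h)$ points, $\kappa(B)=O(1)$, and $O(\eps^{-2})$ samples per point. This lemma would also rescue your count. With $O(q)$ points of width at most $1$, $O(\eps^{-2}\log h)$ samples per point, and the sharper bound $\prod_i(1+a_i/k)\le\exp(\sum_i a_i/k)$, you would get $O(q\eps^{-2}\log h)$. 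You would still have to build such a schedule.

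\textbf{The schedule-learning rounds are not established.} You flag this as the main obstacle yourself, and none of the steps goes through as stated.
\begin{itemize}
\item Some neighbouring points of an $O(\log h)$-point grid in $\beta$ can be $\Omega(q/\log h)$ apart in $z$. Estimating a ratio across such a gap has, in general, relative variance exponential in the gap.
\item The convexity sandwich has slack $(z'(\beta_2)-z'(\beta_1))(\beta_2-\beta_1)$, which is unbounded.
\item A grid ``in the mean'' cannot be fixed non-adaptively, because $z'$ is unknown.
\item Making all $\Theta(q)$ pieces accurate simultaneously needs per-piece failure probability $O(1/q)$, which costs an extra $\log q$ factor.
\item The subintervals created in round two are never re-checked.
\end{itemize}

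The paper never estimates $z$. \Cref{kappa-thm} needs only the pointwise guarantee $\E[W(B,x)]\le\theta$ for each $x$; linearity of expectation and Markov's inequality then control the global error. \texttt{PseudoTPA} meets this guarantee in two steps.
\begin{itemize}
\item It thins the sample-free ${\tt StaticSchedule}(1/4)$, keeping $\beta_i$ with probability $1-\ee^{-(X_{i,1}+X_{i,2})(\beta_{i+1}-\beta_i)}$. By \Cref{expprop} this probability is exactly $1-\ee^{-2z(\beta_i,\beta_{i+1})}$. The survivors therefore number $O(q)$ in expectation, with width tails $\ee^{-2s}$.
\item From each survivor it adds $\lceil 10/\theta\rceil$ TPA jumps $\beta_i-\eta/Y$. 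By \Cref{exponentialprop}, their $z$-increments are exact truncated exponentials.
\end{itemize}
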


\Cref{tab:sampling-bounds} compares prior work with our results in terms of sample complexity and parallelism in the black-box model.\footnote{Here $\tilde{O}(\cdot)$ suppresses $\polylog(q,h)$ factors. Note that the algorithm of \cite{Stefankovic:JACM09} is presented sequentially with depth $\tilde{O}(q)$, but can be batched to depth $\tilde{O}(q^{1 / 2})$.}
\begin{table}[H]
  \caption{Comparison of Sample Bounds and Parallelism}
  \label{tab:sampling-bounds}
  \centering
  \small
  \setlength{\tabcolsep}{4pt}
  \renewcommand{\arraystretch}{1.08}
  \begin{tabular}{ccc}
    \toprule
    Result & Samples & Rounds of sampling \\
    \midrule
    \cite{dyer_random_1989,Bezakova08} & $O(q^2 \eps^{-2} \log h)$ & $1$ \\
    \cite{Stefankovic:JACM09} & $O(q \eps^{-2}\polylog(q,h))$ & $\tilde{O}(\sqrt{q})$ \\
    \cite{Huber:Gibbs,Kolmogorov:COLT18,harrisParameterEstimationGibbs2024} & $O(q \eps^{-2}\log h)$ & $\tilde{O}(q)$ \\
    This paper & $O(q \eps^{-2} \log^2 h)$ & $1$ \\
    This paper & $O(q \eps^{-2} \log h)$ & $3$ \\
    \bottomrule
  \end{tabular}

\end{table}

We emphasize that all of these black-box algorithms, including the new algorithms we develop, use each sample $\omega$ from the Gibbs distribution only through its Hamiltonian value $H(\omega)$. 

As usual in estimation problems, the success probability can be augmented to any desired value $1-\delta$ by repeating $O(\log(1/\delta))$ times (in parallel) and taking the median. Also, by standard arguments, it suffices to use an approximate sampling  oracle; specifically, if the algorithm has sample complexity $N$, then we can use approximate samples within $o(1 / N)$ total variation distance of the Gibbs distribution.

\subsection{Applications to Parallel Counting}

\Cref{mainth1,mainth2} can be implemented in work-efficient $\RNC$ algorithms in a straightforward way.
By combining with $\RNC$ approximate sampling algorithms, we obtain $\RNC$ algorithms for approximate counting  for several fundamental classes of Gibbs distributions, including:
\begin{itemize}[leftmargin=*]
  \item anti-ferromagnetic $2$-spin systems (including the anti-ferro Ising model and the hard-core model) up to the uniqueness threshold, beyond which approximate counting and sampling become intractable even for sequential algorithms;
  \item the monomer-dimer model (matchings);
  \item the ferromagnetic Ising model with external fields.
\end{itemize}

The following theorems state our results on parallel approximate counting in slightly simplified form.
The formal definitions of the models are postponed to \Cref{section:applications}.
Throughout, we let $G=(V,E)$ be a connected graph with $n$ vertices, $m$ edges, and maximum degree $\Delta$.

\begin{theorem}
  \label{thm:anti-ferro}
  Consider the anti-ferromagnetic $2$-spin system on graph $G$ satisfying one of the following:
  \begin{itemize}[leftmargin=*, labelsep=0.5em, itemsep=0.2em]
    \item strictly anti-ferromagnetic and satisfying the uniqueness condition;
    \item satisfying the uniqueness condition, and $G$ is $\Delta$-regular.
  \end{itemize}
  Then there exists an algorithm that estimates the partition function within relative error $\eps$ with success probability at least $0.7$, running in $O(\log^2 (n/\eps))$ depth using $\tilde{O}(m^2\eps^{-2})$ processors.

  Moreover, for the hard-core model satisfying the uniqueness condition, there exists an algorithm that estimates the partition function within relative error $\eps$ with success probability at least $0.7$, running in $O(\log^2 (n/\eps))$ depth using $\tilde{O}(m n \eps^{-2})$ processors.
\end{theorem}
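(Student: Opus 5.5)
We plan to reduce \Cref{thm:anti-ferro} to \Cref{mainth2}, or to \Cref{mainth1} if one round of sampling is enough. For that we need an annealing family $\mu_\beta$ whose endpoint ratio $Q$ is the target partition function, together with a parallel sampler for every intermediate $\beta$. We realise $\mu_\beta$ by interpolating the edge interaction. Write the $2$-spin system with edge activities $(\beta_0,\gamma_0)$ (with $\beta_0\gamma_0<1$) and external field $\lambda$, and set $\Omega=\{0,1\}^V$. Let $H(\sigma)$ count the edges that are \emph{not} monochromatic in the favoured spin pair, so that $H$ counts ``frustrated'' edges after a normalisation making the $(+,-)$ weight $1$. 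Then
\[
Z(\beta)=\sum_\sigma F(\sigma)\,\ee^{\beta H(\sigma)},
\]
where $F$ collects the vertex weights and the remaining edge factor. At $\betamin=-\infty$, only configurations with $H=0$ survive, and their total weight can be computed explicitly. For example, in the hard-core model we take $H(I)=|I|$ and $\ee^{\beta}=\lambda$-interpolation, which starts from $Z(-\infty)=1$. With these choices we have $h\le m$ (or $h\le n$ for hard-core) and $q=O(n+m)=O(m)$, because $\log Z$ changes by at most $O(1)$ per edge or vertex. The interpolation has to be set up so that $\log Q=O(m)$; we use $G$ connected so that $n\le m+1$.

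The key requirement is that every intermediate model must also lie in the uniqueness region. This is exactly why the hypotheses are split. In the strictly anti-ferromagnetic case we interpolate along a path in the $(\beta_0,\gamma_0,\lambda)$ space. Uniqueness is monotone in the edge interaction strength: moving toward the trivial point $\beta_0\gamma_0\to 1$, or decreasing the field, keeps us inside the uniqueness region. We will verify this from the known characterisation of the uniqueness region for $\Delta$-bounded graphs (Li--Lu--Yin). In the $\Delta$-regular case we instead interpolate only the external field. On regular graphs, uniqueness at $\lambda$ implies uniqueness for all $\lambda'\le\lambda$, by the monotonicity of the tree recursion fixed point. For hard-core, the same holds because uniqueness is the downward-closed condition $\lambda<\lambda_c(\Delta)$.

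Given uniqueness, the model is spectrally independent with constants uniform along the path. By the known parallel samplers (parallel Glauber dynamics, or the Liu--Yin style $\RNC$ samplers for spectrally independent systems), we can draw an $\eta$-approximate sample in $O(\log n\cdot\log(n/\eta))$ depth using $\tilde O(m)$ processors. We now apply \Cref{mainth2}, which needs $N=O(q\eps^{-2}\log h)=\tilde O(m\eps^{-2})$ samples in three rounds. Taking $\eta=o(1/N)$ and running all samples of a round in parallel gives $\tilde O(m)\cdot\tilde O(m\eps^{-2})=\tilde O(m^2\eps^{-2})$ processors. Each round has depth $O(\log^2(n/\eps))$, and the final product and median computation adds only $O(\log N)$ depth. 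For hard-core, $q=O(n)$, so $N=\tilde O(n\eps^{-2})$ and the processor count is $\tilde O(mn\eps^{-2})$. The expected sample complexity in \Cref{mainth2} is converted into a worst-case bound by Markov's inequality, at the cost of a constant in the success probability, which we restore by amplification.

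The main obstacle is the second step: choosing the interpolation path so that it remains inside the uniqueness region, with uniform spectral-independence constants, while keeping $\log Q=O(m)$ and $h$ polynomial. We would first try interpolating the field $\lambda$ down to $0$, which suffices in the regular and hard-core cases. Only if field monotonicity of uniqueness fails for non-regular graphs would we also interpolate $\beta_0,\gamma_0$ toward $1$ in the strict case.
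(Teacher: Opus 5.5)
\textbf{The gap: the annealing path leaves the uniqueness regime.} Your main route anneals the field from $\hat\lambda=0$ up to $\lambda$. You justify this by saying that uniqueness at $\lambda$ implies uniqueness at every $\lambda'\le\lambda$, ``by monotonicity of the tree-recursion fixed point''. The fixed point $\hat x_d$ is indeed increasing in $\lambda$, but the quantity that defines uniqueness,
\[
|T_d'(\hat x_d)|=\frac{d(1-\gamma_1\gamma_2)\,\hat x_d}{(\gamma_1\hat x_d+1)(\hat x_d+\gamma_2)},
\]
is unimodal in $\hat x_d$ when $\gamma_1>0$. It tends to $0$ at both ends and peaks at $\hat x_d=\sqrt{\gamma_2/\gamma_1}$, with value $d(1-\sqrt{\gamma_1\gamma_2})/(1+\sqrt{\gamma_1\gamma_2})$. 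So whenever $\sqrt{\gamma_1\gamma_2}<\frac{\Delta-2+\delta}{\Delta-\delta}$, the regime is $(0,\lambda_1]\cup[\lambda_2,\infty)$, which is not downward closed. For example, for the antiferromagnetic Ising model one gets $\lambda_2=1/\lambda_1>1$.

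A target $\lambda\ge\lambda_2$ annealed toward $0$ must therefore cross the non-uniqueness window $(\lambda_1,\lambda_2)$, where you have no sampler. This occurs both in the strictly antiferromagnetic case and in the regular case with $\gamma_2>1$. The regime is downward closed only when $\gamma_1=0$ (hard-core type) or when it is the whole half-line. Even in the latter case, annealing from $0$ gives $q\approx n\ln(1+\lambda)$. That is not $O(m)$, because $\lambda$ is not treated as a constant.

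\textbf{Why the fallback does not help.} As set up in your first paragraph, interpolating the edge activities sends the interaction to $\betamin=-\infty$. That leaves only proper $2$-colourings, so $Z(-\infty)=0$ on non-bipartite $G$. It also moves toward the strongly antiferromagnetic end, away from uniqueness. The reverse direction, toward $\gamma_1\gamma_2=1$, has several problems:
\begin{itemize}
\item it is never written as a one-parameter family $e^{\beta H}$ with $H\in\{0\}\cup[1,\infty)$;
\item you do not check that it stays in uniqueness;
\item it breaks for $\gamma_1=0$;
\item it is not offered at all for the regular case with $\gamma_2>1$.
\end{itemize}

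\textbf{The missing idea: a spin flip for the upper interval.} Write $Z=\hat\lambda^n\tilde Z$ with $\tilde Z=\sum_\sigma\gamma_1^{m_+(\sigma)}\gamma_2^{m_-(\sigma)}(1/\hat\lambda)^{n-n_+(\sigma)}$. That is, take $\beta=\ln(1/\hat\lambda)$ and $H(\sigma)=n-n_+(\sigma)$. Then anneal from $\betamin=-\infty$ to $\betamax=\ln(1/\lambda)$. The endpoint $\betamin=-\infty$ corresponds to $\hat\lambda=\infty$, where only the all-plus configuration survives, so $\tilde Z(-\infty)=\gamma_1^m$. This is positive, since $\gamma_1>0$ whenever the upper interval is nonempty. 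Every intermediate $\hat\lambda\ge\lambda$ then stays inside $[\lambda_2,\infty)$. The parameters are
\[
q=m\ln\max\{1/\gamma_1,\gamma_2/\gamma_1\}+n\ln(1+1/\lambda)=O(m),\qquad h=n .
\]

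Your field annealing from $\hat\lambda=0$ is exactly the paper's treatment of the lower interval. If the whole half-line is unique, one splits at $\lambda=1$. Once the path is fixed this way, the rest of your accounting goes through: per-sample sampler cost times $\tilde O(m\eps^{-2})$ samples, and the per-round depth.
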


\begin{theorem}
  \label{thm:monomer-dimer}
  For the monomer-dimer model on graph $G$ with any constant edge weight $\lambda>0$, there exists an algorithm that estimates the partition function within relative error $\eps$ with success probability at least $0.7$,
  running in $O(\Delta^4 \log^3 n \cdot \log (n / \eps))$ depth using $\tilde{O}(m^2  \Delta \eps^{-2})$~processors.
\end{theorem}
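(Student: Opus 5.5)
The plan is to reduce \Cref{thm:monomer-dimer} to \Cref{mainth2} (or \Cref{mainth1}) combined with a parallel sampler for the monomer-dimer model. Let $\Omega$ be the set of matchings of $G$ and $H(M)=|M|$. The weighted partition function is $Z_G(\lambda)=\sum_M \lambda^{|M|}$, which equals $Z(\beta)$ with $\beta=\ln\lambda$ and $F\equiv 1$. We take $\betamin=-\infty$, where $Z(-\infty)=1$ because only the empty matching contributes, and $\betamax=\ln\lambda$. Then $h=n/2$ suffices, since $|M|\le n/2$. For $q$, we have $\ln Q\le \ln\bigl(|\Omega|\max(1,\lambda)^{n/2}\bigr)$, and the number of matchings is at most $\prod_v(\deg v+1)\le \ee^{O(m)}$, and also at most $m^{n/2}$. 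Since $\lambda$ is constant, this gives $q=O(\min(m, n\log n))=O(m)$. (If $\lambda<1$ and some $Z(\beta)$ values are small, the ratio is still bounded by $|\Omega|$.) By \Cref{mainth2}, $O(m\eps^{-2}\log n)$ samples from $\mu_\beta$, for $\beta\in(-\infty,\ln\lambda]$, chosen in three rounds, suffice.

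Next, each sample must be produced in parallel. At fugacity $\lambda'=\ee^\beta\le\lambda$, the monomer-dimer Gibbs distribution can be sampled approximately by a parallel Markov chain. I would use a parallel (Luby-style, or systematic-scan over a proper edge colouring of the line graph) implementation of Glauber dynamics. The spectral independence / entropy-factorization results for monomer-dimer give $O_\lambda(\Delta\cdot\mathrm{poly}\log)$-type mixing for Glauber dynamics at constant $\lambda$. The relevant parallel sampling results run in depth polylogarithmic in $n$ and polynomial in $\Delta$. The stated depth $O(\Delta^4\log^3 n\cdot\log(n/\eps))$ suggests the following accounting. A parallel chain takes $O(\Delta^{c}\log n\cdot\log(n/\eps'))$ rounds, where $\eps'=o(1/N)$ is the TV accuracy required by the remark after \Cref{mainth2}. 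Each round is an $\RNC$ step of depth $O(\log^2 n)$, for instance a maximal-independent-set or colouring computation on the line graph. Each chain uses $\tilde O(m\Delta)$ processors, and running $N=\tilde O(m\eps^{-2})$ chains simultaneously gives $\tilde O(m^2\Delta\eps^{-2})$ processors. The estimator of \Cref{mainth2} itself only computes averages and medians of Hamiltonian values, which adds $O(\log(n/\eps))$ depth. The three rounds contribute only a constant factor.

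The main obstacle is the sampler. One needs a parallel sampler whose mixing holds uniformly for all $\beta\le\ln\lambda$, including $\beta\to-\infty$, where the distribution concentrates on small matchings. I would handle this by noting that smaller fugacity only improves the spectral-independence and mixing bounds, and by clamping $\beta$ below some $-\Theta(\log n)$, where exact sampling is trivial. I also need to check that TV error $o(1/N)$ per sample costs only the $\log(n/\eps)$ factor in the depth. I would cite the appropriate parallel monomer-dimer sampling theorem for the $\Delta^4\log^3 n$ dependence rather than rederive it.
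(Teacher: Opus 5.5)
\paragraph{Verdict: there is a genuine gap, in the sampler.} Your reduction to \Cref{mainth2} is the same as the paper's. You take $H(M)=|M|$, $\betamin=-\infty$ with $Z(-\infty)=1$, $\betamax=\ln\lambda$, $q=O_\lambda(m)$ and $h=O(n)$. This gives $N=O(m\eps^{-2}\log n)$ samples over three rounds. You then run $N$ parallel copies of a sampler with per-sample total variation error $\eps'=o(1/N)$, so $\log(1/\eps')=O(\log(n/\eps))$. That part is fine.

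\paragraph{What is missing.} The sampler is the only substantive content of this theorem, and you do not supply it. There is no off-the-shelf ``parallel monomer-dimer sampling theorem'' achieving $O(\Delta^4\log^3 n\cdot\log(1/\eps))$ depth with $\tilde O(m\Delta)$ processors; the paper has to prove this (\Cref{lem:monomer-dimer}). The mechanisms you propose do not get there as written:
\begin{itemize}
\item Luby-style Glauber and a chromatic systematic scan over an edge colouring are \emph{different} Markov chains from random-scan Glauber dynamics. A mixing bound for Glauber, from spectral independence or entropy factorization, does not transfer to them without a separate argument, and you give none.
\item Your depth accounting is reverse-engineered from the target rather than derived. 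You posit $O(\Delta^c\log n\cdot\log(n/\eps'))$ rounds, set $c=4$, and charge $O(\log^2 n)$ depth per round for an MIS computation.
\item The $\tilde O(m\Delta)$ processors per chain is likewise asserted without justification.
\end{itemize}

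\paragraph{The missing idea.} Simulate the \emph{sequential} single-site chain faithfully in parallel, so that its sequential mixing time applies verbatim. The paper does this in three steps.
\begin{itemize}
\item It takes $T=O_\lambda(\Delta^3 m\log^2 n\cdot\log(1/\eps))$ Glauber steps, from \Cref{theorem:mixing-time-monomer-dimer}.
\item It applies the parallel simulation theorem for single-site dynamics (\Cref{theorem:parallel-single-site}) on the line graph of $G$, which has $m$ vertices and $O(m\Delta)$ edges. This simulates $T$ steps in depth $O(\rho(T/m+\log m)\log\Delta)$ with $\tilde O(m\Delta)$ processors, whenever the Dobrushin influence matrix satisfies $\|R\|_1\le\rho$.
\item That theorem needs no contraction ($\rho\ge 1$ is allowed), so a trivial bound suffices. $R_{f,e}\neq 0$ only when $e,f$ share an endpoint, so each column has at most $2(\Delta-1)$ nonzero entries, each in $[0,1]$. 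Hence $\|R\|_1\le 2\Delta$.
\end{itemize}
The depth is then $2\Delta\cdot(T/m)\cdot\log\Delta=O_\lambda(\Delta^4\log^3 n\cdot\log(1/\eps'))$. The counting bounds follow exactly as in your reduction: $N$ copies give $\tilde O(m^2\Delta\eps^{-2})$ processors and depth $O(\Delta^4\log^3 n\cdot\log(n/\eps))$.

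\paragraph{On uniformity in $\hat\lambda$.} Your concern about uniformity over $\hat\lambda\in(0,\lambda]$ is fair; the paper leaves it implicit in the $O_\lambda$ constants. Clamping to the empty matching for tiny $\hat\lambda$ is a reasonable way to handle it, but it does not substitute for the missing sampler.
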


\begin{theorem}
  \label{thm:ferro-ising}
  For the ferromagnetic Ising model on graph $G$ with nonzero external fields,
  there exists an algorithm that estimates the partition function
  within relative error $\eps$ with success probability at least $0.7$,
  running in $\polylog(n/\eps)$ depth using $\tilde{O}(nm^2 \eps^{-2})$ processors.
\end{theorem}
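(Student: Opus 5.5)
The plan is to combine \Cref{mainth1} or \Cref{mainth2} with known parallel samplers for the ferromagnetic Ising model with nonzero external fields, via a suitable annealing parametrization. First I would pick the interpolation. Write the Ising partition function as $Z=\sum_{\sigma}\prod_{uv\in E}\beta_e^{\mathbf 1[\sigma_u=\sigma_v]}\prod_v\lambda_v^{\mathbf 1[\sigma_v=+]}$. By symmetry we may assume every $\lambda_v<1$, flipping spins at vertices where $\lambda_v>1$; this is legitimate after passing to the random-cluster / subgraph-world representation, where consistent field directions are not required.

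The cleanest route uses the Jerrum--Sinclair subgraph-world (even subgraph / high-temperature) expansion. There $Z$ equals, up to an explicit factor, a weighted sum over edge subsets $A\subseteq E$ with weight $\prod_{e\in A}\tanh(J_e)\cdot\prod_{v:\deg_A(v)\text{ odd}}\tanh(h_v)$. We anneal on a uniform scaling of the edge activities: introduce $\beta$ with edge weight $t_e\ee^{\beta}$ and Hamiltonian $H(A)=|A|$. Take $\betamax=0$ and $\betamin=-\infty$, so that $Z(-\infty)$ is the empty-subgraph weight, which is explicit. Then $h=m$, and $q=O(m)$ because the total weight is at most $2^m$ times bounded factors. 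With nonzero fields, the Jerrum--Sinclair chain mixes in polynomial time at every intermediate $\beta$, since decreasing activities keeps the system inside the regime covered by the analysis.

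The parallel step is to replace the sequential chain with an $\RNC$ sampler, for instance a parallel Glauber or field-dynamics / locally-stationary parallel Metropolis algorithm known to run in $\polylog(n/\eps)$ depth for ferromagnetic Ising with nonzero fields. Applying \Cref{mainth1} gives $O(q\eps^{-2}\log^2 h)=\tilde O(m\eps^{-2})$ samples, all drawn in parallel, each costing $\tilde O(nm)$ processors. That matches the claimed $\tilde O(nm^2\eps^{-2})$, and the depth is that of one sampler plus $O(\log)$ for aggregating estimators. Approximate sampling within TV distance $o(1/N)$ only costs a $\log(N)$ factor in the sampler's depth.

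The main obstacle is the existence of an $\RNC$ sampler uniformly over the whole annealing range, with depth $\polylog(n/\eps)$ and work $\tilde O(nm)$ per sample. I would first try to cite a parallel sampler for ferromagnetic Ising via the random-cluster/subgraph representation, such as a parallel Swendsen--Wang-type dynamics or parallel Glauber dynamics under a spectral-independence bound that holds with nonzero fields. If uniform bounds fail near $\beta\to-\infty$, the trivial product-like regime there is handled directly by exact sampling.
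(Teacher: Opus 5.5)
There is a genuine gap, and it sits exactly at the step you flag as the main obstacle. It comes from your choice to anneal the edge activities in the subgraph-world expansion.

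The framework of \Cref{mainth1,mainth2} needs samples of $H$ drawn from the very family being annealed. With $H(A)=|A|$, you therefore need an $\RNC$ sampler for the subgraph-world measure at every $\beta\in(-\infty,0]$, and none is supplied.
\begin{itemize}
\item The Jerrum--Sinclair chain is a sequential chain, and you give no parallel implementation of it.
\item The parallel samplers you name (parallel Glauber, field dynamics, Swendsen--Wang type) output Ising spins or random-cluster configurations. You give no way to turn these into a subgraph-world sample $A$.
\item The $\tilde O(nm)$ processors per sample, on which your total $\tilde O(m\eps^{-2})\cdot\tilde O(nm)$ rests, is likewise unsupported.
\end{itemize}

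Your path also breaks the hypotheses of the sampler that is actually available. Since $\tanh J_e(\beta)=\ee^{\beta}\tanh J_e$, the couplings $\gamma_e(\beta)=\ee^{2J_e(\beta)}$ decrease to $1$ as $\beta\to-\infty$. So the edge slack $\gamma_e\ge 1+\delta$ that \Cref{lem:ferro-ising-sampler} needs (its cost grows with $1/p_{\min}$) fails on a whole range of $\beta$, not only in a product-like limit that exact sampling could absorb.

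Even granting a sampler with $\tilde O(m^2)$ processors, your $q=\Theta(m)$ gives $\tilde O(m)$ samples, hence $\tilde O(m^3\eps^{-2})$ processors rather than $\tilde O(nm^2\eps^{-2})$.

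Separately, flipping spins at individual vertices is not a symmetry. It makes the incident edges antiferromagnetic, and with mixed field signs the factors $\tanh h_v$ in your expansion become negative. This step is unnecessary anyway, since the model already has $\lambda\in(0,1-\delta]^V$.

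The missing idea is to anneal the external field rather than the couplings. The paper's setup is:
\begin{itemize}
\item Stay in the spin representation with $F(\sigma)=\prod_{e\text{ monochromatic}}\gamma_e$ and $H(\sigma)=\eta\sum_{v:\sigma_v=+1}\ln(1/\lambda_v)$, where $\eta=-1/\ln(1-\delta)$ ensures $H\in\{0\}\cup[1,\infty)$.
\item Take $\betamin=-\infty$, where only the all-minus state survives and $Z(-\infty)=\prod_e\gamma_e$, and take $\betamax=-1/\eta$.
\item For every $\beta\le\betamax$ the effective fields are $\lambda_v^{-\beta\eta}\le\lambda_v\le 1-\delta$, while the couplings are untouched. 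So a single sampler with fixed slack covers the whole schedule: \Cref{lem:ferro-ising-sampler}, which runs field dynamics on the random-cluster model with each step simulated by parallel Glauber, in $\polylog(n/\eps)$ depth with $\tilde O(m^2)$ processors.
\item The parameters are $q=n$ (since $Q\le\prod_v(1+\lambda_v)$ because $\gamma_e\ge 1$) and $h=\max\{2,n\eta/\ee\}$. The bound on $h$ follows from the marginal bound $\mathbb{P}[\sigma_v=+1]\le\lambda_v$ of \Cref{marginal-lemma} together with $x\ln(1/x)\le 1/\ee$.
\end{itemize}
\Cref{mainth2} then uses $\tilde O(n\eps^{-2})$ samples at $\tilde O(m^2)$ processors each, which is exactly the claimed $\tilde O(nm^2\eps^{-2})$.
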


Note that the uniqueness condition in \Cref{thm:anti-ferro} matches that in~\cite{chenOptimalMixingTwostate2022};
beyond this regime, approximate counting becomes intractable~\cite{sly2012computational}.

\subsection{Roadmap}
In \Cref{section:PPE}, we recall the paired product estimator and prove the fully non-adaptive result by building a deterministic static schedule whose small max-width implies small curvature.
The three-round algorithm is developed in \Cref{sec:tpa}: there we introduce \texttt{PseudoTPA}, a parallel analogue of the classical TPA schedule-generation process, and show that it provides the width bounds needed to match the best sequential sample complexity.
Finally, \Cref{section:applications} instantiates the black-box reductions for anti-ferromagnetic two-spin systems, the monomer-dimer model, and the ferromagnetic Ising model; the appendices supply the corresponding parallel samplers used in these applications.

In \Cref{fact-appendix}, we collect a few known facts about the partition ratio function $Z(\beta)$. This analysis is standard and has appeared in previous works, e.g. \cite{Huber:Gibbs}; however, since these use different notations and may have slightly different parameters, we include proofs for completeness.

Throughout, we define $z(\beta) = \log Z(\beta)$ and $z(\beta_1, \beta_2) = z(\beta_2) - z(\beta_1)$.

\section{The Paired Product Estimator and the Non-adaptive Schedule}
\label{section:PPE}
In this section, we recap the Paired Product Estimator (PPE) and prove Theorem~\ref{mainth1} by constructing a non-adaptive schedule with low curvature.
The PPE algorithm, introduced in \cite{Huber:Gibbs}, is an estimator for the ratio $Z(\betamax) / Z(\betamin)$.
It is based on a data structure called the \emph{cooling schedule} (we call it just a \emph{schedule} for brevity). Formally, a schedule is a set of values $B = \{ \beta_0 = \betamin, \beta_1, \beta_2, \dots, \beta_t = \betamax \}$ with $\beta_0 < \beta_1 < \dots < \beta_t$. The \emph{length} of the schedule is $\text{len}(B) = t+1$, the \emph{max-width} is $$
\maxwidth(B) = \max_{i=0, \dots, t-1} z(\beta_{i},\beta_{i+1}),
$$
and the \emph{curvature} is defined as
$$
\kappa(B) = \sum_{i=0}^{t-1} \kappa(\beta_i, \beta_{i+1}) \qquad \text{where $\kappa(\alpha_1, \alpha_2) := z(\alpha_1) - 2 z (\frac{\alpha_1 + \alpha_2}{2}) + z(\alpha_2)$}.
$$

Given a schedule $B$, the PPE algorithm shown in \Cref{alg:PPE-wrapper} takes as input a parameter $k$, which dictates the number of samples to use. The algorithm will draw $2 k$ non-adaptive samples from each value $\beta_0, \dots, \beta_t$ (slightly fewer for the endpoints).\footnote{If $\betamin = -\infty$, this algorithm is modified slightly: we set $U_0 = 1$ and $V_0 = \tfrac{1}{k} \sum_{j=1}^k \mathbf 1_{Y_{0,j} = 0}$. We adopt the convention that $\kappa(-\infty, \alpha) = z(-\infty, \alpha)$.  The analysis is very similar in this case; we omit it for brevity.}

\begin{algorithm}
  \For{$i=0,\ldots,t-1$} {
    Draw $k$ samples $X_{i,1}, \dots, X_{i,k}$ from $\mu_{\beta_{i}}$; \\
    Draw $k$ samples $Y_{i,1}, \dots, Y_{i,k}$ from $\mu_{\beta_{i+1}}$; \\
    Set $U_i = \frac{1}{k} \sum_{j=1}^{k} \exp \bigl( \frac{\beta_{i+1}-\beta_{i}}{2} \cdot X_{i,j} \bigr)$ \\
    Set $V_i = \frac{1}{k}\sum_{j=1}^{k}\exp \bigl( -\frac{\beta_{i+1}-\beta_{i}}{2} \cdot Y_{i,j} \bigr)$ \\
  }
  \textbf{return} estimate $\hat Q = \frac{U_0 U_1 \dots U_{t-1}}{V_0 V_1 \dots V_{t-1}}$.
  \caption{\label{alg:PPE-wrapper} Algorithm ${\tt PPE}(B,k)$ for schedule $B$ \\ \texttt{Output:} an estimate of the ratio $Q = Z(\betamax) / Z(\betamin)$}
\end{algorithm}

\begin{theorem}[\cite{Huber:Gibbs}]
  If $\eps \in (0, 1 / 2)$ and $k \geq 100 (\ee^{\kappa(B)} - 1) / \eps^2$, then the estimate $\hat Q$ produced by ${\tt PPE}(B, k)$ is within relative error $\eps$ of $Q$ with probability at least $0.8$.
\end{theorem}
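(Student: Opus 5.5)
The plan is a second-moment argument on the numerator and denominator products separately, followed by Chebyshev. For each $i$, set $\delta_i = \beta_{i+1}-\beta_i$ and $m_i = \frac{\beta_i+\beta_{i+1}}{2}$.

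\emph{Means.} Directly from the definition of $\mu_\beta$,
\[
\E\bigl[e^{\delta_i X/2}\bigr] = \frac{Z(m_i)}{Z(\beta_i)} \quad\text{for } X\sim\mu_{\beta_i},
\qquad
\E\bigl[e^{-\delta_i Y/2}\bigr] = \frac{Z(m_i)}{Z(\beta_{i+1})} \quad\text{for } Y\sim\mu_{\beta_{i+1}}.
\]
Hence $\E[U_i] = Z(m_i)/Z(\beta_i)$ and $\E[V_i]=Z(m_i)/Z(\beta_{i+1})$. The ratio of the two products of means then telescopes to $Z(\betamax)/Z(\betamin)=Q$.

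\emph{Second moments.} We have
\[
\E\bigl[e^{\delta_i X}\bigr] = \frac{Z(\beta_{i+1})}{Z(\beta_i)}, \qquad\text{so}\qquad \frac{\E[e^{\delta_i X}]}{\E[e^{\delta_i X/2}]^2} = \frac{Z(\beta_i)Z(\beta_{i+1})}{Z(m_i)^2} = e^{\kappa(\beta_i,\beta_{i+1})}.
\]
The same relative second moment $e^{\kappa_i}$ holds for the $Y$ term. Since each $U_i$ averages $k$ i.i.d.\ copies,
\[
\frac{\E[U_i^2]}{\E[U_i]^2} = 1 + \frac{e^{\kappa_i}-1}{k} \le \exp\Bigl(\frac{e^{\kappa_i}-1}{k}\Bigr).
\]

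\emph{Concentration of the products.} Let $P_U=\prod_i U_i$. The $U_i$ are independent, so
\[
\frac{\E[P_U^2]}{\E[P_U]^2} \le \exp\Bigl(\sum_i \frac{e^{\kappa_i}-1}{k}\Bigr).
\]
By superadditivity of $x\mapsto e^x-1$ on $x\ge0$ (which uses $\kappa_i\ge 0$, true by convexity of $z$), we get $\sum_i(e^{\kappa_i}-1)\le e^{\kappa(B)}-1$. With the hypothesis on $k$, the relative variance of $P_U$ is at most $e^{\eps^2/100}-1\le 1.01\,\eps^2/100$. Chebyshev then gives $|P_U/\E P_U - 1|\le \eps/3$ except with probability about $9\cdot 1.01/100 < 0.1$. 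The same bound holds for $P_V$.

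\emph{Combining.} By a union bound, with probability at least $0.8$ both products lie within factor $1\pm\eps/3$ of their means. The ratio then lies in $[(1-\eps/3)/(1+\eps/3),\,(1+\eps/3)/(1-\eps/3)]$, which is contained in $[1-\eps,1+\eps]$ for $\eps<1/2$.

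The only delicate point is keeping the constants tight. This means using the sharper per-factor bound $1+(e^{\kappa_i}-1)/k$ rather than a cruder estimate, together with the superadditivity step, and checking that the resulting Chebyshev constants indeed fit within $0.8$. Otherwise the argument is routine.
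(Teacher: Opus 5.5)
Your proposal is correct and follows the paper's proof. Both compute the means and relative variances $(\ee^{\kappa_i}-1)/k$ of $U_i,V_i$, multiply over independent factors, bound the result in terms of $\kappa(B)$, and finish with Chebyshev at deviation $\eps/3$ plus a union bound. The only difference is cosmetic: you bound $\prod_i\bigl(1+(\ee^{\kappa_i}-1)/k\bigr)$ via $1+x\le \ee^x$ and superadditivity of $\ee^x-1$, which costs a harmless factor $1.01$. The paper instead uses directly that $\prod_i(1+a_i/k)\le 1+\bigl(\prod_i(1+a_i)-1\bigr)/k$ for $k\ge 1$ and $a_i\ge 0$.
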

\begin{proof}
  For $i = 0, \dots, t-1$, define $m_i = \tfrac{1}{2} (\beta_i+\beta_{i+1})$ and $\kappa_i = \kappa(\beta_i,\beta_{i+1})$,
  and, for each index $j$,
  $$
  U_{i,j} = \exp\!\left( \frac{\beta_{i+1}-\beta_i}{2} \cdot X_{i,j}\right),
  \quad
  V_{i,j} = \exp\!\left( -\frac{\beta_{i+1}-\beta_i}{2} \cdot Y_{i,j}\right).
  $$
  From \Cref{expprop} and some calculations (see \cite{Huber:Gibbs}), we have:
  $$
  \E[ U_i ] = \E[ U_{i,j} ] = \frac{Z(m_i)}{Z(\beta_{i})}, \quad \E[ V_i ] = \E[ V_{i,j}  ] = \frac{Z(m_i)}{Z(\beta_{i+1})},
  $$
  $$
  \frac{\Var[ U_{i,j} ]}{\E[ U_{i,j} ]^2} =  \frac{ \Var[V_{i,j}]}{\E[V_{i,j}]^2} =  \ee^{\kappa_i} - 1, \quad
  \frac{\Var[ U_i ]}{\E[ U_i ]^2} = \frac{ \Var[V_i]}{\E[V_i]^2} =  (\ee^{\kappa_i} - 1) / k.
  $$

  Let $U = U_0 \cdots U_{t-1}$ and $V = V_0 \cdots V_{t-1}$. Since the values $U_i, V_i$ are all independent, we have
  $$
  \E[ U ] = \prod_{i=0}^{t-1} \frac{Z(m_i)}{Z(\beta_{i})}, \quad \E[ V  ] = \prod_{i=0}^{t-1} \frac{Z(m_i)}{Z(\beta_{i+1})}.$$

  In particular, by telescoping products, we have:
  \begin{align*}
    \frac{\E[U]}{\E[V]} &= \prod_{i=0}^{t-1} \Bigl(  \frac{Z(m_i)}{Z(\beta_{i})} \cdot \frac{Z(\beta_{i+1})}{Z(m_i)} \Bigr) = \prod_{i=0}^{t-1} \frac{Z(\beta_{i+1})}{Z(\beta_{i})}
    = \frac{Z(\beta_t)}{Z(\beta_0)} = \frac{Z(\betamax)}{Z(\betamin)} = Q.
  \end{align*}

  Also, by the formula for products of relative variances,
  \begin{equation}
    \label{bvv1}
    \frac{\Var[ U ]}{\E[U]^2} = \frac{ \Var[V] }{\E[V]^2} =  -1 + \prod_{i=0}^{t-1} (1 + (\ee^{\kappa_i} -1)/k ).
  \end{equation}

  By the fact that $\prod_i(1+(\ee^{\kappa_i}-1)/k) \leq 1+(\ee^{\sum \kappa_i}-1)/k$
  , and since $\sum_{i=0}^{t-1} \kappa_i = \kappa(B)$, the RHS of \Cref{bvv1} is at most $(\ee^{\kappa(B)} -1)/k$. By specification of $k$, this is at most $\eps^2 / 100$. So by Chebyshev's inequality, the bounds $|U / \E[U] - 1| \leq \eps / 3$ and $|V / \E[V] - 1| \leq \eps / 3$ each hold with probability at least $0.9$. When these both hold, the estimate $\hat Q$ satisfies $(1 - \eps) Q \leq \hat Q \leq (1 + \eps) Q$. By the union bound, this occurs with probability at least $0.8$.
\end{proof}

\subsection{Finding a Schedule with Small Curvature}
To use the PPE algorithm, we need an appropriate schedule.  For this, we introduce some definitions.

  For $x \in (\betamin, \betamax)$ and a schedule $B = (\beta_0, \dots, \beta_t)$, there is a unique index $v \in \{0, \dots, t-1 \}$ with $x \in [\beta_v, \beta_{v+1})$. We define $B_-(x) = \beta_v$, $B_+(x) = \beta_{v+1}$ and $W(B,x) = z( B_-(x), B_+(x))$, the ``width'' of the interval containing $x$.

The following result is implicit in \cite{harrisParameterEstimationGibbs2024,Kolmogorov:COLT18} to bound the curvature $\kappa$ for a randomly-generated schedule.
\begin{theorem}
  \label{kappa-thm}
  Let $\theta \in (0,1]$. Suppose we have a random procedure to generate a schedule $B$ such that $\E[W(B, x)] \leq \theta$ for all $x \in (\betamin, \betamax)$.
  Then $\E[ \kappa(B) ] \leq 5 \theta \log( h/\theta)$.
\end{theorem}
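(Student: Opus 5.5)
The plan is to write the curvature of each interval as an integral of $z''$ against a tent kernel. Then $\kappa(B)$ becomes a single integral over $(\betamin,\betamax)$ whose integrand is controlled pointwise by $W(B,x)$, and taking expectations turns the hypothesis $\E[W(B,x)]\le\theta$ into the bound. Concretely, for $a<b$ with midpoint $m$, Taylor's formula with integral remainder gives
\[
\kappa(a,b)=z(a)-2z(m)+z(b)=\int_a^b z''(x)\,\min(x-a,\,b-x)\,\d{x}.
\]
Summing over the intervals of $B$ yields
\[
\kappa(B)=\int_{\betamin}^{\betamax} z''(x)\,g_B(x)\,\d{x}, \qquad g_B(x)=\min\bigl(x-B_-(x),\,B_+(x)-x\bigr).
\]
From \Cref{fact-appendix} I would use the standard facts that $z'(\beta)=\E_{\mu_\beta}[X]$ is nondecreasing, that $z''(\beta)=\Var_{\mu_\beta}[X]\ge 0$, and that $z'(\betamax)\le h$.

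The key pointwise estimate comes from convexity of $z$:
\[
W(B,x)=z(B_-(x),B_+(x))\ \ge\ z(x,B_+(x))\ \ge\ (B_+(x)-x)\,z'(x)\ \ge\ g_B(x)\,z'(x).
\]
Hence $g_B(x)\le W(B,x)/z'(x)$. A second, trivial bound is that $\kappa(a,b)\le z(a,b)$, since $z(m)\ge z(a)$. I would aim to use the first bound wherever $z'$ is not too small. Let $\beta^*$ be the point where $z'(\beta^*)=\theta$, or $\betamin$ if no such point exists. On $[\beta^*,\betamax]$, taking expectations inside the integral via Fubini gives
\[
\E\!\int_{\beta^*}^{\betamax} z''\,g_B\ \le\ \theta\int_{\beta^*}^{\betamax}\frac{z''(x)}{z'(x)}\,\d{x}\ =\ \theta\log\frac{z'(\betamax)}{z'(\beta^*)}\ \le\ \theta\log(h/\theta).
\]
This is exactly the main term of the claimed bound.

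The main obstacle is the low region $\{x: z'(x)<\theta\}$ together with the single interval straddling $\beta^*$. There $\int z''/z'$ can blow up, since $z'$ may tend to $0$ as $\beta\to-\infty$. What I would try first is the following. On this region $z$ increases by at most $\theta\cdot(\beta^*-x)$, and the variance is small: because $X\in\{0\}\cup[1,\infty)$, we have $z''\le\E[X^2]$, and mass at values $\ge1$ is at most $z'(x)<\theta$. I would therefore bound $z''(x)g_B(x)$ there using $g_B(x)\le B_+(x)-x$ together with $\int z''(x)(B_+(x)-x)\,\d{x}\le z(x_0,B_+)-\ldots$. More simply, I would integrate by parts so that $\int_{\text{low}} z''\,g_B\lesssim \E[W(B,x)]$ summed over $O(1)$ intervals touching the low region plus an $O(\theta)$ term. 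The aim is to show that the contribution of intervals lying entirely in the low region, plus the straddling interval, totals $O(\theta\log(h/\theta))$ in expectation, for instance by comparing each such interval's curvature to $z(a,b)\cdot\min(1,z(a,b))$ and using $\E[W]\le\theta\le 1$. The constants are then absorbed into the factor $5$ (using $h/\theta\ge2$).
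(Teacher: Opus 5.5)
Your treatment of the region $\{z'\ge\theta\}$ is correct and a nice variant of the paper's. The tent-kernel identity together with $g_B(x)\le W(B,x)/z'(x)$ gives $\E\int_{\beta^*}^{\betamax}z''g_B\le\theta\log\bigl(z'(\betamax)/z'(\beta^*)\bigr)$ directly. This is the same integral the paper gets from the logarithmic bound in \Cref{kappaa1}, and you get it without splitting off the straddling interval.

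The genuine gap is the low region $\{z'<\theta\}$. You flag it as the main obstacle but do not resolve it, and each tool you suggest there fails:
\begin{itemize}
\item No pointwise use of $W/z'$ can work there, because $\int_{\text{low}}z''/z'=\log\bigl(z'(\beta^*)/z'(\betamin)\bigr)$ is unbounded (e.g.\ $z'\to0$ as $\beta\to-\infty$).
\item $z''$ is not small there. Under the support assumption $\E[X^2]\ge\E[X]$, and a mass $p$ on a large value $M$ gives $z''\approx Mz'$.
\item The bound $z(x,\beta^*)\le\theta(\beta^*-x)$ is useless when $\beta^*-\betamin$ is large or infinite.
\item The low region can contain arbitrarily many intervals of $B$, not $O(1)$.
\item The inequality $\kappa(a,b)\le z(a,b)\min\{1,z(a,b)\}$ is false. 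Take $Z(\beta)=1+\ee^{\beta}$, fix $b$ with $\ee^{b}=w$ small, and send the midpoint to $-\infty$. Then $z(a,\tfrac{a+b}{2})\to0$, while $\kappa(a,b)$ and $z(a,b)$ both tend to $\log(1+w)$, which is much larger than $\log^2(1+w)$.
\end{itemize}

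The missing idea is that in the low region you should not compare curvature with $W$ at all; instead, bound the total increase of $z$ there.
\begin{itemize}
\item Use the trivial bound $\kappa(a,b)\le z(a,b)$ on every interval of $B$ meeting $(\betamin,\beta^*)$. These telescope to at most $z(\betamin,\beta^*)+W(B,\beta^*)$, whose expectation is at most $z(\betamin,\beta^*)+\theta$.
\item The support assumption enters through \Cref{zprimprop1}. By Markov, $\mu_{\beta}(0)\ge 1-z'(\beta)$, so
\[
z(\betamin,\beta^*)\le z(-\infty,\beta^*)=-\log\mu_{\beta^*}(0)\le-\log\bigl(1-z'(\beta^*)\bigr)\le 2z'(\beta^*).
\]
The last step is valid only when $z'(\beta^*)\le 1/2$.
\item Hence the threshold must be $z'(\beta^*)=\theta/2$ rather than $\theta$; for $\theta$ near $1$, $-\log(1-\theta)$ blows up.
\item Combining gives $\E[\kappa(B)]\le\theta\bigl(2+\log(2h/\theta)\bigr)\le 5\theta\log(h/\theta)$, using $h/\theta\ge 2$. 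This is exactly the paper's argument.
\item Treat the case $z'(\betamax)\le\theta/2$ separately: there $\kappa(B)\le z(\betamin,\betamax)\le\theta$ directly.
\end{itemize}
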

\begin{proof}
  First, suppose that $z'(\alpha) = \theta/2$ for $\alpha \in (\betamin, \betamax)$. 
  Let $\beta_v = B_-(\alpha)$. By \Cref{kappaa1}, we calculate
  \begin{align*}
    \kappa(B) &= \sum_{i=0}^v \kappa(\beta_i, \beta_{i+1})  + \sum_{i=v+1}^{t-1} \kappa(\beta_{i}, \beta_{i+1} ) 
    \leq \sum_{i=0}^v z(\beta_i, \beta_{i+1})  + \sum_{i=v+1}^{t-1} z(\beta_{i}, \beta_{i+1} ) \log  \frac{z'(\beta_{i+1})}{z'(\beta_i)}.
  \end{align*}

 The first sum here telescopes to $z(\betamin, \beta_{v+1}) \leq z(\betamin, \alpha) + W(B,\alpha)$. The second sum can be expanded as:
  \begin{align*}
    \sum_{i=v+1}^{t-1} z(\beta_{i}, \beta_{i+1} ) \log \frac{z'(\beta_{i+1})}{z'(\beta_i)}   
    &= \sum_{i=v+1}^{t-1}  \int_{y=\beta_i}^{\beta_{i+1}} W(B, y) \frac{z''(y)  \ \mathrm{d}y}{z'(y)} \leq \int_{y=\alpha}^{\betamax} W(B, y) \frac{z''(y)  \ \mathrm{d}y}{z'(y)}.
  \end{align*}

  Using linearity of expectations along with our hypothesis on $\E[W(B,x)]$ for each $x$, this implies:
  \begin{align*}
    \E[ \kappa(B) ] &\leq z(\betamin, \alpha) + \E[ W(B, \alpha) ] + \int_{y=\alpha}^{\betamax} \E[W(B, y)] \frac{z''(y)  \ \mathrm{d}y}{z'(y)} \\
    &\leq z(\betamin, \alpha) + \theta + \theta \int_{y=\alpha}^{\betamax} \frac{z''(y) \ \mathrm{d}y}{z'(y)} 
    = z(\betamin, \alpha) + \theta \left[ 1 + \log \frac{z'(\betamax)}{z'(\alpha)} \right].
  \end{align*}

  By \Cref{zprimprop1}, we have $z(\betamin, \alpha) \leq 2 z'(\alpha) = \theta$. So overall, $\E[ \kappa(B) ] \leq \theta \bigl( 2 + \log(z'(\betamax)/z'(\alpha)) \bigr)$. Here $z'(\betamax) \leq h$ and  $2 + \log(h/(\theta/2)) \leq  5 \log (h/\theta)$.

  If $z'(\betamin) \ge \theta / 2$, we have $$\kappa(B) \le \int_{y = \betamin}^{\betamax} W(B, y) \frac{z''(y)  \ \mathrm{d}y}{z'(y)},$$
  and the above expectation argument again  gives the desired bound. Finally, if $z'(\betamax) \le \theta / 2$, then $\kappa(B) \le z(\betamin, \betamax) \le 2 z'(\betamax) \le \theta$.
\end{proof}

\begin{corollary}
  \label{kappa-cor}
  For any schedule $B$ with $\maxwidth(B) \leq 1$, it holds that
  $$
  \kappa(B) \leq 5 \maxwidth(B) \log( h/\maxwidth(B)).
  $$
\end{corollary}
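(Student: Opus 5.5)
The plan is to apply \Cref{kappa-thm} to the degenerate random procedure that always outputs the fixed schedule $B$, with $\theta := \maxwidth(B)$. The hypothesis $\maxwidth(B) \leq 1$ is exactly what makes this $\theta$ admissible. The theorem requires $\theta \in (0,1]$, so the degenerate case $\theta = 0$ has to be handled separately.

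For the main case, fix $x \in (\betamin, \betamax)$. By definition, $W(B,x) = z(B_-(x), B_+(x))$ is the width $z(\beta_v, \beta_{v+1})$ of one of the consecutive intervals of $B$. Hence $W(B,x) \leq \maxwidth(B) = \theta$ deterministically, and therefore $\E[W(B,x)] = W(B,x) \leq \theta$. \Cref{kappa-thm} then gives $\E[\kappa(B)] \leq 5\theta \log(h/\theta)$. Since the schedule is deterministic, $\E[\kappa(B)] = \kappa(B)$, and substituting $\theta = \maxwidth(B)$ yields exactly the claimed bound.

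The only point that needs care is the boundary case $\maxwidth(B) = 0$, where $\theta \notin (0,1]$ and the right-hand side must be read as the limit $0$, since $\theta\log(h/\theta) \to 0$. In this case every interval has $z(\beta_i,\beta_{i+1}) = 0$. Because $z$ is nondecreasing ($z' = \E_{\mu_\beta}[X] \geq 0$ since $H \geq 0$), $z$ is constant on $[\betamin,\betamax]$ and hence also at each midpoint. Every $\kappa(\beta_i,\beta_{i+1})$ is then $0$, so $\kappa(B) = 0$. Alternatively, one can apply the theorem with any $\theta' \in (0,1]$ and let $\theta' \to 0$.

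I expect no real obstacle: the corollary is just the deterministic specialization of \Cref{kappa-thm}, and the zero-width convention is the only place where any argument is needed.
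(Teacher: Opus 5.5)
Your proposal is correct and matches the paper's proof: the paper likewise notes that $W(B,x) \leq \maxwidth(B)$ for every $x$ and applies \Cref{kappa-thm} to the fixed schedule with $\theta = \maxwidth(B)$. Your separate treatment of the degenerate case $\maxwidth(B) = 0$ is a small extra detail the paper leaves implicit. In that case $z$ is constant on $[\betamin,\betamax]$, so $\kappa(B)=0$.
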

\begin{proof}
  Clearly $W(B,x) \leq \maxwidth(B)$ for all $x$. So  \Cref{kappa-thm} holds with $\theta = \maxwidth(B)$.
\end{proof}

\subsection{The Non-adaptive Schedule}
\label{sec:nonadapt}

The original algorithm for generating a non-adaptive schedule appeared in \cite{Bezakova08}. Here, we present a simplified and generalized version of it. It will also be used later in our adaptive algorithm.

\begin{algorithm}[htbp]
  \DontPrintSemicolon
  Set $\beta_0 = \betamax$; \\
  \For {$i=0$ {\bf to} $+\infty$}
  {
    Set $s_i = h$ if $i = 0$, otherwise set $s_i = \min \{h, \frac{q}{\betamax - \beta_i} \}$; \\
    Set $\beta_{i+1} = \beta_i - \theta/s_i$; \\
    \If{$s_i < \theta/2$ or $\beta_{i+1} < \betamin$} {
      \textbf{return} $B = \{ \betamin, \beta_i, \beta_{i-1}, \dots, \beta_{1}, \beta_0 = \betamax \}$.
    }
  }
  \caption{${\tt StaticSchedule}(\theta)$ \\ {\tt Output:} a schedule over interval $[\betamin,\betamax]$ \label{alg:non1}}
\end{algorithm}

Note that by \Cref{derivprop}, each iteration $i$ of \Cref{alg:non1} satisfies  $z'(\beta_i) \leq s_i$. The following theorem shows that the resulting schedule has bounded length and max-width.

\begin{theorem}
  \label{non-adapt}
  The schedule $B = {\tt StaticSchedule}(\theta)$ satisfies  $\len(B) \leq O( \frac{q \log(h/\theta)}{\theta})$ and $\maxwidth(B) \leq \theta$.
\end{theorem}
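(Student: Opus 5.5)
We prove the two claims separately. Throughout we use the fact stated before the theorem (from \Cref{derivprop}) that $z'(\beta_i) \le s_i$ at every iteration; since $z$ is convex, $z'$ is nondecreasing, so $z'(\beta) \le z'(\beta_i) \le s_i$ for all $\beta \le \beta_i$.

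\textbf{Max-width.} For an interior interval $[\beta_{i+1},\beta_i]$ we write
\[
z(\beta_{i+1},\beta_i)=\int_{\beta_{i+1}}^{\beta_i} z'(y)\,\mathrm{d}y \le (\beta_i-\beta_{i+1})\, s_i = \theta .
\]
The final interval is $[\betamin,\beta_i]$, where $i$ is the iteration at which the algorithm stops. There are two cases.
\begin{itemize}
\item If the stop was triggered by $\beta_{i+1}<\betamin$, then $[\betamin,\beta_i]\subseteq[\beta_{i+1},\beta_i]$, and the same integral bound gives width at most $\theta$.
\item If the stop was triggered by $s_i<\theta/2$, we use \Cref{zprimprop1}, exactly as in the proof of \Cref{kappa-thm}: $z(\betamin,\beta_i)\le 2z'(\beta_i)\le 2s_i<\theta$.
\end{itemize}
The case $\betamin=-\infty$ is covered by the same argument, using the convention for $z(-\infty,\cdot)$.

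\textbf{Length.} We split the iterations into two phases according to whether $s_i=h$ or $s_i=q/(\betamax-\beta_i)<h$.
\begin{itemize}
\item \emph{Phase with $s_i=h$.} Each step decreases $\beta$ by exactly $\theta/h$. The phase lasts while $\betamax-\beta_i\le q/h$, so it has at most $q/\theta+1$ steps.
\item \emph{Phase with $s_i<h$.} Write $d_i=\betamax-\beta_i$. Then $d_{i+1}=d_i+\theta d_i/q=d_i(1+\theta/q)$, so $d_i$ grows geometrically. The phase starts with $d\ge q/h$ (roughly). It must end by the time $s_i=q/d_i<\theta/2$, i.e. once $d_i>2q/\theta$. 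The number of steps is therefore at most
\[
\frac{\log\bigl((2q/\theta)/(q/h)\bigr)}{\log(1+\theta/q)} = O\!\left(\frac{q}{\theta}\log\frac{2h}{\theta}\right),
\]
using $\log(1+\theta/q)\ge \theta/(2q)$, which holds since $\theta\le 1\le q$.
\end{itemize}
The endpoints add $O(1)$ to the count. Since $h\ge 2$ and $\theta\le1$, we have $\log(2h/\theta)=O(\log(h/\theta))$, and the total is $O(q\log(h/\theta)/\theta)$.

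\textbf{Main obstacle.} The delicate point is the boundary between the two phases. The first iterate in the second phase may have $d_i$ slightly below $q/h$ plus one step, because $s_0=h$ is forced at $i=0$ even though $q/(\betamax-\beta_0)$ is undefined there. We handle this by noting that the sequence $d_i$ is monotone increasing and $s_i$ is monotone nonincreasing, so the phases are contiguous. Once $s_i<h$, every later step multiplies $d$ by $(1+\theta/q)$, starting from $d\ge q/h$. The ratio $2h/\theta$ bounding the geometric growth is then immediate.
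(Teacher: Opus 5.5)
Your proof is correct and follows essentially the same route as the paper. The paper uses the same two ingredients: the monotonicity bound $z(\beta_{i+1},\beta_i)\le(\theta/s_i)\,z'(\beta_i)\le\theta$, with the two termination cases handled via the $z'\le 1/2$ fact, and the same two-phase length count, where you track $d_i=\betamax-\beta_i$ growing by the factor $1+\theta/q$ and the paper equivalently tracks $s_i=q/d_i$ shrinking by $q/(q+\theta)$ (and your ``roughly'' is unnecessary, since every phase-two index satisfies $d_i>q/h$ exactly).
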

\begin{proof}
  Let us first show the bound on the max-width.
  Since $z'$ is an increasing function, we have $z(\beta_{i+1}, \beta_{i}) \leq (\beta_{i} - \beta_{i+1}) z'(\beta_i) = (\theta/s_i) \cdot z'(\beta_i)$. This is at most $\theta$ by \Cref{derivprop}.
  If the algorithm terminates due to the condition $\beta_{i+1} < \betamin$, this implies also that $z(\betamin, \beta_i) \leq \theta$. Otherwise, if the algorithm terminates due to the condition $s_i < \theta / 2$, then by \Cref{zprimprop1}, this implies that $z(-\infty, \beta_{i}) \leq 2 z'(\beta_i) \leq 2 s_i \leq \theta$ as desired.

  Next, we bound the schedule length. For the initial iterations of the algorithm, where $\beta_i \geq \betamax - q/h$, we have $s_i = h$ and $\beta_{i+1} = \beta_i - \theta/h$. There are at most $\frac{q/h}{\theta/h} = q/\theta$ iterations with this property.  Subsequently, let $t_1 = \lceil q/\theta \rceil$ denote the first iteration where $\beta_i \leq \betamax - q/h$. For the remaining iterations $i > t_1$ of this algorithm, we have $s_i = \frac{q}{\betamax - \beta_i}$. Thus, after this point, the following recurrence holds:
  \begin{align*}
    s_{i+1} &= \frac{q}{\betamax - \beta_{i+1}} = \frac{q}{\betamax - \beta_i + \theta/s_i}
    = \frac{q}{\betamax - \beta_i + \theta/(q/(\betamax - \beta_i))} = s_i \cdot \frac{q}{q + \theta}.
  \end{align*}

  Thus, for $i > t_1$, we have $s_{i} \leq h (q / (q + \theta))^{i-t_1}$. So when the iteration index $i$ exceeds $t_1 + (\log h - \log(\theta/2)) / (\log (q+\theta) - \log q)$, we have $s_i < \theta/2$ and the algorithm terminates. As $q \geq 2$ and $\theta \in (0,1]$, this is $O(q \log(h/\theta) / \theta)$.
\end{proof}

\begin{proposition}
  \Cref{alg:non1} can be implemented in $\NC$ using $\tilde O(q \log(h/\theta) / \theta)$ processors.
\end{proposition}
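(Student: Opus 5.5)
The plan is to avoid simulating the sequential loop of \Cref{alg:non1} and instead write down each $\beta_i$ in closed form, so that every schedule point can be produced independently by its own processor. The proof of \Cref{non-adapt} already gives this structure. There are two phases.

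In the first phase, while $\beta_i > \betamax - q/h$, we have $s_i = h$. Hence $\beta_i = \betamax - i\theta/h$ for $0 \le i \le t_1$, where $t_1 = \lceil q/\theta\rceil$ is the first index with $\beta_i \le \betamax - q/h$. The index $t_1$ should be checked against the exact boundary convention, including the special case $i=0$, but it is computable in $O(1)$ arithmetic operations.

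In the second phase, $i \ge t_1$, we have $s_i = q/(\betamax-\beta_i)$, capped at $h$; after $t_1$ the cap is inactive, since $\betamax - \beta_i \ge q/h$. Setting $d_i = \betamax - \beta_i$, the update becomes $d_{i+1} = d_i + \theta d_i/q = d_i(1+\theta/q)$. Therefore
\[
d_i = d_{t_1}\,(1+\theta/q)^{\,i-t_1}, \qquad s_i = s_{t_1}(q/(q+\theta))^{\,i-t_1},
\]
exactly the geometric recurrence from the proof of \Cref{non-adapt}.

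With these formulas, the termination index $T$ is the least $i$ such that $s_i < \theta/2$ or $\beta_i - \theta/s_i < \betamin$. Both conditions are monotone in $i$: $s_i$ is nonincreasing and $\beta_{i+1}$ is decreasing. So we can allocate $N = O(q\log(h/\theta)/\theta)$ processors, the bound from \Cref{non-adapt}, and have processor $i$ compute $\beta_i$, $s_i$ and $\beta_{i+1}$ from the closed form and evaluate its termination flag. A parallel prefix-minimum, or a binary search exploiting monotonicity, then finds $T$ in $O(\log N)$ depth. Finally we output $\{\betamin, \beta_T, \dots, \beta_0\}$ by a parallel compaction, or simply by writing each $\beta_i$ to position $T-i+1$.

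The main obstacle is numerical: computing $(1+\theta/q)^{j}$ in $\NC$ to adequate precision. Exponentiation by repeated squaring takes $O(\log j)$ multiplications, each of which is in $\NC$ for polynomially many bits; alternatively, a parallel prefix-product over the multipliers gives all powers at once with $\tilde O(N)$ work. I would argue that $\mathrm{poly}\log$ bits of precision suffice. Rounding each $\beta_i$ slightly, say rounding $d_i$ down so that gaps only shrink, preserves $\maxwidth(B) \le \theta$ up to a negligible factor, and so does not affect \Cref{non-adapt}.

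Overall the algorithm runs in polylogarithmic depth using $\tilde O(q\log(h/\theta)/\theta)$ processors, as claimed.
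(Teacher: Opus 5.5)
Your proposal is correct and follows the paper's proof essentially step for step: you split the run into the phase with $s_i = h$ and the geometric phase with $d_i = d_{t_1}(1+\theta/q)^{i-t_1}$, compute every $\beta_i$ from its closed form in parallel, and locate the first index where $s_i<\theta/2$ or $\beta_{i+1}<\betamin$. Your precision discussion goes beyond the paper, which implicitly assumes exact arithmetic; note that rounding each $d_i$ down independently does not by itself guarantee that the gaps shrink, but, as you say, the resulting $(1+o(1))$ loss in the width bound is harmless.
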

\begin{proof}
  We can divide the schedule into two halves: the first part with $s_i = h$, and the second part with $s_i = \frac{q}{\bmax - \beta_i}$.

  In the first half, we have $\beta_i = \betamax - i \theta/h$, and this occurs for iterations $i = 0, \dots, t_1 = \lceil q/\theta \rceil$.
  In the second half, as discussed earlier, $s_i = s_{t_1} ( \frac{q}{q+\theta} )^{i - t_1}$ and $\beta_{i} = \bmax - (\bmax - \beta_{t_1}) (1 + \theta/q)^{i - t_1}$.

  Given these formulas, we first compute $t_1$. We can fill in all the $\beta_i$ values within each half of the schedule, and determine the minimum index with $s_i < \theta/2$ or $\beta_{i+1} < \betamin$.
\end{proof}

\subsection{Putting Everything Together}
We now describe the overall algorithm flow, both generating the schedule and using it for PPE.
First, \Cref{kappa-cor} and \Cref{non-adapt} together immediately give \Cref{mainth1}.

\begin{proof}[Proof of \Cref{mainth1}]
  We generate a schedule by running $B = {\tt StaticSchedule}(1 / (5 \log h))$. Hence $\len(B) \leq O( q \log^2 h)$ and $\kappa(B) \leq 5 \log(5 h \log h) / (5 \log h) \leq 3$. We then run ${\tt PPE}(B,k)$ with $k = 100 (\ee^{3}-1) / \eps^2$. This gives an estimate of $Q$ within relative error $\eps$ with probability at least $0.8$. The total sample complexity is $\len(B) \cdot 2k = O(q \log^2 h / \eps^2)$.
\end{proof}

To get the three-round algorithm, we develop a parallel sampling procedure which is inspired by the TPA process.  The full details are provided later in \Cref{sec:tpa}; we summarize it as follows:
\begin{theorem}
  \label{prune-alg}
  The algorithm ${\tt PseudoTPA}(\theta)$ takes as input a parameter $\theta \in (0,1]$, and produces a random schedule $B$, with the following properties:
  \begin{itemize}
    \item The algorithm has expected sample complexity   $O(q/\theta + q \log h)$, over two rounds of adaptive sampling.
    \item The expected length of the schedule $B$ is $O(q/\theta)$.
    \item For any $x \in (\betamin, \betamax)$, it holds that $\E[ W(B, x) ] \leq \theta$.
  \end{itemize}
\end{theorem}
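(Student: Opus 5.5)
The plan is to build ${\tt PseudoTPA}(\theta)$ around one exact, \emph{non-sequential} property of a single TPA step. From a fixed point $\gamma$, draw $X\sim\mu_\gamma$ and $U\sim\mathrm{Unif}(0,1)$, and set $\gamma'=\gamma+\ln(U)/X$. For any $\alpha<\gamma$,
$$
\P[\gamma'<\alpha]=\P\bigl[U<\ee^{-(\gamma-\alpha)X}\bigr]=\E\bigl[\ee^{-(\gamma-\alpha)X}\bigr]=\frac{Z(\alpha)}{Z(\gamma)}=\ee^{-z(\alpha,\gamma)},
$$
using \Cref{expprop}. So $z(\gamma',\gamma)$ is \emph{exactly} $\mathrm{Exp}(1)$, whatever $z$ looks like. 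Independent jumps from a common start are therefore i.i.d.\ exponential in the $z$-coordinate, and no sequential chain is needed for this. If $\betamin=-\infty$, a jump landing below $\betamin$ is simply truncated there.

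\textbf{Round 1 (coarse anchors).} Compute the deterministic schedule $A={\tt StaticSchedule}(1)$. By \Cref{non-adapt} it has length $O(q\log h)$ and max-width at most $1$. From every anchor of $A$, draw a constant number of independent TPA jumps, costing $O(q\log h)$ samples. From these jumps I extract a thinned random schedule $B_1$ by a ``pseudo-TPA'' walk. Start at $\betamax$. From the current point $\gamma$, let $a$ be the anchor of $A$ immediately above $\gamma$, or $\gamma$ itself if it is an anchor. Move to the landing point of a fresh jump from $a$, discarding jumps that land above $\gamma$; if none remains, move to the next anchor below. All of this is computed offline from the round-1 samples, so it uses no new adaptivity and can be parallelized as a prefix/pointer-jumping computation over the $O(q\log h)$ anchors. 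I expect to prove two things. First, $\E[\len(B_1)]=O(q)$: each step either advances $z$ by an $\mathrm{Exp}(1)$ amount measured from an anchor lying within width $1$ of $\gamma$, or passes an anchor that is only reached when the interval is thin. Second, every interval of $B_1$ has $z$-width dominated by $1+\mathrm{Exp}(1)$, so $\E[W(B_1,x)]=O(1)$ for every $x$.

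\textbf{Round 2 (refinement).} From every point $\gamma$ of $B_1$, draw $r=\lceil C/\theta\rceil$ independent TPA jumps, keeping only landings in the $B_1$-interval directly below $\gamma$. The output $B$ is $B_1$ together with all kept landings. The expected sample cost is $r\cdot\E[\len(B_1)]=O(q/\theta)$, and $\E[\len(B)]$ is bounded by the same quantity. Adding $O(q\log h)$ from round 1 gives the claimed sample complexity over two rounds.

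\textbf{Width bound and main obstacle.} Fix $x$. Condition on $B_1$, and let $I=[\gamma_-,\gamma]$ be the $B_1$-interval containing $x$, with width $w=z(\gamma_-,\gamma)$. The $r$ landings are i.i.d.\ with $z$-distance from $\gamma$ distributed as $\mathrm{Exp}(1)$, so their density on $[0,w]$ is at least $r\ee^{-w}$. The expected gap covering the fixed point $z(x,\gamma)$ is therefore $O(\ee^{w}/r)$. Taking expectation over $B_1$ requires $\E[\ee^{W(B_1,x)}]=O(1)$, not just $\E[W]=O(1)$. I would get this by capping the walk's step length, e.g.\ forcing a point of $B_1$ whenever a jump overshoots two anchors; this keeps $w\le 3$ while costing only $O(1)$ extra points per forced step. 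Choosing $C$ large then gives $\E[W(B,x)]\le\theta$.

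The step I expect to be hardest is the round-1 bound $\E[\len(B_1)]=O(q)$ rather than $O(q\log h)$. The danger is that the anchors of $A$ are dense in the high-$\beta$ region, where $s_i=h$. There each anchor-restart may advance only a tiny amount of $z$, and the walk could visit nearly every anchor. My first attempt is a charging argument that pays for each visited anchor either by an $\mathrm{Exp}(1)$ advance of $z$, whose total is at most $q$, or by the $z$-width of the anchor intervals it skips. For the latter I would bound the total number of anchor intervals with $z$-width $\ge c$ by $O(q)$, by summing widths. If that charging fails, the fallback is to recompute anchors by geometric grouping of $A$ so that consecutive anchors are separated by $z$-width $\Theta(1)$ with constant probability.
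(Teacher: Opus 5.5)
\textbf{Gap: the exponential-moment step.} The genuine gap is the step $\E[\ee^{W(B_1,x)}]=O(1)$, and the cap you propose does not supply it.

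Without the cap, your round-1 walk is, in the $z$-coordinate, essentially a single TPA run. By memorylessness each landing step advances by a fresh $\mathrm{Exp}(1)$, so the points of $B_1$ have rate $1$. This incidentally makes the length bound you flagged as hardest easy: after an anchor move the walk sits at an anchor none of whose jumps has been used, and the first such jump lands strictly below it. Hence anchor moves exceed landing steps by at most one, and landing steps number at most $q+1$ in expectation.

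Rate $1$, however, is exactly the borderline case for the exponential moment. Let $W_+$ be the $z$-distance from $x$ up to the next point of $B_1$. Then $\P[W_+>s]$ decays only like $\ee^{-s}$, so $\E[\ee^{W_+}]=1+\int_0^\infty \ee^{s}\,\P[W_+>s]\,ds$ grows linearly with the distance from $x$ to the boundary, up to $\Theta(q)$.

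This is not an artifact of your $O(\ee^{w}/r)$ bound. A point at depth $u\in[1,\ln r]$ below the top of its $B_1$-interval sees refinement gaps of order $\ee^{u}/r$, and $W_+$ has density about $\ee^{-u}$ there. So for $x$ away from the endpoints, $\E[W(B,x)]=\Theta(\log r/r)$. For $r=\lceil C/\theta\rceil$ this exceeds $\theta$ as soon as $\log(1/\theta)\gg C$, e.g.\ for $\theta=1/(5\log h)$.

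\textbf{Why the cap fails, and what is missing.} The rule ``force a point whenever a jump overshoots two anchors'' fails because ${\tt StaticSchedule}(1)$ bounds anchor widths only from above. There can be $\Theta(q\log h)$ anchors (e.g.\ when $\betamin=-\infty$) in a range of total $z$-width at most $q$. Most anchor intervals then have width $O(1/\log h)$, and almost every $\mathrm{Exp}(1)$ jump overshoots two anchors. Forced points therefore occur roughly every two anchors along the whole range, so $\len(B_1)=\Theta(q\log h)$ and round 2 costs $\Theta(q\log h/\theta)$. ``$O(1)$ extra points per forced step'' says nothing about how many forced steps there are.

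The missing idea is a first-round point set with rate strictly greater than $1$ in $z$. The paper obtains this without any walk, by independently thinning ${\tt StaticSchedule}(1/4)$. It keeps $\beta_i$ with probability $1-\ee^{-(X_{i,1}+X_{i,2})(\beta_{i+1}-\beta_i)}$ for $X_{i,1},X_{i,2}\sim\mu_{\beta_{i+1}}$, which by \Cref{expprop} equals $1-\ee^{-2z(\beta_i,\beta_{i+1})}$. Independence across $i$ gives $\P[W'_\pm(x)>s+\tfrac14]\le\ee^{-2s}$. Hence $\E[\ee^{W'_+(x)}]\le 2\ee^{1/4}$, and since $W'_+$ and $W'_-$ are independent, $\E[\ee^{W'(x)}]\le 4\ee^{1/2}$. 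Meanwhile $1-\ee^{-a}\le a$ and telescoping give $\E[\len(B')]\le 2q+2$. With that first round, your round 2 and its $O(\ee^{w}/r)$ density argument go through essentially as in the paper, and no pointer jumping is needed.
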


\begin{proof}[Proof of \Cref{mainth2}, given \Cref{prune-alg}]
  We first generate a schedule $B$ by running ${\tt PseudoTPA}$ with parameter $1 / (5 \log h)$.
  We then estimate $Q$ by running ${\tt PPE}(B,k)$ with $k = 100 (\ee^{30}-1) / \eps^2$.
  By \Cref{kappa-thm}, it holds that $\E[ \kappa(B) ] \leq 5 \log(5 h \log h) / (5 \log h) \leq 3$ and $\E[ \len(B) ] \leq O(q \log h)$.
  The sample complexity of {\tt PseudoTPA} is $O(q \log h)$ in expectation and the expected sample complexity of {\tt PPE} is $\E[ \len(B) \cdot k]= O(q \log h / \eps^2)$.

  Since $\E[\kappa(B)] \leq 3$, Markov’s inequality gives $\kappa(B) \leq 30$  with probability at least $0.9$. When this holds, the estimate produced by {\tt PPE} is good with probability at least $0.8$. Overall, the estimate is accurate with probability at least $0.7$.
\end{proof}

\section{The TPA Algorithm and Its Parallel Counterpart}
\label{sec:tpa}

In this section, we prove \Cref{prune-alg}.
We write $\mathcal E$ for the unit-rate exponential distribution.
To begin, we recall the \emph{TPA algorithm} as used in \cite{Huber:Gibbs}. It is based on an observation of \cite{tpa} (which is targeted for even more general statistical estimation tasks, beyond just Gibbs distributions):

\begin{proposition}[\cite{tpa}]
  \label{exponentialprop}
  Let $\beta \in \mathbb R$. Suppose $X \sim \mu_{\beta}$ and $\eta \sim \mathcal E$. Then $\Pr{ z(\beta - \eta/X, \beta) > t} = \ee^{-t}$ for any $0 \le t < z(-\infty, \beta)$.
  (For $X = 0$ we define $\beta - \eta/X = -\infty$).
\end{proposition}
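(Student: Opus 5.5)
We compute the distribution of the random point $\beta' = \beta - \eta/X$ by first conditioning on $X$ and then averaging over $X \sim \mu_\beta$. Since $z$ is strictly increasing (as $z' = \E_{\mu_\cdot}[X] > 0$ unless the distribution is a point mass at $0$), the event $z(\beta', \beta) > t$ is the same as $\beta' < \gamma$, where $\gamma = \gamma(t) < \beta$ is the unique point with $z(\gamma,\beta) = t$. Such a $\gamma$ exists precisely because $0 \le t < z(-\infty,\beta)$. So it suffices to show $\Pr{\beta - \eta/X < \gamma} = Z(\gamma)/Z(\beta)$, since $Z(\gamma)/Z(\beta) = \ee^{-t}$.

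Condition on $X = x$. If $x = 0$, then $\beta' = -\infty < \gamma$ always (here $\gamma > -\infty$, since $t < z(-\infty,\beta)$). If $x \ge 1$, then $\beta - \eta/x < \gamma$ iff $\eta > x(\beta - \gamma)$. This happens with probability $\ee^{-x(\beta-\gamma)}$, using the tail of the unit exponential. Both cases are covered by the single formula $\ee^{-x(\beta-\gamma)}$, since it equals $1$ at $x = 0$.

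Averaging over $X \sim \mu_\beta$ then gives
\[
\Pr{\beta' < \gamma} = \sum_x \frac{c_x \ee^{\beta x}}{Z(\beta)} \ee^{-x(\beta-\gamma)} = \frac{\sum_x c_x \ee^{\gamma x}}{Z(\beta)} = \frac{Z(\gamma)}{Z(\beta)} = \ee^{-z(\gamma,\beta)} = \ee^{-t}.
\]

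The only delicate step is the equivalence of events at the start. It needs strict monotonicity and continuity of $z$ on $(-\infty,\beta]$, which give that $\gamma$ exists and is unique, and the correct handling of $X=0$ and of $\beta' = -\infty$ (with $z(-\infty,\beta)$ understood through the limit $Z(-\infty) = c_0$). Ties $\beta' = \gamma$ occur with probability zero because $\eta$ is continuous, so strict versus non-strict inequalities do not matter. If the support is just $\{0\}$, the range of $t$ is empty and there is nothing to prove.
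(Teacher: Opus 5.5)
Your proof is correct and follows the paper's argument: take the point $\gamma$ with $z(\gamma,\beta)=t$, rewrite the event as $\eta > (\beta-\gamma)X$, condition on $X$, and average with the identity $\E[\ee^{-(\beta-\gamma)X}] = Z(\gamma)/Z(\beta) = \ee^{-t}$. One small slip: for $t=0$ you get $\gamma=\beta$, so the claim should read $\gamma \le \beta$ rather than $\gamma<\beta$; the rest of the argument goes through unchanged.
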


\begin{proof}
  Fix $0 \le t < z(-\infty, \beta)$.
  Let $\alpha$ be such that $z(\alpha, \beta) = t$. Then
  $$
  \Pr{z(\beta - \eta/X, \beta) > t}
  = \Pr{ \beta - \eta/X < \alpha }
  = \Pr{ \eta > (\beta - \alpha) X}.
  $$
  By definition of $\eta$, we have $\Pr{ \eta > (\beta-\alpha) X \mid X} = \ee^{-(\beta - \alpha) X}$. By iterated expectations along with \Cref{expprop}, this implies that
  \begin{equation*}
  \Pr{ \eta > (\beta-\alpha) X }
  = \E[ \ee^{-(\beta-\alpha) X}]
  = Z(\alpha)/Z(\beta)
  = \ee^{-z(\alpha, \beta)} = \ee^{-t}. \qedhere
  \end{equation*}
\end{proof}

To put it more clearly, $z(\beta-\eta/X,\beta)$ has the distribution of
$\min\{\xi,z(-\infty,\beta)\}$, where $\xi\sim\mathcal E$. Thus the process has
exponential increments in the $z$-coordinate until hitting the boundary, where the final jump is truncated. This memoryless behavior is the key property
underlying the TPA schedule of \cite{Huber:Gibbs}, which is defined as follows:
\begin{algorithm}[htbp]
  \DontPrintSemicolon
  Set $\beta_0 = \betamax$; \\
  \For {$i=0$ {\bf to} $+\infty$}
	{
	  Draw $X$ from $\mu_{\beta_{i}}$  and draw $\eta$ from $\mathcal E$; \\
	  Set $\beta_{i+1} = \beta_{i} - \eta / X$; ($\beta_{i+1} = -\infty$ if $X = 0$) \\
	  \textbf{if  $\beta_{i+1} \leq \betamin$ then return} $B= \{ \betamin, \beta_{i}, \beta_{i-1}, \dots, \beta_0 = \betamax \}$.
	}
  \caption{ {The TPA process \\ {\tt Output:} a schedule over interval $[\betamin,\betamax]$}}
  \label{alg:TPAone0}
\end{algorithm}

\begin{lemma}[\cite{Huber:Gibbs}]
  Suppose we run $k$ independent executions of the TPA Process, resulting in schedules $B_1, \dots, B_k$. Then the combined schedule $B = B_1 \cup \dots \cup B_k$  has expected length at most $k q + 2$, and satisfies \Cref{kappa-thm} with parameter $\theta = 2/k$.
  (This schedule is called the ${\tt TPA}(k)$ process.)
\end{lemma}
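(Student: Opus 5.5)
The plan is to work entirely in the $z$-coordinate, using \Cref{exponentialprop} to turn each TPA run into a truncated Poisson process. Set $L = z(\betamin,\betamax)\le \log Q \le q$ and reparametrize by $u(\beta)=z(\beta,\betamax)\in[0,L]$. This is monotone, since $z$ is increasing.

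By \Cref{exponentialprop} and the memorylessness of the exponential distribution, the successive values $u(\beta_1),u(\beta_2),\dots$ of a single TPA run have independent $\mathcal E$ increments. The run stops once the next point would pass $\betamin$. Hence the interior points of one run are exactly the points of a rate-$1$ Poisson process on $[0,L)$. The first step is to justify this carefully by induction on $i$: conditional on $\beta_i$, apply \Cref{exponentialprop} at $\beta=\beta_i$, and note that the threshold $z(-\infty,\beta_i)$ only matters beyond $\betamin$.

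Taking the union of $k$ independent runs then gives a rate-$k$ Poisson process on $[0,L)$, together with the two endpoints $\betamin,\betamax$. The length bound is immediate: the expected number of interior points is $kL\le kq$, so $\E[\len(B)]\le kq+2$.

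For the width condition, fix $x\in(\betamin,\betamax)$ and let $a=u(x)$. Then $W(B,x)$ equals the $u$-length of the gap between consecutive points of $B$ containing $a$. This gap is $G_-+G_+$, where:
\begin{itemize}
\item $G_+=\min(\text{distance from } a \text{ to the nearest point of } B \text{ at } u<a\text{-side toward }\betamax\text{, i.e.\ toward } 0,\ a)$;
\item $G_-=\min(\text{distance from } a \text{ to the nearest point of } B \text{ toward } L,\ L-a)$.
\end{itemize}
Each unclipped distance is $\mathcal E$ with rate $k$, and truncation by the endpoints only decreases it. So $\E[W(B,x)]\le 1/k+1/k=2/k$. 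Applying \Cref{kappa-thm} with $\theta=2/k$ (which needs $k\ge2$ so that $\theta\le1$) finishes the proof.

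The main subtlety is the half-open convention $x\in[\beta_v,\beta_{v+1})$ together with the case $X=0$ (jump to $-\infty$). Both only affect measure-zero events or truncation at the boundary, and I expect them to be routine.
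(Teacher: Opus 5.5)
Your proposal is correct and takes essentially the same route as the paper. \Cref{exponentialprop} turns each TPA run into a rate-$1$ Poisson process in the $z$-coordinate, superposition of the $k$ runs gives rate $k$ and the length bound $kq+2$, and the two one-sided gaps around $z(x)$ are each dominated by a rate-$k$ exponential, so $\E[W(B,x)]\le 2/k$; your remark that $k\ge 2$ is needed for $\theta=2/k\le 1$ in \Cref{kappa-thm} is a small point the paper leaves implicit.
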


\begin{proof}
  By \Cref{exponentialprop}, the distances between successive values $z(\beta_i)$ in each schedule $B_j$ are exponential until the process reaches the boundary. By the superposition principle, the union $z(B \cap (\betamin, \betamax))$ forms a rate-$k$ PPP. The expected number of interior points is at most $k q$, plus we add two boundary points.

  For any $x \in (\betamin, \betamax)$, the waiting time between $z(x)$ and the next-largest value $z(B_+(x))$, with possible truncation at the boundary value $z(x,\betamax)$, is dominated by an exponential random variable of rate $k$. In a completely symmetric way, the waiting time to the next-smallest value $z(B_-(x))$ is also dominated by an exponential random variable of rate $k$. So both $z(x, B_+(x))$ and $z(B_-(x), x)$ have expected value at most $1/k$. 
  This implies that $\E[ W(B, x) ] \leq 2/k$ for all $x \in (\betamin, \betamax)$.
  Therefore, the schedule $B$ satisfies \Cref{kappa-thm} with parameter $\theta = 2/k$.
\end{proof}

Unfortunately, the TPA process is inherently sequential: each value $\beta_{i+1}$ is determined by a sample from the previously chosen distribution $\mu_{\beta_i}$. We do not know how to implement or simulate the TPA algorithm efficiently in the PRAM setting. This motivates \Cref{alg2}, which can be viewed as a local approximation to the ${\tt TPA}(k)$ process.

\begin{algorithm}[H]
  Set parameters $d = 2, \theta' = 1/4, k = \lceil 10/\theta \rceil$; \\
  Set $B = \{\betamin = \beta_0, \dots, \beta_t = \betamax \} = {\tt StaticSchedule}(\theta')$; \\
  Initialize $B' = B'' = \{ \betamin, \betamax \}$ \\
  \For{each $i = 1, \dots, t-1$} {
    Draw $X_{i,1}, \dots, X_{i,d} \sim \mu_{\beta_{i+1}}$; \label{line:alg2-prune-sampling} \\
    Place $\beta_{i}$ into schedule $B'$ with probability $1 -  \ee^{-(X_{i,1} + \dots + X_{i,d}) (\beta_{i+1} - \beta_{i})}$; \\
  }
  \For{each $\beta_i \in B'$}{
    Draw $\eta_{i,1}, \dots, \eta_{i,k} \sim \mathcal E$ and draw $Y_{i,1}, \dots, Y_{i,k} \sim \mu_{\beta_i}$; \label{line:alg2-refine-sampling} \\
    Add $\beta_i$ and $\beta_i - \eta_{i,j}/Y_{i,j}$ for $j = 1, \dots, k$ into $B''$; \\
  }
  \textbf{return} schedule $B'' \cap [\betamin, \betamax]$.
  \caption{\label{alg2} Algorithm ${\tt PseudoTPA}(\theta)$ \\ {\tt Output:} a schedule over interval $[\betamin,\betamax]$}
\end{algorithm}

\begin{proposition}
  \label{fr1prop}
  For any $i \in \{1, \dots, t-1 \}$, we have $\beta_i \in B'$ with probability $1 - \ee^{-d \cdot z(\beta_i, \beta_{i+1})}$.
\end{proposition}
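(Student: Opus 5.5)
The plan is a direct computation of the inclusion probability, conditioning on the samples and then averaging them out with \Cref{expprop}. Fix $i \in \{1,\dots,t-1\}$ and write $\Delta = \beta_{i+1} - \beta_i > 0$. The schedule $B = {\tt StaticSchedule}(\theta')$ is deterministic, so $\beta_i,\beta_{i+1}$ are fixed. Conditional on the samples $X_{i,1},\dots,X_{i,d}$, the point $\beta_i$ is placed into $B'$ with probability $1 - \ee^{-\Delta(X_{i,1}+\cdots+X_{i,d})}$. By iterated expectation,
$$
\Pr{\beta_i \in B'} = 1 - \E\Bigl[\prod_{j=1}^d \ee^{-\Delta X_{i,j}}\Bigr].
$$

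Next, use that the $X_{i,j}$ are i.i.d.\ draws from $\mu_{\beta_{i+1}}$, so the expectation of the product factors as $\bigl(\E_{X\sim\mu_{\beta_{i+1}}}[\ee^{-\Delta X}]\bigr)^d$. By \Cref{expprop}, exactly as in the proof of \Cref{exponentialprop}, we have
$$
\E_{X\sim\mu_{\beta_{i+1}}}[\ee^{-\Delta X}] = \frac{Z(\beta_{i+1}-\Delta)}{Z(\beta_{i+1})} = \frac{Z(\beta_i)}{Z(\beta_{i+1})} = \ee^{-z(\beta_i,\beta_{i+1})}.
$$
This can also be checked directly: $\sum_x c_x \ee^{\beta_{i+1}x}\ee^{-\Delta x}/Z(\beta_{i+1})$ equals that ratio. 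Raising to the $d$-th power gives $\Pr{\beta_i\in B'} = 1-\ee^{-d\, z(\beta_i,\beta_{i+1})}$.

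There is no real obstacle here. The only points needing care are that $\beta_i,\beta_{i+1}$ are finite for $1\le i\le t-1$, so no $-\infty$ convention is needed, and that the samples used in \Cref{line:alg2-prune-sampling} are fresh and independent. Both hold because the static schedule is fixed before sampling.
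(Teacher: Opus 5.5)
Your proposal is correct and follows the same route as the paper: condition on the samples, take expectations, factor the product using independence of $X_{i,1},\dots,X_{i,d}$, and apply \Cref{expprop} with $\alpha = -(\beta_{i+1}-\beta_i)$ to get $\ee^{-z(\beta_i,\beta_{i+1})}$ per factor. Your remarks on finiteness of $\beta_i,\beta_{i+1}$ for $1 \le i \le t-1$ and on the schedule being fixed before sampling are accurate.
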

\begin{proof}
  Taking expectation gives
  \begin{align*}
    \Pr{\beta_i  \in B' } &= 1 - \E \bigl[\ee^{-(X_{i,1} + \dots + X_{i,d}) (\beta_{i+1} - \beta_i)} \bigr] 
    = 1 - \prod_{j=1}^d \E\bigl[\ee^{-X_{i,j} (\beta_{i+1} - \beta_{i})}\bigr].
  \end{align*}
  Now apply \Cref{expprop} and the proposition follows.
\end{proof}

\begin{proposition}
  \label{lenprop1}
  $\E[ \len(B')] \leq d q + 2$.
\end{proposition}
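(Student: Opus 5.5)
By linearity of expectation, $\E[\len(B')] = 2 + \sum_{i=1}^{t-1} \Pr{\beta_i \in B'}$, where the $2$ accounts for the endpoints $\betamin,\betamax$ that are always placed in $B'$. I will bound each summand using \Cref{fr1prop}, which gives $\Pr{\beta_i \in B'} = 1 - \ee^{-d\, z(\beta_i,\beta_{i+1})}$.

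For each summand I apply the elementary inequality $1-\ee^{-y} \le y$ for $y \ge 0$. This yields $\Pr{\beta_i \in B'} \le d\, z(\beta_i, \beta_{i+1})$. Note that $z(\beta_i,\beta_{i+1}) \ge 0$ because $z$ is increasing, which is what licenses the inequality.

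Summing over $i = 1,\dots,t-1$, the widths telescope:
\[
\sum_{i=1}^{t-1} z(\beta_i,\beta_{i+1}) = z(\beta_1, \betamax) \le z(\betamin,\betamax) = \log Q \le q .
\]
The middle inequality again uses that $z$ is increasing and $\beta_1 \ge \betamin$. Combining the three steps gives $\E[\len(B')] \le dq + 2$.

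The argument is routine, and the only point needing care is the boundary case $\betamin = -\infty$. There the telescoped quantity is $z(-\infty,\betamax) = \log(Z(\betamax)/c_0)$, which is the quantity $\log Q$ under the paper's convention $Z(-\infty) = c_0$, so the same bound holds. There is no dependence issue: the schedule $B$ is deterministic, since it comes from ${\tt StaticSchedule}$, so linearity of expectation applies directly with no conditioning.
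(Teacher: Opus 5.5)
Your proposal is correct and follows the paper's proof: linearity of expectation with \Cref{fr1prop}, the bound $1-\ee^{-a}\le a$, and telescoping to $2 + d\, z(\beta_1,\beta_t)\le 2+dq$. Your remarks that $B$ is deterministic and on the $\betamin=-\infty$ convention are accurate, and they make explicit points the paper leaves implicit.
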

\begin{proof}
  Observe that $\sum_{i=1}^{t-1} \Pr{\beta_i \in B'} = \sum_{i=1}^{t-1} (1-\ee^{-d \cdot z(\beta_i, \beta_{i+1})})$.
  The inequality $1-\ee^{-a} \leq a$ gives $\E[\len(B')] \leq 2 + \sum_{i=1}^{t-1} d \cdot z(\beta_i, \beta_{i+1})$.
  By telescoping sums, this equals $2 + d \cdot z(\beta_1, \beta_t) \leq 2 + d q$.
\end{proof}

In light of \Cref{lenprop1}, the schedule $B''$ has expected length at most $\E[ \len(B') \cdot (k+1) ]= O( k q )$ (note that $d$ is constant). The algorithm uses two rounds of sampling: Line 5 uses $d \len(B) = O( q \log h)$ samples and Line 8 uses $k \len(B') = O (k q )$ samples in expectation.

The key task is to bound the curvature of the schedule $B''$. We do this in two stages. First, we show that the interval widths of the schedule $B'$ are tightly concentrated around $O(1)$;  second, we show that $B''$ subdivides each such interval into roughly $k$ equal subintervals.

For any $x \in (\betamin, \betamax)$, define
\[
\begin{aligned}
W'_+(x)&:=z\bigl(x,B'_+(x)\bigr),&
W'_-(x)&:=z\bigl(B'_-(x),x\bigr),&
W'(x)&:=W'_+(x)+W'_-(x)=W(B',x),\\
W''_+(x)&:=z\bigl(x,B''_+(x)\bigr),&
W''_-(x)&:=z\bigl(B''_-(x),x\bigr),&
W''(x)&:=W''_+(x)+W''_-(x)=W(B'',x).
\end{aligned}
\]

\begin{proposition}
  \label{trab1}
  Let $x \in (\betamin, \betamax)$ and $s > 0$. We have
  \[
    \Pr{W'_+(x) > s + \theta'} \leq \ee^{-d s},
    \qquad
    \Pr{ W'_-(x) > s + \theta'} \leq \ee^{-d s}.
  \]
\end{proposition}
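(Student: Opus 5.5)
We prove the bound for $W'_+(x)$; the bound for $W'_-(x)$ is symmetric. The plan is to use that the static schedule $B$ has max-width at most $\theta'$, that the points $\beta_i$ with $1 \leq i \leq t-1$ enter $B'$ independently, and then \Cref{fr1prop}.

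Fix $x$ and $s>0$. Let $\beta_v = B_-(x)$, where $B$ is the static schedule with $\maxwidth(B)\le \theta'$ by \Cref{non-adapt}. Suppose $W'_+(x) > s+\theta'$. Then no point of $B'$ lies in $(x, \alpha]$, where $\alpha$ is defined by $z(x,\alpha)=s+\theta'$; if $\alpha \ge \betamax$ the event is empty, since $\betamax\in B'$. Hence every $\beta_i$ with $\beta_i \in (x,\alpha]$ and $1\le i\le t-1$ is excluded from $B'$. These indices are $i = v+1, \dots, u$, where $\beta_u$ is the largest schedule point that is at most $\alpha$. Because the decisions for different $i$ use independent samples, \Cref{fr1prop} gives
\[
\Pr{W'_+(x) > s+\theta'} \le \prod_{i=v+1}^{u} \ee^{-d\, z(\beta_i,\beta_{i+1})} = \ee^{-d\, z(\beta_{v+1}, \beta_{u+1})}.
\]

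It remains to show $z(\beta_{v+1},\beta_{u+1}) \ge s$. We have $\beta_{u+1} > \alpha$, so $z(x,\beta_{u+1}) > s+\theta'$. Also $z(x,\beta_{v+1}) \le z(\beta_v,\beta_{v+1}) \le \theta'$. Subtracting gives $z(\beta_{v+1},\beta_{u+1}) > s$. If no index lies in the range (that is, $u=v$), then $z(x,\beta_{v+1}) > s+\theta'$, which contradicts the width bound, so this case cannot occur.

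The main obstacle is the index bookkeeping at the ends of the schedule. The point $\beta_0=\betamin$ is always in $B'$, which only helps the $W'_-$ case. The point $\beta_t=\betamax$ is also always in $B'$, so the relevant range never needs $i=t$; this is consistent with \Cref{fr1prop} covering only $1\le i\le t-1$. For $W'_-(x)$ the same argument runs downward. The excluded points are $\beta_i$ with $i\le v$ and $z(\beta_i,x)\le s+\theta'$, and the telescoping product $\prod \ee^{-d z(\beta_i,\beta_{i+1})}$ over these indices covers a $z$-length of at least $s$. Here one uses $z(\beta_v,x)\le\theta'$, and the segment $[\beta_{i}, \beta_{i+1}]$ for the lowest excluded $i$ is counted fully.
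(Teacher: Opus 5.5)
Your argument for $W'_+(x)$ is correct and is the paper's argument: the excluded indices start at $B_+(x)$, the product telescopes, the only loss is $z(x,B_+(x))\le\theta'$ at the near end, and the far end overshoots past $\alpha$.

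The $W'_-(x)$ half does not simply mirror this, and the justification you give for it is not right as stated. Let $y$ satisfy $z(y,x)=s+\theta'$; in the nontrivial case $z(\betamin,x)>s+\theta'$ we have $y>\betamin$. Let $\beta_w$ be the lowest schedule point with $\beta_w\ge y$, so $w\ge 1$, and let $\beta_v=B_-(x)$.

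Each factor $\ee^{-d\,z(\beta_i,\beta_{i+1})}$ charges the segment \emph{above} $\beta_i$. So the product over the excluded indices $w,\dots,v$ is $\ee^{-d\,z(\beta_w,\beta_{v+1})}$. It overshoots at the near end, up to $\beta_{v+1}>x$. It falls short at the far end, where $[y,\beta_w)$ is not charged at all.

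Consequently $z(\beta_v,x)\le\theta'$ is not what controls the loss. It only guarantees $\beta_v>y$, i.e.\ that the excluded range is nonempty. Counting $[\beta_w,\beta_{w+1}]$ fully says nothing about the gap below $\beta_w$. The two facts you cite are consistent with, e.g., $w=v$, $z(\beta_v,x)=\theta'/2$, $z(y,\beta_v)=s+\theta'/2$, $z(\beta_v,\beta_{v+1})=\theta'$. There the charged length is only $\theta'$, which is less than $s$ whenever $s>\theta'$.

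The missing step is the width bound for the segment straddling $y$. From $\beta_{w-1}<y\le\beta_w$ you get $z(y,\beta_w)\le z(\beta_{w-1},\beta_w)\le\theta'$. Hence $z(\beta_w,\beta_{v+1})\ge z(\beta_w,x)=z(y,x)-z(y,\beta_w)\ge s$. With this line inserted your direct argument is complete.

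The paper avoids this bookkeeping entirely. If $W'_-(x)>s+\theta'$, then $B'\cap[y,x]=\emptyset$, so $B'_+(y)>x$ and $W'_+(y)>z(y,x)=s+\theta'$. The already-proved bound for $W'_+$, applied at the point $y\in(\betamin,\betamax)$, then gives $\ee^{-ds}$ at once. This reduction reuses the first half verbatim rather than redoing the telescoping downward, and it is what actually makes the second inequality ``symmetric'' to the first.
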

\begin{proof}
  Let us first show the bound on $W'_{+}(x)$.  Let $\beta_i = B_+(x)$. If $z(x,\betamax) \leq s + \theta'$, then since $\betamax \in B'$, we have
  $W'_+(x) \leq s + \theta'$ with probability one.
  Thus we may assume that $z(x,\betamax) > s + \theta'$.
  Let $v < t$ be maximal such that $z(x,\beta_v) \leq s + \theta'$.
  Now $W'_+(x) > s + \theta'$ only if
  $B' \cap \{ \beta_i, \beta_{i+1}, \dots, \beta_v \} = \emptyset$.
  By \Cref{fr1prop}, and using independence of the iterations,
  \begin{align*}
    \Pr{ B' \cap \{ \beta_i, \beta_{i+1}, \dots, \beta_v \} = \emptyset }
    &= \prod_{j=i}^{v} \Pr{ \beta_j \notin B'} 
    = \prod_{j=i}^{v}  \ee^{-d \cdot z(\beta_j, \beta_{j+1})} = \ee^{-d \cdot z(\beta_i, \beta_{v+1})}.
  \end{align*}
  
  By definition of $v$, we have $z(x, \beta_{v+1}) > s + \theta'$. By specification of \texttt{StaticSchedule}, we have $z(x, \beta_i) \leq W(B, x) \leq \theta'$. So $z(\beta_i, \beta_{v+1}) > s$
  and the claim follows.

  We now show the bound on $W'_-(x)$. It is immediate if $z(\betamin, x) \leq s+\theta'$. Otherwise, let $y \in (\betamin, x)$  be such that $z(y,x) = s + \theta'$. We have $W'_-(x)  > s + \theta'$ only if $W'_+(y)  > s + \theta'$. By the bound on $W'_+$ above, this has probability at most $\ee^{-d s}$.  
\end{proof}

\begin{proposition}
  \label{trab2}
  Let $x \in (\betamin, \betamax)$. Conditioned on any fixed realization of the schedule $B'$,  we have
  \begin{align*}
    \Pr{W''_+(x) > s} &\leq \bigl( 1 - \ee^{-W'_+(x)}(\ee^{s} - 1) \bigr)^k, \qquad \text{for any $s \in (0, W'_+(x))$} \\
    \Pr{W''_-(x) > s} &\leq \bigl( 1 - \ee^{-W'_+(x)}(1 - \ee^{-s}) \bigr)^k \qquad \text{for any $s \in (0, W'_-(x))$} 
  \end{align*}
\end{proposition}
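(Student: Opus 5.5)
Fix a realization of $B'$ and write $p = B'_+(x) \in B'$ and $w = W'_+(x) = z(x,p)$. The idea is to use only the $k$ refinement points generated at $p$ in \Cref{line:alg2-refine-sampling}, and to discard all other points of $B''$. Adding points to a schedule can only shrink $W''_\pm(x)$, so this is enough for upper bounds. Conditioned on $B'$, the samples $Y_{p,1},\dots,Y_{p,k} \sim \mu_p$ and $\eta_{p,1},\dots,\eta_{p,k} \sim \mathcal E$ are fresh and mutually independent. They are also independent of the first-round samples that determined $B'$. Hence the $k$ points $\gamma_j := p - \eta_{p,j}/Y_{p,j}$ are i.i.d. By \Cref{exponentialprop}, each $z(\gamma_j, p)$ has the law of $\min\{\xi, z(-\infty,p)\}$ with $\xi \sim \mathcal E$. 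In particular $\Pr{z(\gamma_j,p) > t} = \ee^{-t}$ for $0 \le t < z(-\infty,p)$.

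For the upper tail, take $s \in (0,w)$ and let $y \in (x,p)$ satisfy $z(x,y) = s$. The event $W''_+(x) > s$ requires that no point of $B''$ lies in $(x,y]$. In particular, no $\gamma_j$ lies there. Now $\gamma_j \in (x,y]$ holds iff $z(\gamma_j,p) \in [w-s, w)$. Since $x > \betamin > -\infty$, we have $w < z(-\infty,p)$, so \Cref{exponentialprop} applies at both endpoints. It gives
\[
\Pr{\gamma_j \in (x,y]} = \ee^{-(w-s)} - \ee^{-w} = \ee^{-w}(\ee^{s}-1).
\]
The strict versus non-strict endpoints do not matter, because the law is continuous there. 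By independence over $j$, the probability that no $\gamma_j$ lands in $(x,y]$ is $\bigl(1-\ee^{-w}(\ee^s-1)\bigr)^k$. This is the first bound.

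For the lower tail, take $s \in (0, W'_-(x))$ and let $y' < x$ satisfy $z(y',x) = s$. Then $y' > B'_-(x) \ge \betamin$. The event $W''_-(x) > s$ requires that no point of $B''$ lies in $(y', x]$. We have $\gamma_j \in (y',x]$ iff $z(\gamma_j,p) \in [w, w+s)$. Here $w + s = z(y',p) < z(-\infty,p)$, so the untruncated exponential formula applies. It gives probability $\ee^{-w} - \ee^{-w-s} = \ee^{-w}(1-\ee^{-s})$. Independence over $j$ again yields $\bigl(1-\ee^{-w}(1-\ee^{-s})\bigr)^k$. Note that this bound is stated with $W'_+(x)$ in the exponent, and that is exactly what the argument produces: the points are generated from the right endpoint $p$.

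The main care needed is conditioning and boundaries rather than any hard estimate. First, conditioning on $B'$ must not bias the second-round draws at $p$. This holds because $B'$ is a function of round-one randomness only. Second, every threshold must lie strictly below $z(-\infty,p)$, so that the truncation in \Cref{exponentialprop}, including the $X=0$ case where $\gamma_j = -\infty$, never interferes.
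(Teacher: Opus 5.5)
Your proof is correct and follows the paper's argument: keep only the $k$ refinement points generated at $B'_+(x)$, rewrite each emptiness event in the $z$-coordinate measured from $B'_+(x)$, and apply \Cref{exponentialprop} together with independence of the $k$ draws; your half-open intervals handle the endpoints a bit more carefully than the paper's closed ones. One tiny slip: $w<z(-\infty,p)$ holds simply because $x$ is finite, not because $\betamin>-\infty$, which fails in the paper's applications where $\betamin=-\infty$.
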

\begin{proof}
  Let $i$ be the index with $\beta_{i} = B'_+(x)$ and define $r = z(x, \beta_{i}) = W'_+(x)$. For each $j = 1, \dots, k$, define $\alpha_j = \beta_{i} - \eta_{i, j}/Y_{i,j}$ and $V_j = z(\alpha_j, \beta_{i})$. 
  
  By \Cref{exponentialprop}, each $V_j$ has the distribution of a unit-rate exponential random variable truncated at the boundary, and the variables $V_1,\dots,V_k$ are independent.   Since each $\alpha_j$ is placed into $B''$, we have $W''_+(x)  > s$ only if $[z(x),z(x)+s] \cap \{ z(\alpha_1), \dots, z(\alpha_k) \} = \emptyset$. Subtracting from $z(\beta_{i})$, this implies that $[r - s, r] \cap \{ V_1, \dots, V_k \} = \emptyset$.  Since $s < r$, have
  \begin{align*}
    \Pr{ [r - s, r] \cap \{ V_1, \dots, V_k \} = \emptyset }
    &\leq \bigl(1 - (\ee^{-(r-s)}-\ee^{-r})\bigr)^k 
    = \bigl(1 - \ee^{-r}(\ee^{s} - 1)\bigr)^k,
  \end{align*}
  which establishes the first claim.

  In a similar way, $W''_- (x) > s$ only if $[z(x)-s,z(x)] \cap \{ z(\alpha_1), \dots, z(\alpha_k) \} = \emptyset$ which is equivalent to $[r, r + s] \cap \{ V_1, \dots, V_k \} = \emptyset$. Again the truncated exponential law gives 
  \begin{align*}
    \Pr{ [r, r + s] \cap \{ V_1, \dots, V_k \} = \emptyset }
    &\leq \bigl(1 - (\ee^{-r}-\ee^{-r-s})\bigr)^k 
    = \bigl(1 - \ee^{-r}(1 - \ee^{-s})\bigr)^k. \qedhere
  \end{align*}
\end{proof}

\begin{proposition}
  \label{trar2}
  Let $x \in (\betamin, \betamax)$.   Conditioned on any fixed realization of the schedule $B'$, we have
  $\E[ W''_+(x) ] \leq \ee^{W'_+(x)}/k$ and $\E[ W''_-(x) ] \leq \ee^{W'(x)}/k$.
\end{proposition}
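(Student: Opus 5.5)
We integrate the tail bounds of \Cref{trab2}, using $\E[W''_\pm(x)] = \int_0^\infty \Pr{W''_\pm(x) > s}\,\mathrm{d}s$. Throughout we condition on a fixed realization of $B'$ and write $r = W'_+(x)$ and $w = W'(x)$. Since $\beta_i = B'_+(x) \in B''$, we always have $W''_+(x) \le r$. Likewise $B'_-(x) \in B''$, so $W''_-(x) \le W'_-(x) = w - r$. The integrals therefore only run over $s \in (0,r)$ and $s \in (0, w-r)$, respectively, which is exactly the range where \Cref{trab2} applies.

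For $W''_+$, we use $1 - a \le \ee^{-a}$ and $\ee^{s} - 1 \ge s$. This gives
\[
\Pr{W''_+(x) > s} \le \exp\bigl(-k \ee^{-r}(\ee^s - 1)\bigr) \le \exp\bigl(-k \ee^{-r} s\bigr).
\]
Integrating over $s \in (0,\infty)$ yields $\E[W''_+(x)] \le \ee^{r}/k = \ee^{W'_+(x)}/k$, which is the first claim.

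For $W''_-$, the factor is $1 - \ee^{-s}$, which is no longer lower-bounded by $s$ for large $s$. This is the only delicate point. The fix is to use the range restriction $s < w - r$ together with concavity: for $s \in [0, w-r]$,
\[
1 - \ee^{-s} \ge s \cdot \frac{1 - \ee^{-(w-r)}}{w-r}.
\]
That route gets messy, so I would instead bound the whole tail integrand directly:
\[
\ee^{-r}(1 - \ee^{-s}) = \ee^{-r-s}(\ee^{s} - 1) \ge \ee^{-w} s \qquad \text{for } s < w - r,
\]
using $\ee^{-r-s} \ge \ee^{-w}$ and $\ee^s - 1 \ge s$. Then
\[
\Pr{W''_-(x) > s} \le \exp(-k \ee^{-w} s) \quad \text{on } (0, w-r),
\]
and the probability is $0$ beyond $w-r$. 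Integrating gives $\E[W''_-(x)] \le \ee^{W'(x)}/k$.

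The remaining details are routine. If the interval is truncated at $\betamin$ or $\betamax$, the range restrictions still hold, because the endpoints of $B'$ lie in $B''$. The degenerate cases $r = 0$ or $w - r = 0$ are trivial, since the corresponding width is then $0$.
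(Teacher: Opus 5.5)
Your proof is correct and essentially identical to the paper's. You restrict the tail integrals to $(0,W'_+(x))$ and $(0,W'_-(x))$ using $B'\subseteq B''$, bound the tails from \Cref{trab2} via $1-a\le \ee^{-a}$ and $\ee^{s}-1\ge s$, and integrate the resulting exponential. Your lower-side estimate $\ee^{-r}(1-\ee^{-s})\ge s\,\ee^{-W'(x)}$ for $s\le W'_-(x)$ is the same inequality the paper writes as $1-\ee^{-s}\ge s\,\ee^{-p}$ with $p=W'_-(x)$.
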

\begin{proof}
  Let $r =  W'_+(x)$ and $p = W'_-(x)$.  Observe that $W''_+(x) \leq W'_+(x) = r$ with probability one, since $B' \subseteq B''$. Thus, by \Cref{trab2}, we can write
  \begin{align*}
    \E[ W''_+(x) ] &= \int_{s=0}^{r} \Pr{ W''_+(x) > s } \ \mathrm{d}s 
    \leq \int_{s=0}^{r}  (1 - \ee^{-r}(\ee^s - 1))^k \ \mathrm{d}s.
  \end{align*}

  Using the bound $1+a \leq \ee^a$ twice, we have $1 - \ee^{-r} (\ee^s - 1) \leq \ee^{-s \ee^{-r}}$.
  Plugging this back into our estimate establishes the first claim:
  $$
  \E[ W''_+(x) ] \leq \int_{s=0}^{\infty} ( \ee^{-s \ee^{-r}} )^k \ \mathrm{d}s = \frac{\ee^{r}}{k}.
  $$

  Similarly, $W''_-(x) \leq W'_-(x) = p$ with probability one, so \Cref{trab2} gives:
  \begin{align*}
    \E[ W''_-(x) ] &= \int_{s=0}^{p} \Pr{ W''_-(x) > s} \mathrm{d}s 
    \leq \int_{s=0}^{p}  (1 - \ee^{-r}(1-\ee^{-s}))^k \ \mathrm{d}s.
  \end{align*}

  Since $s \in [0,p]$, we have $1 - \ee^{-s} \geq s \ee^{-p}$ and hence $1 - \ee^{-r} (1 - \ee^{-s}) \leq \ee^{-s \ee^{-r - p}}$.   Plugging this into the bound for $W''_-(x)$, we conclude that
  \[
  \E[ W''_-(x) ] \leq \int_{s=0}^{\infty} \ee^{-k s \ee^{-r-p}} \ \mathrm{d}s = \frac{\ee^{p+r}}{k}. \qedhere
  \]
\end{proof}

Putting these calculations together gives the following:
\begin{proposition}
  For any $x \in (\betamin, \betamax)$, it holds that
  $$\E[W(B'',x)] \leq 10/k \leq \theta.$$
\end{proposition}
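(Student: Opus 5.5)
We combine \Cref{trar2} with the tail bounds of \Cref{trab1}, using the tower rule over the randomness of $B'$. Write $W(B'',x)=W''_+(x)+W''_-(x)$. Conditioning on $B'$, \Cref{trar2} gives
$$\E[W(B'',x)\mid B'] \leq \frac{\ee^{W'_+(x)}+\ee^{W'(x)}}{k}\leq \frac{2\ee^{W'(x)}}{k}.$$
Taking expectations over $B'$, it therefore suffices to show $\E[\ee^{W'(x)}]\leq 5$. The final inequality $10/k\leq\theta$ is then immediate from $k=\lceil 10/\theta\rceil$.

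The main step is bounding $\E[\ee^{W'_+(x)+W'_-(x)}]$. The two widths $W'_+(x)$ and $W'_-(x)$ may be dependent, so we first apply Cauchy--Schwarz:
$$\E[\ee^{W'_+ + W'_-}]\leq \sqrt{\E[\ee^{2W'_+}]\,\E[\ee^{2W'_-}]}.$$
This reduces the problem to bounding each factor separately. For $W'_+$ we integrate the tail bound of \Cref{trab1} with $d=2$ and $\theta'=1/4$:
$$\E[\ee^{2W'_+}] = 1+\int_0^\infty 2\ee^{2u}\Pr{W'_+>u}\,\mathrm du\le \ee^{2\theta'}+\int_0^\infty 2\ee^{2(s+\theta')}\ee^{-2s}\,\mathrm ds.$$
The last integral diverges, so the exponent is critical and this direct approach fails.

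The fix is to avoid the square. Using the inequality $\ee^{a+b}\leq \tfrac12(\ee^{2a}+\ee^{2b})$ runs into the same divergence, so it does not help either. Instead we use the finer, non-symmetric bound from \Cref{trar2}:
$$\E[W''_+\mid B']\le \frac{\ee^{W'_+}}{k},\qquad \E[W''_-\mid B']\le \frac{\ee^{W'_-+W'_+}}{k}.$$
The first term is handled by the tail bound with $d=2$: $\E[\ee^{W'_+}]\leq \ee^{\theta'}\bigl(1+\int_0^\infty \ee^{s}\ee^{-2s}\,\mathrm ds\bigr)=2\ee^{1/4}\approx 2.57$.

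For the cross term $\E[\ee^{W'_++W'_-}]$ we avoid Cauchy--Schwarz and exploit the structure of $B'$ directly. Conditioned on the static schedule $B$, the memberships of the points $\beta_j$ in $B'$ are independent across $j$. The points to the right of $x$ determine $W'_+$, and the points to the left determine $W'_-$, except for the single point $B_-(x)$, which can be treated by absorbing one width of at most $\theta'$. Hence $W'_+$ and $W'_-$ are independent up to an additive $\theta'$, giving
$$\E[\ee^{W'_++W'_-}]\le \ee^{\theta'}\,\E[\ee^{W'_+}]\,\E[\ee^{W'_-}]\le \ee^{1/4}\,(2\ee^{1/4})^2=4\ee^{3/4}\approx 8.5.$$
The total bound is then about $(2.57+8.5)/k\approx 11/k$, slightly above the target $10/k$.

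The hardest step will be closing this constant gap. The first thing to try is sharpening the exponential-moment calculation: integrate $\Pr{W'_+>u}\le\min(1,\ee^{-2(u-\theta')})$ exactly and use the fact that $W'_+\le z(x,\betamax)$. If that is not enough, the next option is to handle the boundary point more carefully, noting that the only shared dependence between the two sides comes from the interval $[B_-(x),B_+(x)]$ of $B$, whose width is at most $\theta'$ by construction. We then bound the cross term as $\E[\ee^{W'_+}]\cdot\E[\ee^{W'_-}]$ while charging that interval to just one side. With these refinements the two contributions should sum to at most $10$.
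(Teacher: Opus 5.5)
There is a genuine gap: the proposal does not prove the bound. By your own accounting the total comes to $(2\ee^{1/4}+4\ee^{3/4})/k \approx 11.04/k > 10/k$, and you end by hoping that refinements will close the gap. Your first suggested refinement changes nothing. Integrating $\ee^{u}\min\{1,\ee^{-2(u-\theta')}\}$ exactly gives
\[
\E[\ee^{W'_+(x)}] \le 1+(\ee^{\theta'}-1)+\ee^{\theta'} = 2\ee^{\theta'},
\]
which is the same bound you already had. So the entire loss is the spurious factor $\ee^{\theta'}$ in your cross-term estimate. Your first reduction to $\E[\ee^{W'(x)}]\le 5$ also cannot work: even in the best case the available tail bounds only give $\E[\ee^{W'(x)}]\le 4\ee^{1/2}\approx 6.6$. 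The asymmetric split from \Cref{trar2} is therefore needed, as you eventually realized.

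The missing idea is that $W'_+(x)$ and $W'_-(x)$ are \emph{exactly} independent, with no boundary correction at $B_-(x)$.
\begin{itemize}
\item The schedule $B$ produced by \texttt{StaticSchedule} is deterministic, so conditioning on $B$ is vacuous.
\item Each event $\beta_j\in B'$ is decided by its own fresh samples $X_{j,1},\dots,X_{j,d}$, so the membership indicators are mutually independent.
\item $W'_+(x)$ is a function only of the indicators of points $\beta_j>x$. $W'_-(x)$ is a function only of the indicators of points $\beta_j\le x$, and $B_-(x)$ belongs to this second group.
\item Membership of $B_-(x)=\beta_i$ is decided by samples drawn at $\beta_{i+1}=B_+(x)$, but this creates no dependence. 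Those samples are distinct from the ones deciding membership of $B_+(x)$, and the inclusion probability $1-\ee^{-d\,z(\beta_i,\beta_{i+1})}$ is a fixed number.
\end{itemize}

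Hence $\E[\ee^{W'(x)}]=\E[\ee^{W'_+(x)}]\,\E[\ee^{W'_-(x)}]\le 4\ee^{2\theta'}\le 6.6$. Feeding this into \Cref{trar2}, together with $\E[\ee^{W'_+(x)}]\le 2\ee^{\theta'}\le 2.6$, gives $\E[W(B'',x)]\le 9.2/k\le 10/k\le\theta$. This is exactly the paper's argument. Your unfounded worry that the two sides are dependent is what sent you into Cauchy--Schwarz, which diverges, and then into the lossy ``independent up to $\theta'$'' bound.
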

\begin{proof}
  By \Cref{trab1}, we can calculate
  \begin{align*}
    \E[ \ee^{W'_+(x)} ] &\leq  \ee^{\theta'}  \Bigl( 1 + \int_{s=0}^{\infty} \Pr{ W'_+(x) > \theta' + s} \ee^s \ \mathrm{d}s \Bigr) \leq \ee^{\theta'} \Bigl( 1 +  \int_{s=0}^{\infty} \ee^{-2 s} \cdot  \ee^s \ \mathrm{d}s \Bigr) = 2 \ee^{\theta'}.
  \end{align*}

  In a completely symmetric way, we have $\E[\ee^{W'_-(x)}]  \leq 2 \ee^{\theta'}$. Since $W'_+(x)$ and $W'_-(x)$ are determined by independent variables, this implies that $\E[ \ee^{W'(x)} ] = \E[ \ee^{W'_+(x)}]  \E[ \ee^{W'_-(x)}] \leq 4 \ee^{2 \theta'}$. For $\theta' = 1/4$, we thus calculate $\E[\ee^{W'_+(x)}] \leq 2.6$ and $\E[\ee^{W'(x)}] \leq 6.6$.

  Plugging these estimates into \Cref{trar2} and using iterated expectations, we have
  \[
    \E[W''_+(x)] \leq \E[ \ee^{W'_+(x)} ]/k \leq 2.6/k,
    \qquad
    \E[W''_-(x)] \leq \E[ \ee^{W'(x)} ]/k \leq 6.6/k.
  \]

  So $\E[W(B'', x)] = \E[W''_-(x)] + \E[ W''_+(x)]  \leq 9.2/k$. For $k = \lceil 10/\theta \rceil$, this is at most $\theta$.
\end{proof}

This concludes the proof of \Cref{prune-alg}.

\section{Applications}

\label{section:applications}
In this section, we apply the framework to obtain a variety of $\RNC$ sampling algorithms. Via the framework of \Cref{mainth2}, these directly lead to efficient $\RNC$ counting algorithms. Note that as we have discussed earlier, we will only generate  approximate samples.

Throughout, we consider an undirected graph $G = (V, E)$ with $n$ vertices and $m$ edges and maximum degree $\Delta$.

\subsection{Anti-ferromagnetic \texorpdfstring{$2$}{2}-spin Systems}

The $2$-spin system $(\gamma_1, \gamma_2, \lambda)$ on $G$ over configurations $\sigma \in \{-1, +1 \}^V$ is defined as
$$
\mu^{\text{2-spin}}_{\gamma_1, \gamma_2, \lambda}(\sigma) \propto \lambda^{n_+(\sigma)} \gamma_1^{m_+(\sigma)} \gamma_2^{m_-(\sigma)},
$$
where $$
n_+(\sigma) = \abs{\set{v \in V \mid \sigma_v = +1}} \qquad m_{\pm}(\sigma) = \abs{\set{(u, v) \in E \mid \sigma_u = \sigma_v = \pm 1}},
$$
and
the parameters $\gamma_1, \gamma_2, \lambda$ satisfy $
0 \le \gamma_1 \le \gamma_2 \neq 0$ and $\lambda > 0.$

The counting problem is to compute the partition function  defined by:
$$Z^{\text{2-spin}}_{\gamma_1, \gamma_2, \lambda} = \sum_{\sigma \in \set{-1, +1}^V} \lambda^{n_+(\sigma)} \gamma_1^{m_+(\sigma)} \gamma_2^{m_-(\sigma)}.$$

The system is called \emph{ferromagnetic} if $\gamma_1 \gamma_2 > 1$, \emph{strictly anti-ferromagnetic} if $\gamma_1 < 1, \gamma_2 < 1$, and \emph{general anti-ferromagnetic} if $\gamma_1  \gamma_2 < 1$. 

\paragraph{Remark on asymptotics.} Throughout this section, we will assume that $\gamma_1, \gamma_2$ are constants; e.g. the Ising model  is a $2$-spin system with $\gamma_1 = \gamma_2$ and the hard-core model has parameters $\gamma_1 = 0, \gamma_2 = 1$.  However, it is critical for our annealing algorithms that $\lambda$ is allowed to vary --- it is \emph{not} treated as a constant. All  asymptotic notations may have a hidden dependence on $\gamma_1$ or $\gamma_2$ (but \emph{not} $\lambda$).

\paragraph{The uniqueness conditions.} For any integer $d \ge 1$, the degree-$d$ univariate tree recursion is defined by $T_d(x) := \lambda \bigl(\frac{\gamma_1 x + 1}{x + \gamma_2} \bigr)^d$; we let $\hat{x}_d$ be the unique positive fixed point of $T_d$, i.e., $T_d(\hat{x}_d) = \hat{x}_d$. The $2$-spin system  is called $d$-unique if $|T_d'(\hat{x}_d)| < 1$, with gap $\delta = 1 -  |T_d'(\hat{x}_d)| \in (0,1)$. 
This is a key stability property for many sampling algorithms, including ours. 
Moreover, in certain systems, failure of the $d$-uniqueness
condition is known to imply worst-case hardness; see,  e.g. \cite{sly2012computational}.

We focus on two cases where the uniqueness condition is particularly simple. First, when $\gamma_2\le 1$, the quantity $|T_d'(\hat x_d)|$ is monotone increasing in $d$, so $(\Delta-1)$-uniqueness automatically implies $d$-uniqueness for all $d < \Delta$. Second, in $\Delta$-regular graphs, many local-tree quantities are governed by
the single branching factor $d = \Delta-1$, and so $(\Delta-1)$-uniqueness is sufficient for the analysis.

So let us fix parameters $0 \le \gamma_1 \leq \gamma_2, \gamma_1 \gamma_2 < 1, \delta \in (0,1)$. The regime for $(\Delta-1)$-uniqueness with gap $\geq \delta$ can be characterized as follows:

\begin{itemize}
\item if $\gamma_1 = 0$, the  regime is $\lambda \in (0, \lambda_c]$ for $\lambda_c = \frac{(1-\delta) \gamma_2^{\Delta} (\Delta - 1)^{\Delta - 1}}{ (\Delta - 2 + \delta)^{\Delta}}$.
  \item otherwise, if $\sqrt{\gamma_1 \gamma_2} \geq \frac{\Delta - 2 + \delta}{\Delta - \delta}$, the  regime is $\lambda \in (0, +\infty)$;
  \item otherwise,  the regime is $\lambda \in (0, \lambda_1] \cup [\lambda_2, +\infty)$, where $x_1 \le x_2$ are the two positive solutions of the equation $\frac{(\Delta-1)(1 - \gamma_1 \gamma_2) x}{(\gamma_1 x + 1)(x + \gamma_2)} = 1 - \delta$ and $\lambda_i := x_i \bigl(\frac{x_i + \gamma_2}{\gamma_1 x_i + 1}\bigr)^{\Delta-1}$ for $i = 1,2$.
\end{itemize}

For these cases, we have crisp results for sampling and counting.

\begin{theorem}[Formal version of \Cref{thm:anti-ferro}]
\label{thm:anti-ferro-stronger}
For a general anti-ferromagnetic $2$-spin system with parameters $(\gamma_1, \gamma_2, \lambda)$  satisfying one of the following conditions:
  \begin{itemize}
    \item when $\gamma_2 \le 1$, $(\gamma_1, \gamma_2, \lambda)$ is $(\Delta - 1)$-unique with gap $\delta$;
    \item when $\gamma_2 > 1$, $(\gamma_1, \gamma_2, \lambda)$ is $(\Delta - 1)$-unique with gap $\delta$ and $G$ is $\Delta$-regular;
  \end{itemize}
  there is an $\RNC$ algorithm to estimate the partition function within relative error $\eps$ with probability at least $0.7$, running in $O_{\delta}(\log \Delta \cdot \log (n / \eps))$ depth using $\tilde{O}_{\delta}(m^2 / \eps^2)$ processors.  
  
  In the hard-core model ($\gamma_1 = 0, \gamma_2 = 1$), the processor count reduces to $\tilde O_{\delta}( m n/\eps^2)$ with the same depth.
  
\end{theorem}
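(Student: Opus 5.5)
We will derive the counting result by plugging a parallel approximate sampler for the $2$-spin system into the black-box reduction of \Cref{mainth2}. The annealing parameter will be the external field: set $\beta = \log \lambda'$ and take $H(\sigma) = n_+(\sigma) \in \{0,1,\dots,n\}$, with weight $F(\sigma) = \gamma_1^{m_+(\sigma)}\gamma_2^{m_-(\sigma)}$. We anneal from $\betamin = -\infty$ to $\betamax = \log\lambda$. At $\betamin = -\infty$ only the all-minus configuration contributes, so $Z(-\infty) = \gamma_2^{m}$ is known exactly, and $Q = Z(\log\lambda)/\gamma_2^m$ gives the partition function. For the parameters we may take $h = n$. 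For $q$, observe that $Q = \sum_\sigma \lambda^{n_+}\gamma_1^{m_+}\gamma_2^{m_- - m}$. Each factor $\gamma_2^{m_- - m}$ is at most $\gamma_2^{-m}$ when $\gamma_2<1$ and at most $1$ when $\gamma_2 \ge 1$. Hence $\log Q \le n\log 2 + n\log^+\lambda + O(m)$, so $q = O(m + n\log^+\lambda)$. Since $\lambda$ ranges over a bounded set in the uniqueness regime except in the second bullet case, where $\lambda \ge \lambda_2$ may be large, we handle large $\lambda$ by symmetry. Flipping all spins swaps $\gamma_1\leftrightarrow\gamma_2$ and replaces $\lambda$ by $1/\lambda$ up to an explicit known factor. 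We then anneal in the other direction, so that $q = O(m)$ in all cases; for the hard-core model $m_+=0$ forces $q = O(n)$.

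The key requirement is that every intermediate $\lambda' = \ee^\beta$ with $\beta \in [\betamin,\betamax]$ also lies in the uniqueness regime with gap $\ge\delta$. This holds because the regimes listed before the theorem are unions of intervals containing $0$ (or $+\infty$ after flipping), and uniqueness is monotone along the annealing path. Given this, we invoke a parallel sampler for $(\Delta-1)$-unique anti-ferromagnetic $2$-spin systems. This is the step deferred to the appendix: spectral independence or contraction from uniqueness (as in \cite{chenOptimalMixingTwostate2022}) gives $O_\delta(\log\Delta\cdot\log(n/\eps'))$-depth approximate sampling via a parallel Glauber/block dynamics, with total variation error $\eps' = o(1/N)$, where $N$ is the total sample count.

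By \Cref{mainth2}, $N = O(q\eps^{-2}\log h) = O(m\eps^{-2}\log n)$, or $O(n\eps^{-2}\log n)$ for the hard-core model. All samples within each of the three rounds are drawn in parallel. Each sampler uses $\tilde O(m)$ processors, giving $\tilde O(m^2/\eps^2)$ processors in total, and $\tilde O(mn/\eps^2)$ for the hard-core model. The schedule computation (\Cref{alg:non1} in $\NC$), the \texttt{PseudoTPA} bookkeeping and the PPE products are all $\NC$ operations of lower-order depth. We convert the expected sample count into a worst case by Markov's inequality together with truncation, losing a constant in the success probability; this is absorbed by rerunning and taking the median.

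The main obstacle is the sampler. We need its depth to be only $O_\delta(\log\Delta\log(n/\eps))$ uniformly over all $\lambda'$ on the path, including $\lambda'\to 0$ and the regular-graph case with $\gamma_2>1$, where uniqueness at branching $\Delta-1$ is not monotone in degree. The first approach to try is proving a uniform contraction or spectral-independence bound for every $\lambda'\le\lambda$ from the single gap-$\delta$ condition. In the regular case, the argument would use that all vertices have the same branching factor $\Delta - 1$, so the tree-recursion analysis at $d=\Delta-1$ alone suffices.
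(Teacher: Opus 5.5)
Your proposal is correct and follows essentially the same route as the paper: you anneal in $\beta=\ln\hat\lambda$ with $H=n_+$, $h=n$ and $q=O(m)$ ($O(n)$ for hard-core) on the lower interval, use the spin-flip rewriting $Z=\hat\lambda^n\tilde Z$ with $H=n-n_+$ on the upper interval, and treat the $\RNC$ sampler of \Cref{lem:anti-ferro-sampler} as a black box. One small correction: the regime is unbounded above in the first bullet too whenever $\gamma_1>0$ (not only in the regular-graph case), but your symmetry step already covers this, since it only needs $\gamma_1>0$.
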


\begin{lemma}
  \label{lem:anti-ferro-sampler}
For an anti-ferromagnetic $2$-spin system with parameters $(\gamma_1, \gamma_2, \lambda)$ satisfying one of the following conditions:
  \begin{itemize}
    \item when $\gamma_2 \le 1$, $(\gamma_1, \gamma_2, \lambda)$ is $(\Delta - 1)$-unique with gap $\delta$;
    \item when $\gamma_2 > 1$, $(\gamma_1, \gamma_2, \lambda)$ is $(\Delta - 1)$-unique with gap $\delta$ and $G$ is $\Delta$-regular;
  \end{itemize}
  there is an $\RNC$ sampler for the Gibbs distribution $\mu^{\text{2-spin}}_{\gamma_1, \gamma_2, \lambda}$ such that an approximate sample within $\eps$ total variation distance can be generated in $O_{\delta}(\log \Delta \cdot \log \frac{n}{\eps})$ depth using $\tilde{O}_{\delta}(m + n)$ processors.
\end{lemma}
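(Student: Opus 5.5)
The plan is to combine spectral independence with a parallel Glauber-type chain, leaning on known results since the lemma is essentially a black-box sampler. First, I will invoke the result of \cite{chenOptimalMixingTwostate2022}. It says that for anti-ferromagnetic $2$-spin systems that are $(\Delta-1)$-unique with gap $\delta$ (with $\gamma_2\le 1$, or $G$ $\Delta$-regular when $\gamma_2>1$), the Gibbs distribution is $O_\delta(1)$-spectrally independent, uniformly over pinnings. It also gives the stronger entropic-independence / approximate tensorization of entropy with constant $C_\delta$, and these properties persist for all pinned subsystems. Second, I will use this approximate tensorization to show that block dynamics mix in $O_\delta(\log(n/\eps))$ rounds in relative-entropy decay. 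In each round a random block, or an independent set of vertices, is resampled. This choice of dynamics is what allows parallelization.

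The parallel implementation goes as follows. I will colour $G$ greedily in parallel, or more simply use a random subset: each vertex is activated independently with probability $\Theta(1/\Delta)$, and the activated vertices that are pairwise non-adjacent are resampled simultaneously from their conditional marginals. Conditioned on the neighbourhood, these updates are independent, so each round costs $O(\log\Delta)$ depth, namely computing a product over at most $\Delta$ neighbours by a reduction tree, using $O(m+n)$ work. An alternative I would compare is the systematic-scan chromatic dynamics with $O(\Delta)$ colours. That gives depth $O(\Delta\log\Delta)$ per sweep, which is too much, so I prefer the random-subset ``Glauber with an independent set'' version. Its spectral gap or entropy decay per round is at least a $\Theta(1)$ fraction of the $n$-step Glauber decay, because it is a mixture of single-site heat-bath updates with $\Theta(n/\Delta)\cdot$ constant rate. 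Uniform block factorization then gives an $\Omega_\delta(1/\Delta)$-rate contraction per round. Combined with the initial relative entropy $O(n\log(1/\mu_{\min}))$, this yields $O_\delta(\Delta\log(n/\eps))$ rounds.

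The main obstacle is removing the extra factor of $\Delta$ in the round count, so that the total depth is $O_\delta(\log\Delta\cdot\log(n/\eps))$. My first attempt will use the optimal $O_\delta(n\log n)$ Glauber mixing of \cite{chenOptimalMixingTwostate2022}, converted via the ``$k$-partite / field dynamics'' factorization. Uniform block factorization with blocks of density $\Theta(1)$ (rather than $1/\Delta$), derived from entropic independence, gives constant-rate entropy contraction for resampling a random $\Theta(1)$-fraction of vertices. That block can then be sampled in parallel by using the tree recursion and correlation decay under uniqueness. Conditional on the complement, the block is a union of small components, each of size $O(\log n)$ with high probability when the density is a small constant times $1/\Delta$. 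If I cannot get density $\Theta(1)$, I will fall back on a cleverer analysis in the style of the Anand–Jerrum or Liu–Yin parallel Glauber coupling. In that approach, uniqueness with gap $\delta$ gives contraction under a potential-weighted Hamming metric for simultaneous updates, which removes the $\Delta$ factor. Finally, I will choose the initial state appropriately (all $-$) and bound the relative entropy by $O_\delta(n\log(\lambda+1/\lambda))$ to get the $\log(n/\eps)$ dependence. The processor count $\tilde O_\delta(m+n)$ then follows because each round touches each edge a constant number of times.
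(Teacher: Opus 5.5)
There is a genuine gap, and it sits exactly at the step you flag as the main obstacle. Your random-independent-set dynamics makes only $O(n/\Delta)$ heat-bath updates per round in the worst case. This overhead is intrinsic to the scheme, not an artifact of the analysis. Take the hard-core model with $\lambda=\Theta(1/\Delta)$, which lies inside uniqueness, on a disjoint union of $(\Delta+1)$-cliques. Each round updates at most one vertex per clique. From the all-minus start, a clique needs $\Omega(\Delta)$ updates before it is occupied with constant probability, since each update occupies it with probability at most $\lambda$. So $\Omega(\Delta)$ rounds are unavoidable.

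Neither of your repairs fixes this. In the first, constant-density blocks do give constant-rate entropy contraction via uniform block factorization. However, the conditional law on such a block is again an anti-ferromagnetic 2-spin system, on an induced subgraph of degree up to $\Theta(\Delta)$, which percolates. The block splits into $O(\log n)$-size components only at density $O(1/\Delta)$, which is exactly the regime that brings the $\Delta$ factor back. So the per-round subproblem is as hard as the original one. The second repair specifies no chain, metric or contraction proof. It also misdescribes \cite{liuParallelizeSingleSiteDynamics2024}: that framework does not analyze a chain with simultaneous updates of adjacent sites. It faithfully simulates the \emph{sequential} single-site chain in parallel.

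That faithful simulation is the missing idea, and with it no block dynamics or entropy factorization is needed. The paper takes $T=O_\delta(n\log(n/\eps))$ steps of ordinary Glauber dynamics from \Cref{theorem:mixing-time-anti-ferro}. It then invokes \Cref{theorem:parallel-single-site}: $T$ steps of any single-site dynamics whose Dobrushin influence matrix satisfies $\norm{R}_p\le\rho$ can be simulated in depth $O(\rho(T/n+\log n)\log\Delta)$ with $\tilde O(m+n)$ processors. The remaining ingredient is $\norm{R}_1\le 64/\delta^2$ under uniqueness with gap $\delta$ (\Cref{lem:anti-ferro-influence-matrix}). Only a constant bound is needed, not Dobrushin's condition $\norm{R}<1$. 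Together these give depth $O_\delta(\log\Delta\cdot\log(n/\eps))$ directly.

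There is also a smaller gap. Here $\lambda$ is not a constant, since the annealing queries arbitrary $\hat\lambda\in(0,\lambda]$. Your initial-entropy bound $O_\delta(n\log(\lambda+1/\lambda))$ gives $O(\log(n/\eps))$ dependence only when $\log\log(\lambda+\lambda^{-1})=O(\log(n/\eps))$. The paper handles the extremes separately:
\begin{itemize}
\item it outputs all-minus when $\lambda\le\eps/(nA)$, where $A=\max\{1,\gamma_2^{-\Delta}\}$;
\item it outputs all-plus when $\gamma_1>0$ and $\lambda\ge n/(\eps\gamma_1^{\Delta})$;
\item for $\gamma_1=0$, it uses that uniqueness forces $\lambda\le\exp(O_\delta(\Delta))$.
\end{itemize}
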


 We prove \Cref{thm:anti-ferro-stronger} assuming~\Cref{lem:anti-ferro-sampler} and defer the proof of~\Cref{lem:anti-ferro-sampler} to~\Cref{appendix:anti-ferro-sampler-monomer-dimer}.
 
\begin{proof}[Proof of \Cref{thm:anti-ferro-stronger}]
  By our characterization above, the uniqueness regime with gap $\geq \delta$ may consist of two intervals $(0, \lambda_1]\cup [\lambda_2, +\infty)$, which we call the \emph{lower} and \emph{upper} intervals. If the regime is all of $(0,+\infty)$,
then we regard $(0,1]$
as the lower interval and $[1,+\infty)$ as the upper interval.  It is possible that the upper interval is empty.

By \Cref{lem:anti-ferro-sampler}, there are $\RNC$ samplers for the given Gibbs distributions $\mu^{\text{2-spin}}_{\gamma_1, \gamma_2, \lambda}$ which cover sampling parameters $\hat \lambda$ throughout the entire lower and upper intervals.
  
  For target value $\lambda$ in the lower interval, the annealing process takes a sampling parameter $\hat \lambda = 0$ to $\hat{\lambda} = \lambda$. To fit into our framework, we rewrite the partition function as follows:
  $$Z = \sum_{\sigma \in \Omega} \gamma_1^{m_+(\sigma)} \gamma_2^{m_-(\sigma)} \hat \lambda^{n_+(\sigma)} = \sum_{\sigma \in \Omega} F(\sigma) \cdot \mathrm{e}^{(\ln \hat \lambda) \cdot n_+(\sigma)}.$$
  Here, $\beta = \ln \hat \lambda$ is the inverse temperature, $H(\sigma) = n_+(\sigma)$ is the Hamiltonian, and $F(\sigma) = \gamma_1^{m_+(\sigma)} \gamma_2^{m_-(\sigma)}$ is the weight function.
  We set $\beta_{\min} = -\infty$ and $\beta_{\max} = \ln \lambda$, and take
  $$
  q = m \ln \max\{1,  1/\gamma_2 \} + n\ln(1+\lambda), \qquad h = n,
  $$
  to fit into our framework. Note that $q = O_{\delta}(m)$, and in the hard-core model $q = O_{\delta}(n)$.

  For $\lambda$ in the upper interval, we modify the partition function to fit into our framework as follows:
  $$Z = \hat \lambda^n \cdot \tilde{Z} = \hat \lambda^n \cdot \sum_{\sigma \in \Omega} \gamma_1^{m_+(\sigma)} \gamma_2^{m_-(\sigma)} (1 /\hat \lambda)^{n - n_+(\sigma)}.$$
  Our goal is then to compute the partition function $\tilde{Z}$.
  Here, the inverse temperature is $\beta = \ln (1 / \hat \lambda)$ and the Hamiltonian is $H(\sigma) = n - n_+(\sigma)$. We set $\beta_{\min} = -\infty$ and $\beta_{\max} = \ln (1 / \lambda)$, and set $q = m \ln \max\{ 1/\gamma_1, \gamma_2/\gamma_1 \} + n \ln(1 + 1 / \lambda)$ and $h = n$.   Note again that $q = O_{\delta}(m)$.
  
  The algorithm's depth and the processor count follow by the framework in previous sections and the fact that basic arithmetic operations can be done in polylogarithmic depth.
\end{proof}

\paragraph{Monomer-dimer model.} Let $\mathcal{M}(G)$ denote the set of all matchings $M \subseteq E$ of $G$.     The \emph{monomer-dimer distribution} with constant edge weight $\lambda > 0$ is defined by setting
$$\mu_G(M) \propto \lambda^{|M|} \qquad \text{for $M \in \mathcal M(G)$.}
$$

The partition function of the monomer-dimer model is defined as $$
Z^{\text{MD}}_\lambda = \sum_{M \in \mathcal{M}(G)} \lambda^{|M|},
$$

The monomer-dimer model is also known as the weighted matching distribution. It can be viewed as an anti-ferromagnetic $2$-spin system with parameters $(0, 1, \lambda)$ on the line graph of $G$. Because of the special structure for matchings, the uniqueness regime is not needed for counting or sampling. Any value for the parameter $\lambda > 0$ can be used; we summarize the algorithms as follows.

\begin{theorem}[Formal version of \Cref{thm:monomer-dimer}]
\label{thm:monomer-dimer-stronger}
  For any $\lambda > 0$, there is an $\RNC$ algorithm to estimate the partition function of the monomer-dimer model with edge weight $\lambda$ within relative error $\eps$ with probability at least $0.7$, running in $O_{\lambda}(\Delta^4 \log^3 n \cdot \log (n / \eps))$ depth using $\tilde{O}_{\lambda}(m^2 \Delta / \eps^2)$ processors.
\end{theorem}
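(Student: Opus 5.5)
The plan mirrors the proof of \Cref{thm:anti-ferro-stronger}: combine the three-round annealing reduction of \Cref{mainth2} with a parallel approximate sampler for the monomer-dimer distribution. The sampler will be stated as a separate lemma, proved in the appendix, in the spirit of \Cref{lem:anti-ferro-sampler}. It should produce a sample within total variation distance $\eps'$ of $\mu_G$ at any edge weight $\hat\lambda\in(0,\lambda]$, in $O_\lambda(\Delta^4\log^3 n\cdot\log(n/\eps'))$ depth using $\tilde O_\lambda(m\Delta)$ processors. It may help to think of this as a parallel simulation of Glauber dynamics on the line graph, whose maximum degree is $O(\Delta)$. Since the monomer-dimer model has no phase transition, the mixing (spectral-independence) bounds hold for every $\hat\lambda\le\lambda$, with constants depending only on $\lambda$.

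For the annealing framework I write
\[
Z^{\text{MD}}_{\hat\lambda}=\sum_{M\in\mathcal M(G)}\ee^{(\ln\hat\lambda)|M|},
\]
that is, $F\equiv1$, $H(M)=|M|$, and $\beta=\ln\hat\lambda$. I take $\betamin=-\infty$, so that $Z(-\infty)=1$ (only the empty matching contributes), and $\betamax=\ln\lambda$. The Hamiltonian takes values in $\{0\}\cup[1,n/2]$, so I can take $h=n$. For $q$, I bound
\[
\log Q=\log Z^{\text{MD}}_\lambda\le\log\prod_{e\in E}(1+\lambda)=m\ln(1+\lambda),
\]
since every matching is a subset of $E$. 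Hence $q=O_\lambda(m)$; we may also assume $q,h\ge2$.

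By \Cref{mainth2}, the estimator uses three rounds and $O(q\eps^{-2}\log h)=O_\lambda(m\eps^{-2}\log n)$ samples in expectation, which is $N$ samples. Each sample is drawn with accuracy $\eps'=o(1/N)$, so $\log(n/\eps')=O(\log(n/\eps))$, and the combined error is absorbed into the success probability as noted after \Cref{mainth2}. The samples within a round are drawn in parallel. The schedule computation (\Cref{alg:non1} in $\NC$, and the pruning and refinement steps of \Cref{alg2}) and the final product in \Cref{alg:PPE-wrapper} need only polylogarithmic-depth arithmetic, prefix sums and products. The depth is therefore dominated by three sampler calls, giving $O_\lambda(\Delta^4\log^3n\cdot\log(n/\eps))$. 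The processor count is $N\cdot\tilde O(m\Delta)=\tilde O_\lambda(m^2\Delta/\eps^2)$.

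Two technical points remain. First, \Cref{mainth2} bounds the sample count only in expectation; I make it worst-case by truncating the run at ten times the expectation and charging the resulting $0.1$ failure probability via Markov's inequality. The constant $0.7$ then needs a slight adjustment, for example by taking $k$ a constant factor larger in ${\tt PPE}$. Second, and this is the main obstacle, the sampler lemma itself must be established. I would first try to cite an existing parallel Glauber or local-sampler result for matchings, and otherwise simulate $O_\lambda(n\Delta^{O(1)}\log n)$ Glauber steps in parallel batches. The hard part is making this work uniformly over all $\hat\lambda\le\lambda$ and at depth $\Delta^4\log^3 n$.
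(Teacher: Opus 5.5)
Your annealing reduction is correct and is essentially the paper's. You use the same parametrization: $F\equiv 1$, $H(M)=|M|$, $\beta=\ln\hat\lambda$, $\betamin=-\infty$, $\betamax=\ln\lambda$, and $q=m\ln(1+\lambda)=O_\lambda(m)$. The paper takes $h=O(m)$; your $h=n$ is equally valid. Your depth and processor accounting also matches: three rounds of sampler calls, each sample drawn to accuracy $\eps'=o(1/N)$ with $N=O_\lambda(m\eps^{-2}\log n)$. The Markov truncation you add for the expected sample count is a detail the paper leaves implicit, and it is fine.

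The genuine gap is the sampler, which you explicitly leave open. It is the only source of the specific bounds $\Delta^4\log^3 n$ and $\tilde O(m\Delta)$, so without it the quantitative content of the theorem is unproved. "Simulate Glauber steps in parallel batches" does not supply it. The missing idea (this is \Cref{lem:monomer-dimer} in the paper) is to combine two known results:

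\begin{enumerate}
\item \textbf{Mixing time.} The edge-update Glauber dynamics for the monomer-dimer model mixes in $T=O_\lambda(\Delta^3 m\log^2 n\cdot\log(1/\eps))$ steps (\Cref{theorem:mixing-time-monomer-dimer}). Note that this scales with $m$, the number of sites.
\item \textbf{Parallel simulation.} The Liu--Yin simulation of single-site dynamics (\Cref{theorem:parallel-single-site}) says the following. If the Dobrushin influence matrix satisfies $\|R\|_1\le\rho$, then $T$ steps on a system with $N$ sites can be simulated in $O(\rho(T/N+\log N)\log\Delta)$ depth. The processor count is near-linear in the size of the underlying graph.
\end{enumerate}

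On the line graph, with $N=m$ sites and $O(m\Delta)$ edges, the influence bound comes for free. We have $R_{f,e}\neq 0$ only when $e$ and $f$ share an endpoint. So each column has at most $2(\Delta-1)$ nonzero entries, each in $[0,1]$, and $\|R\|_1\le 2\Delta$ for every $\hat\lambda$. No uniqueness or decay-of-correlation argument is needed. This also removes your uniformity worry on the parallelization side: the only dependence on $\hat\lambda$ sits in the $O_\lambda$ constant of the mixing bound.

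The resulting depth is $O(\Delta\cdot (T/m)\cdot\log\Delta)=O_\lambda(\Delta^4\log^2 n\,\log\Delta\cdot\log(1/\eps))$, which is at most $O_\lambda(\Delta^4\log^3 n\cdot\log(1/\eps))$. The processor count is $\tilde O(m\Delta)$. This is exactly what your accounting assumed.
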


\begin{lemma}
  \label{lem:monomer-dimer}
  For the monomer-dimer model with $\lambda > 0$, there is an $\RNC$ sampler within total variation distance $\eps$, using $O_{\lambda}(\Delta^4 \log^3 n \cdot \log (1 / \eps))$ depth and $\tilde{O}_{\lambda}(m \Delta)$ processors.
\end{lemma}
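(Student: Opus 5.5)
We need an RNC sampler for the monomer-dimer distribution with $O_\lambda(\Delta^4\log^3 n\cdot\log(1/\eps))$ depth and $\tilde O_\lambda(m\Delta)$ processors. The plan is to parallelize the Jerrum--Sinnclair/Glauber-type chain via \emph{spectral independence plus field dynamics / block dynamics}, using the known result that the monomer-dimer model is spectrally independent for every $\lambda>0$ with constant $\eta=O(\sqrt{\lambda\Delta})$-type bounds (from the Heilmann--Lieb zero-freeness and the correlation decay of Bayati et al.). Since the monomer-dimer model is the hard-core model on the line graph $L(G)$, which has $m$ vertices and maximum degree at most $2\Delta-2$, I would work on $L(G)$ throughout. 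This is consistent with the processor bound $\tilde O(m\Delta)$, which is the size of $L(G)$ up to logs.

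\textbf{Step 1: mixing time.} I would invoke the known $O_\lambda(\Delta^{2}\cdot m\log m)$-type mixing bound for Glauber dynamics on matchings, via spectral independence with constant $O(\sqrt{1+\lambda\Delta})$ and entropy factorization. Equivalently, approximate tensorization of entropy holds with constant $C=O_\lambda(\Delta^{c})$ for some small power $c$. I expect the $\Delta^4$ in the depth to come from here, as the square of the $\Delta^2$-type factorization constant combined with the parallel overhead.

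\textbf{Step 2: parallel simulation.} I would replace sequential Glauber dynamics by a \emph{parallel systematic scan}. First compute, in $\NC$, a proper coloring of $L(G)$ with $O(\Delta)$ colors; this takes $O(\log^* n)$ or $O(\log n)$ depth via standard deterministic coloring. Color classes of $L(G)$ are independent sets in $L(G)$, so all their edges can be resampled simultaneously. Each class update is exact heat-bath: an edge $e$ is added with probability $\lambda/(1+\lambda)$ if both of its endpoints are unmatched, and removed otherwise. Approximate tensorization of entropy across the $O(\Delta)$ color blocks then gives a contraction of relative entropy per sweep, of the form $1-1/(C\Delta)$ with $C=\mathrm{poly}(\Delta)$. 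Starting from the empty matching, whose relative entropy is at most $\log(1/\mu(\emptyset))=O(m\log(1+\lambda))$ and hence $O_\lambda(n\Delta)$, the number of sweeps needed to reach TV distance $\eps$ is $O_\lambda(\mathrm{poly}(\Delta)\log(n/\eps))$.

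Each sweep consists of $O(\Delta)$ rounds of $O(\log n)$ depth with $O(m)$ processors, plus local neighbourhood checks costing $O(\Delta)$ work per edge. That gives the $\tilde O(m\Delta)$ processor count. Matching the exact stated depth exponent $\Delta^4\log^3 n$ may require a block-based version: for example, recursively dividing $L(G)$ and sampling within blocks via a tree-recursion or a self-reducibility subroutine, which is where the extra $\log^2 n$ would come from.

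\textbf{Main obstacle.} The hardest step is the entropy contraction for the parallel color-class scan with a constant polynomial in $\Delta$ and independent of $n$ and of $\lambda$'s range. I would derive it from $\eta$-spectral independence, via the Chen--Liu--Vigoda local-to-global theorem for $k$-partite factorization, combined with the marginal lower bound $b=\Omega(1/(1+\lambda\Delta))$ for matchings. If a clean factorization over color classes is unavailable, my fallback is to simulate $T=O_\lambda(\Delta^{O(1)}m\log(n/\eps))$ steps of \emph{random-update} Glauber dynamics in parallel. In that approach I would resolve the dependencies between updates on adjacent edges using the standard ``update sequences have short dependency chains'' argument: with high probability every dependency chain has length $O(\Delta\cdot T/m+\log n)$. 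Each chain would then be computed by pointer-jumping or by iterated simulation, at the cost of extra $\log n$ factors in the depth.
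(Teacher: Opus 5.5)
Your main route has a genuine gap at exactly the step you flag as the main obstacle. You need per-sweep entropy contraction for the color-class scan on $L(G)$ with a constant polynomial in $\Delta$, and the tools you name do not give it. The spectral-independence local-to-global theorems (including the $k$-partite version) give factorization constants of the form $(\Delta/b)^{O(1+\eta/b)}$. For matchings at constant $\lambda$ the marginal bound is $b=O(1/\Delta)$: pin everything outside a star $K_{1,\Delta}$, and each star edge has marginal $\lambda/(1+\Delta\lambda)$. So this constant is not polynomial in $\Delta$.

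A single-site Glauber mixing bound does not rescue the scan either. Generic Dirichlet-form comparison only shows that one color-class update is at least as good as one single-site update, which loses exactly the $m/\Delta$ factor you want to gain. The same objection applies to your Step 1: an $O_\lambda(\Delta^2 m\log m)$ Glauber bound is not what spectral independence with $\eta=O(\sqrt{1+\lambda\Delta})$ plus entropy factorization delivers. As a result you never pin down an exponent. Your account of the stated form (the $\Delta^4$ as a squared $\Delta^2$ constant, the extra $\log^2 n$ from block recursion or self-reducibility) also does not match where those factors actually come from.

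The missing ingredient is a concrete mixing bound for single-site Glauber dynamics on matchings, $T=O_\lambda(\Delta^3 m\log^2 n\cdot\log(1/\eps))$. The paper imports this from a direct analysis (\Cref{theorem:mixing-time-monomer-dimer}), not from spectral independence. It then simulates these $T$ steps on $L(G)$, which has $m$ vertices and $O(m\Delta)$ edges, via \Cref{theorem:parallel-single-site}. Each column of the Dobrushin matrix has at most $2(\Delta-1)$ nonzero entries in $[0,1]$, so $\|R\|_1\le 2\Delta$. The depth is then $O(\Delta(T/m+\log m)\log\Delta)=O_\lambda(\Delta^4\log^2 n\log\Delta\cdot\log(1/\eps))$ with $\tilde O(m\Delta)$ processors. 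So $\Delta^4$ is $\Delta^3$ from the mixing time times $\Delta$ from the influence bound, and $\log^3 n$ is $\log^2 n$ from the mixing time times $\log\Delta\le\log n$.

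Your fallback is essentially this argument. Here $\rho=2\Delta$ is just the dependency degree of $L(G)$, so your chain-length bound $O(\Delta T/m+\log n)$ is the same quantity. It becomes a proof once you plug in this $T$ and evaluate the dependency DAG by iterated fixpoint rounds over windows of $\Theta(m)$ consecutive updates, each round costing $O(m\Delta)$ work and $O(\log\Delta)$ depth. Evaluating all $T$ updates at once would need about $T\Delta$ processors. Pointer jumping does not apply, because each update reads up to $2(\Delta-1)$ predecessors rather than a single one.
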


 The proof of \Cref{thm:monomer-dimer-stronger} is identical to that of \Cref{thm:anti-ferro-stronger} by applying \Cref{lem:monomer-dimer}, where the annealing process is from $\hat \lambda = 0$ to $\hat \lambda = \lambda$ and we set $q = O_{\lambda}(m), h = O(m)$. We defer the proof of \Cref{lem:monomer-dimer} to \Cref{appendix:anti-ferro-sampler-monomer-dimer}.

\subsection{Ferromagnetic Ising model}

The \emph{ferromagnetic Ising model with external field} is defined over configurations $\sigma \in \{-1,+1 \}^V$, in terms of a vector of edge activities $\gamma \in [1 + \delta, + \infty)^E$ and a vector of vertex activities $\lambda \in (0,1-\delta]^V$. In this context, we refer to $\delta$ as the \emph{slack}. It assigns each configuration a weight
$$
\mu^{\text{Ising}}_{\gamma, \lambda} (\sigma) \propto \Bigl( \prod_{v: \sigma(v) = +1} \lambda_v \Bigr) \Bigl(  \prod_{\substack{e = (u,v) \in E: \\ \sigma(u) = \sigma(v)}} \gamma_e \Bigr).
$$

We write the corresponding Gibbs distribution and partition function as $\mu^{\text{Ising}}_{\gamma, \lambda},  Z^{\text{Ising}}_{\gamma, \lambda}$. There is a closely related model called the \emph{random cluster model} with parameters $(p, \lambda)$, which is defined by assigning each edge subset $S \subseteq E$ a weight
$$w^{\text{RC}}_{p, \lambda}(S) = \prod_{e \in S} p_e \prod_{e \in E \setminus S} (1 - p_e) \prod_{C \in \kappa(V, S)} (1 + \prod_{v \in C} \lambda_v),$$
where $\kappa(V, S)$ is the set of connected components in the subgraph $(V, S)$. The distribution $\mu^{\text{RC}}_{p, \lambda}$ over $S \subseteq E$ is defined as
$\mu^{\text{RC}}_{p, \lambda}(S) = w^{\text{RC}}_{p, \lambda}(S)/Z^{\text{RC}}_{p, \lambda}$, where $Z^{\text{RC}}_{p, \lambda} = \sum_{S \subseteq E} w^{\text{RC}}_{p, \lambda}(S)$ is the partition function.

It is well-known (see e.g. \cite[Proposition~2.1]{FGW23}) that the ferromagnetic Ising model with edge activities $\gamma$ and vertex activities $\lambda$ is equivalent to the random cluster model with parameters $(p, \lambda)$ such that $p_e = 1 - 1 / \gamma_e$ for each edge $e \in E$ and $Z^{\text{Ising}}_{\gamma, \lambda} = Z^{\text{RC}}_{p, \lambda} \cdot \prod_{e \in E} \gamma_e$.

\begin{theorem}[Formal version of \Cref{thm:ferro-ising}]
\label{thm:ferro-ising-stronger}
  For the ferromagnetic Ising model with slack $\delta$, there is an $\RNC$ algorithm to estimate the partition function within relative error $\eps$ with probability at least $0.7$, running in $O_{\delta} ( \log(n/\eps)^{O(1)})$ depth using $\tilde{O}_{\delta}(nm^2 / \eps^2)$ processors.
\end{theorem}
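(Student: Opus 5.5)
The plan is to mirror the proof of \Cref{thm:anti-ferro-stronger}. First I will fix an $\RNC$ sampler for the random cluster model $\mu^{\text{RC}}_{p,\hat\lambda}$ (or directly for the Ising model) with slack $\delta$. It should run in $O_\delta(\log(n/\eps)^{O(1)})$ depth with $\tilde O_\delta(m)$ processors, presumably proved in an appendix as an analogue of \Cref{lem:anti-ferro-sampler}. Then I reduce counting to annealing along the external field, plug the sampler into the three-round framework of \Cref{mainth2}, and read off depth and work.

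For the annealing, I use the identity $Z^{\text{Ising}}_{\gamma,\lambda} = Z^{\text{RC}}_{p,\lambda}\prod_e \gamma_e$, so it suffices to estimate $Z^{\text{RC}}$ (or $Z^{\text{Ising}}$ directly). Since $\lambda$ is a vector, I interpolate uniformly by setting $\lambda_v(\beta) = \lambda_v \ee^{\beta}$ for $\beta \in (-\infty, 0]$. Then $Z^{\text{Ising}}(\beta) = \sum_\sigma F(\sigma)\ee^{\beta n_+(\sigma)}$, with Hamiltonian $H(\sigma)=n_+(\sigma)$ and weight $F(\sigma)=\prod_{\sigma(v)=+1}\lambda_v \prod_{\text{mono } e}\gamma_e$. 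Every intermediate field $\lambda\ee^\beta \le 1-\delta$ keeps slack $\delta$, so the sampler applies at every query point. I take $\betamin=-\infty$, where only the all-minus configuration survives and $Z(-\infty)=\prod_e\gamma_e$ is computable exactly. I take $\betamax=0$ and $h=n$. For $q$, I bound $\ln Q \le \ln \sum_\sigma \prod_{\sigma(v)=+}\lambda_v \le n\ln 2$, because monochromatic weights only decrease when compared to the all-minus configuration. That comparison fails when some edge becomes bichromatic, so I will instead use the crude bound $q = n\ln 2 + \sum_e \ln\gamma_e = O_\delta(m)$ if the $\gamma_e$ are bounded, or $q = O(nm)$-type in general. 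The target processor count $\tilde O(nm^2)$ suggests the intended sample complexity is $q\log h/\eps^2$ samples times $\tilde O(m)$ processors each, which corresponds to $q = \tilde O(nm)$, or $q=O(m)$ with a sampler costing $\tilde O(nm)$.

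For the parameters, the sample count of \Cref{mainth2} is $O(q\eps^{-2}\log h)$. All samples within a round are drawn in parallel, and there are three rounds, so the depth equals the sampler depth plus $\polylog$ overhead for \texttt{StaticSchedule}, the PPE arithmetic and the product. Total processors are the number of samples times the per-sample processors, which is $\tilde O_\delta(nm^2/\eps^2)$. The sampler accuracy must be $o(1/N)$, which only adds a $\log(n/\eps)$ factor inside the depth.

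I expect the main obstacle to be the parallel sampler for the ferromagnetic Ising/random cluster model with fields bounded away from $1$. The route I would try first is a parallel simulation of field dynamics or of random-cluster Glauber dynamics, using known $\polylog$-depth mixing under the slack-$\delta$ field condition. Matching the exact processor bound and choosing $q$ consistently with it is the other delicate point.
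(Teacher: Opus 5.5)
Your parametrization ($H(\sigma)=n_+(\sigma)$, fields $\lambda_v\ee^{\beta}$, $\betamin=-\infty$, $\betamax=0$, $h=n$) is valid. It is even slightly simpler than the paper's $H(\sigma)=\eta\sum_{\sigma_v=+1}\ln(1/\lambda_v)$, which needs \Cref{marginal-lemma} to bound $h$. The first gap is in $q$, where you abandon the correct bound for a wrong reason. Because $\gamma_e\ge 1+\delta>1$, the weight of $\sigma$ divided by the all-minus weight $\prod_e\gamma_e$ equals $\prod_{\sigma_v=+1}\lambda_v\prod_{e\text{ bichromatic}}\gamma_e^{-1}\le\prod_{\sigma_v=+1}\lambda_v$, so bichromatic edges only decrease it. Hence $Q=Z^{\text{Ising}}_{\gamma,\lambda}/\prod_e\gamma_e=Z^{\text{RC}}_{p,\lambda}\le\prod_v(1+\lambda_v)\le 2^n$, and $q=n$ is a valid choice. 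Your replacement $q=n\ln 2+\sum_e\ln\gamma_e$ does not work. The edge activities are only bounded below, $\gamma_e\in[1+\delta,\infty)$, so $\sum_e\ln\gamma_e$ is not bounded by any function of $n,m,\delta$: it is neither $O_\delta(m)$ nor ``$O(nm)$-type''. The sample complexity would then depend on $\gamma$, which the statement does not allow.

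The second gap is the sampler cost, which you never pin down. You posit an $\tilde O_\delta(m)$-processor sampler that nothing supplies. The sampler the paper uses (\Cref{lem:ferro-ising-sampler}) needs $\tilde O_\delta(m^2)$ processors. It runs field dynamics with $\theta=\Theta_\delta(1/\log(n/\eps))$ for polylogarithmically many steps, and each step is a high-$p$ random-cluster instance simulated by \texttt{ParallelRC} on $\tilde O(m^2)$ processors. The accounting that actually yields the theorem goes as follows. Take $q=n$ and $h=O_\delta(n)$, so \Cref{mainth2} uses $O(n\log n/\eps^2)$ samples over three rounds. Each sample is drawn in $O_\delta(\polylog(n/\eps))$ depth with $\tilde O_\delta(m^2)$ processors, for $\tilde O_\delta(nm^2/\eps^2)$ processors in total. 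Neither of your reverse-engineered splits is justified: $q=\tilde O(nm)$ with an $\tilde O(m)$ sampler, or $q=O(m)$ with an $\tilde O(nm)$ sampler. As written, the proposal leaves this reconciliation open.
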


\begin{lemma}
  \label{lem:ferro-ising-sampler}
  For the ferromagnetic Ising model with slack $\delta$, there is an $\RNC$ sampler for the distribution $\mu^{\text{Ising}}_{\gamma, \lambda}$ within $\eps$ total variation distance in $O_{\delta}(\polylog(n / \eps))$ depth using $\tilde{O}_{\delta}(m^2)$ processors.
\end{lemma}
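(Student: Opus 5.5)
We build the sampler for $\mu^{\text{Ising}}_{\gamma,\lambda}$ through the random cluster representation: sample a subset $S\sim\mu^{\text{RC}}_{p,\lambda}$ with $p_e = 1-1/\gamma_e$, then translate it into an Ising configuration. The translation step is standard. Given $S$, each connected component $C$ of $(V,S)$ is assigned spin $+1$ with probability $\prod_{v\in C}\lambda_v/(1+\prod_{v\in C}\lambda_v)$ and spin $-1$ otherwise, independently across components. Expanding $\gamma_e = 1 + (\gamma_e-1)$ in the Ising weight and summing over configurations consistent with $S$ shows that this joint law has the correct Ising marginal. In parallel, this step is a connected-components computation on $(V,S)$, which takes $O(\log n)$ depth with $O(m)$ processors, followed by computing a product over each component by a parallel prefix or reduction and one coin flip per component. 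All of it is polylogarithmic depth with $\tilde O(m)$ work, so the total variation error is inherited unchanged from the random cluster sampler.

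For the random cluster sampler we run a Markov chain that converges rapidly under slack $\delta$ and whose steps can be parallelized. The natural choice is the Glauber or edge-flip dynamics for $\mu^{\text{RC}}_{p,\lambda}$, or alternatively the Swendsen--Wang style joint dynamics. With $\lambda_v\le 1-\delta$ and $\gamma_e\ge 1+\delta$, this chain is known (e.g.\ from \cite{FGW23}) to have mixing time $\tilde O_\delta(m)$ or better. We would make it parallel in one of two ways.

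\begin{enumerate}[label=(\roman*)]
\item Use the Swendsen--Wang step directly. One round consists of resampling spins given edges, which is component labelling, and then resampling edges given spins, which is independent per edge. Each round therefore has polylogarithmic depth and $\tilde O(m)$ work. It remains to show that $O_\delta(\polylog(n/\eps))$ rounds suffice, via a coupling or contraction argument in the field-dominated regime.
\item Simulate $\tilde O_\delta(m)$ steps of single-edge dynamics in polylogarithmic depth, using $\tilde O(m)$ processors per parallel phase. This requires batching updates; each edge update queries only whether its endpoints are connected in $S\setminus e$ and the product of $\lambda$ over the affected components. Over the run this costs $\tilde O_\delta(m)\cdot\tilde O(m)=\tilde O_\delta(m^2)$ work, which matches the claimed processor count.
\end{enumerate}

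The processor bound $\tilde O_\delta(m^2)$ points toward route (ii) with dynamic connectivity recomputed in parallel. Concretely, we would group the $\tilde O_\delta(m)$ sequential updates into $\polylog$ depth using a parallel simulation technique: each update's acceptance depends only on a connectivity query, so all queries can be answered against a recomputed spanning forest.

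The main obstacle is the rapid-mixing or contraction estimate that makes the number of sequential rounds polylogarithmic. For Swendsen--Wang we would try to prove one-step contraction under the coupling in which spins are resampled with a shared uniform variable per component. The argument would be that the external field $\lambda_v\le 1-\delta$ biases every component toward $-1$ with probability bounded away from $1/2$, so discrepancies shrink by a constant factor each round. This would give $O_\delta(\log(n/\eps))$ rounds, each of polylogarithmic depth. Should this contraction fail in general, the fallback is to cite the known polynomial mixing of the random cluster dynamics under slack $\delta$ and absorb the resulting sequential length into the $\tilde O_\delta(m^2)$ processors via the batching in (ii).
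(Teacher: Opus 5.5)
Your Edwards--Sokal step is correct and matches the paper's reduction to the random cluster model: each component $C$ of $(V,S)$ is colored independently, $+1$ with probability $\prod_{v\in C}\lambda_v/(1+\prod_{v\in C}\lambda_v)$. The gap is in the random cluster sampler, which is the whole content of the lemma: neither of your routes actually gives polylogarithmic depth.

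In route (ii), answering all connectivity and component-product queries of a batch against one recomputed spanning forest does not reproduce the sequential Glauber transition law. Each answer depends on the outcomes of the other updates in the same batch. For example, suppose two edges $e,f$ lie on a cycle of $X$. The snapshot tells each of them that its endpoints are connected without it, so each is kept with probability $p$, independently of the other. Sequentially, once $e$ is removed, $f$ becomes a bridge, and its conditional inclusion probability is no longer $p_f$. For generic $p$ (say $\gamma_e=1+\delta$, so $p_e=\delta/(1+\delta)$), nothing makes such conflicts rare. Because the dependence runs through global connectivity, there is also no evident bounded-influence structure to fall back on, as in the two-spin case.

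Your fallback of ``absorbing'' a polynomial sequential length into processors via batching has exactly the same problem. Even a $\tilde O_\delta(m)$ sequential mixing time only yields $\tilde O(m)$ depth unless the parallel simulation is justified.

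Route (i) rests on a one-step Swendsen--Wang contraction that you do not prove, and the sketched coupling does not supply it. A single spin discrepancy changes which incident edges survive the edge-resampling step. That can merge or split large clusters differently in the two copies, so discrepancies can spread by a factor depending on $\Delta$ and $p$. The per-vertex bias $\lambda_v\le 1-\delta$ does not by itself control this.

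The idea you are missing is to split the work between two chains, each tractable for a different reason.
\begin{itemize}
\item The outer chain is the field dynamics with $\theta=C_1/\log(n/\eps)$. By \Cref{lem:field-dynamics-mixing} (from \cite{fengRapidMixingGlauber2025}), started from the all-edge configuration it needs only $T=O_\delta(\theta^{-2\cdot 10^5}\log(n/\eps))=O_\delta(\polylog(n/\eps))$ steps.
\item Each outer step asks for a sample from $\mu^{\text{RC}}_{p^*,\lambda}$ on $(V,S\cup X)$ with $1-p^*_{\min}\le(1-\lambda_{\max})/(10^{27}\log(n/\eps))$, i.e.\ edges are present with probability extremely close to $1$.
\item Precisely in this regime, \Cref{lem:parallel-glauber-rc} (adapted from \cite{chen2025efficient}) shows that $O(m\log(m/\eps))$ Glauber steps suffice. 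These steps can be simulated faithfully in $O(\log^2(m/\eps))$ depth with $\tilde O(m^2)$ processors, because with high probability the configuration keeps an edge across every high-expansion cut, which makes the connectivity answers stable under concurrent updates.
\item Running the $T$ outer steps sequentially, with inner error $\eps/(2T)$ each, gives polylogarithmic depth, $\tilde O_\delta(m^2)$ processors and total error $\eps$.
\end{itemize}
Without the field dynamics, or some other device that drives $p$ to $1-O(1/\log(n/\eps))$, you have no valid polylog-depth simulation of the random cluster model.
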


We prove \Cref{thm:ferro-ising-stronger} assuming \Cref{lem:ferro-ising-sampler}, whose proof is deferred to \Cref{appendix:ferro-ising-sampler}.  Note that a previous parallel sampler is given in \cite{chen2025efficient};  we improve the depth of the algorithm by removing the dependence of the exponent of $\log n$ on the slack $\delta$.

\begin{proof}[Proof of \Cref{thm:ferro-ising-stronger}]
  We rewrite the partition function:
  $$Z(\beta) = \sum_{\sigma \in \Omega} F(\sigma) \exp \Bigl(\beta \cdot \eta \sum_{v: \sigma_v = + 1} \ln (1 / \lambda_v) \Bigr), \qquad \text{where $\eta = -1 / \ln (1 - \delta)$.}
  $$
  
  To fit into our framework, $\Omega = \set{-1, +1}^V$ is the universe, $\beta$ is the inverse temperature, the Hamiltonian function is given by $H(\sigma) = \eta \sum_{v \in V, \sigma_v = + 1} \ln (1/\lambda_v)$ and $F(\sigma) = \prod_{(u, v) \in E, \sigma_u = \sigma_v} \gamma_{u, v}$ is the weight function. The constant $\eta$ is to make sure the Hamiltonian takes value in $\set{0} \cup [1, \infty)$.

  Our target partition function is $Z(-1/\eta)$. It is trivial for $\beta = -\infty$, where $Z(-\infty) = \prod_{(u, v) \in E} \gamma_{u, v}$.
  We set $\beta_{\min} = -\infty$ and $\beta_{\max} = -1/\eta$.
  We take $q = n$ and $h = \max\{2, n \eta/\ee \}$; to justify the latter bound, we calculate:
  \begin{align*}
    \E_{\sigma \sim \mu_{\betamax}} [ H(\sigma) ] &= \sum_{v \in V} \eta \ln(1/\lambda_v) \Pr{ \sigma_v = +1 | \sigma \sim \mu^{\text{Ising}}_{\gamma, \lambda}}
    \end{align*}

    As we show in \Cref{marginal-lemma} in \Cref{appendix:ferro-ising-sampler}, the Ising model satisfies the bound $$
    \Pr{ \sigma_v = +1 | \sigma \sim \mu^{\text{Ising}}_{\gamma, \lambda}} \leq \lambda_v
    $$
    for each $v$ and hence
\begin{align*}
\E_{\sigma \sim \mu_{\betamax}} [ H(\sigma) ] &\leq \sum_{v \in V} \eta \ln(1/\lambda_v) \lambda_v \leq n \eta / \ee
  \end{align*}
  using the elementary inequality $x \log(1/x) \leq 1/\ee$ for any $x \in [0,1]$.

  The algorithm's depth and processor count follow from the framework in previous sections and the fact that all basic arithmetic operations can be done in polylogarithmic depth.
\end{proof}

\appendix

\section{Facts about the Partition Ratio Function for Gibbs Distributions}
\label{fact-appendix}

\begin{fact}
  \label{zprimeprop}
  The derivative of $z$ is given by $z'(\beta) = \mathbb E_{X \sim \mu_{\beta}}[X]$. It is an increasing function in $\beta$.
\end{fact}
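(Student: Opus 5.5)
The plan is to differentiate $z(\beta) = \log Z(\beta)$ directly, using $Z(\beta) = \sum_x c_x \ee^{\beta x}$, and then differentiate once more to get monotonicity. This is a finite sum (or, more generally, a sum with finitely many nonzero $c_x$, since $\Omega$ is finite), so termwise differentiation is legitimate and $Z$ is smooth and strictly positive on the relevant range.

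First I compute $Z'(\beta) = \sum_x x c_x \ee^{\beta x}$. Then, by the chain rule,
$$
z'(\beta) = \frac{Z'(\beta)}{Z(\beta)} = \sum_x x \cdot \frac{c_x \ee^{\beta x}}{Z(\beta)} = \sum_x x \, \mu_\beta(x) = \E_{X\sim\mu_\beta}[X].
$$

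For monotonicity, I differentiate again:
$$
z''(\beta) = \frac{Z''(\beta)}{Z(\beta)} - \Bigl(\frac{Z'(\beta)}{Z(\beta)}\Bigr)^2 = \E_{\mu_\beta}[X^2] - \E_{\mu_\beta}[X]^2 = \Var_{X\sim\mu_\beta}[X] \geq 0.
$$
Hence $z'$ is nondecreasing; this is the sense of ``increasing'' in the statement. The inequality is strict unless $\mu_\beta$ is a point mass, in which case $\mu_\beta$ is a point mass for every finite $\beta$, because its support $\{x : c_x>0\}$ does not depend on $\beta$.

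There is no real obstacle here. The only point needing care is the convention at $\beta=-\infty$: there $\mu_{-\infty}$ is the point mass at $0$, and the identity $z'(\beta)=\E[X]$ is meant for finite $\beta$. Since every $x$ in the support satisfies $x\ge 0$, $z'(\beta)\to 0$ as $\beta\to-\infty$ whenever $c_0>0$, which is consistent with that convention.
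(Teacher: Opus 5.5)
Your proof is correct and matches the paper's: compute $z'(\beta) = Z'(\beta)/Z(\beta) = \mathbb{E}_{X\sim\mu_\beta}[X]$ directly, then note that $z''(\beta)$ equals the variance of $X$ under $\mu_\beta$ and is therefore nonnegative. Your additional remarks on when the monotonicity is strict and on consistency with the $\beta=-\infty$ convention are accurate but go beyond what the paper proves.
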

\begin{proof}
  The formula $z'(\beta) = \frac{\sum_{x \in \mathbb R} c_x \cdot x \ee^{\beta x}}{Z(\beta)} = \mathbb E_{X \sim \mu_{\beta}}[X]$ can be verified by direct computation. On the other hand, note that $z''(\beta) = {\mathbb V}\text{ar}_{X \sim \mu_\beta}[X]$, which is nonnegative.
\end{proof}

\begin{fact}
  \label{derivprop}
  For any $\beta \in [\betamin, \betamax)$, it holds that $z'(\beta) \leq \min\{h, q/(\betamax - \beta)\}$. Moreover, $z'(\betamax) \leq h$.
\end{fact}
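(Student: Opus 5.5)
We need to prove Fact \ref{derivprop}: for $\beta\in[\betamin,\betamax)$, $z'(\beta)\le\min\{h,\,q/(\betamax-\beta)\}$, and $z'(\betamax)\le h$. The tools are the convexity of $z$ and the monotonicity of $z'$ from \Cref{zprimeprop}, combined with the two standing assumptions $\log Q\le q$ and $\E_{\mu_{\betamax}}[H]\le h$.

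First, $z'(\betamax)=\E_{X\sim\mu_{\betamax}}[X]\le h$ by \Cref{zprimeprop} and the hypothesis on $h$; this is the second claim. For $\beta<\betamax$, monotonicity of $z'$ (also from \Cref{zprimeprop}) gives $z'(\beta)\le z'(\betamax)\le h$.

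For the bound $q/(\betamax-\beta)$, I use convexity: since $z'$ is increasing on $[\beta,\betamax]$,
\[
(\betamax-\beta)\,z'(\beta)\le\int_{\beta}^{\betamax}z'(y)\,\mathrm{d}y=z(\beta,\betamax).
\]
Then $z(\beta,\betamax)=z(\betamin,\betamax)-z(\betamin,\beta)\le \log Q\le q$. This uses $z(\betamin,\beta)\ge 0$, which holds because $z'\ge 0$: indeed $H\ge 0$ makes $z'(y)=\E[X]\ge0$. Dividing by $\betamax-\beta>0$ gives $z'(\beta)\le q/(\betamax-\beta)$.

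The only delicate point is the case $\betamin=-\infty$. Here $\beta$ is always finite, since we need $\beta<\betamax$ with $\betamax-\beta$ finite for the bound to be meaningful. We still have $z(\beta)\ge z(-\infty)=\log c_0$, by monotonicity together with $Z(\beta)\ge c_0$ for all $\beta$ (as the term $x=0$ contributes $c_0$ to $Z(\beta)$). So $z(\beta,\betamax)\le z(-\infty,\betamax)=\log Q\le q$ in this case as well, and the argument is otherwise unchanged.
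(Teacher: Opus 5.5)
Your proposal is correct and follows the paper's own argument. You bound $z'(\beta)\le z'(\betamax)\le h$ by monotonicity of $z'$, and then use monotonicity of $z'$ again to get $(\betamax-\beta)\,z'(\beta)\le z(\beta,\betamax)\le z(\betamin,\betamax)\le q$. You also make explicit two points the paper leaves implicit: that $z$ is nondecreasing because $H\ge 0$, and the $\betamin=-\infty$ case, handled via $Z(\beta)\ge c_0$.
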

\begin{proof}
  Since $z'(\beta)$ is the expectation of $X \sim \mu_{\beta}$, and by our definition of $h$, we clearly have $z'(\beta) \leq h$. Also, by hypothesis, $q \geq z(\betamax) - z(\betamin)$. Since $\beta \geq \betamin$ this implies that $z(\beta, \betamax) \leq q$. Since $z'$ is increasing, this implies that $z'(\beta) \leq \frac{q}{\betamax - \beta}$ for any $\beta \in [\betamin, \betamax)$.
\end{proof}

\begin{fact}
  \label{expprop}
  For any values $\alpha, \beta$, we have 
  $\E_{X \sim \mu_{\beta}}[\ee^{\alpha X} ] = Z(\alpha + \beta) / Z(\beta).$
\end{fact}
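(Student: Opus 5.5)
This identity follows directly from the definitions of the gross Gibbs distribution $\mu_\beta(x) = c_x \ee^{\beta x}/Z(\beta)$ and the partition function $Z(\beta) = \sum_x c_x \ee^{\beta x}$; no earlier result is needed. The approach is to write the expectation as a sum over the support of $\mu_\beta$, absorb the factor $\ee^{\alpha x}$ into the exponential weight, and recognize the resulting numerator as a partition function at the shifted parameter.

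Concretely, I would compute
\begin{align*}
\E_{X \sim \mu_\beta}\bigl[\ee^{\alpha X}\bigr] = \sum_x \ee^{\alpha x} \cdot \frac{c_x \ee^{\beta x}}{Z(\beta)} = \frac{1}{Z(\beta)} \sum_x c_x \ee^{(\alpha+\beta) x} = \frac{Z(\alpha+\beta)}{Z(\beta)}.
\end{align*}
The sum ranges over the finitely many values $x \in H(\Omega)$ with $c_x > 0$, since $\Omega$ is finite. Consequently every sum is finite, and no convergence or interchange issue arises for any real $\alpha, \beta$.

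The only point needing care is the boundary convention $\beta = -\infty$, where $\mu_{-\infty}$ is the point mass at $0$ and $Z(-\infty) = c_0$. I would treat this case separately. The left side equals $\ee^{0} = 1$. On the right, for finite $\alpha$ we have $\alpha + \beta = -\infty$, so the ratio is $Z(-\infty)/Z(-\infty) = 1$ as well, assuming $c_0 > 0$ as implicitly required for $\mu_{-\infty}$ to be defined. There is no real obstacle: the statement is essentially the definition of the moment generating function of $\mu_\beta$, expressed as a ratio of partition functions.
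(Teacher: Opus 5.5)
Your proof is correct and is exactly the paper's argument: expand the expectation as a sum against $\mu_\beta(x) = c_x \ee^{\beta x}/Z(\beta)$, combine the exponents, and recognize $Z(\alpha+\beta)$ in the numerator. Your separate check of the convention $\beta = -\infty$ (assuming $c_0 > 0$) is a harmless addition that the paper does not spell out.
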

\begin{proof}
  The proof is a straightforward calculation:
  \[
  \E[\ee^{\alpha X}] = \sum \ee^{\alpha x} \cdot \frac{c_x \ee^{\beta x}}{Z(\beta)} = \frac{ \sum_x c_x \ee^{(\alpha + \beta) x} }{Z(\beta)} = \frac{Z(\alpha + \beta)}{Z(\beta)}. \qedhere
  \]
\end{proof}

\begin{fact}
  \label{zprimprop1}
  If $z'(\beta) \leq 1/2$, then $z(-\infty,\beta) \leq 2 z'(\beta)$.
\end{fact}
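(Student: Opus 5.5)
We want: if $z'(\beta)\le 1/2$ then $z(-\infty,\beta)=\log(Z(\beta)/c_0)\le 2z'(\beta)$. The plan is to use the support assumption $H(\Omega)\subseteq\{0\}\cup[1,\infty)$, which says every nonzero value $x$ with $c_x>0$ satisfies $x\ge 1$, and to compare the probability of $X\neq 0$ under $\mu_\beta$ with its mean.

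\emph{Step 1 (mean bounds nonzero mass).} Set $p=\Pr{X\neq 0}$ for $X\sim\mu_\beta$. Since $X\ge 1$ whenever $X\ne 0$, we get $p\le \E[X]=z'(\beta)$ by \Cref{zprimeprop}. In particular $p\le 1/2$.

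\emph{Step 2 (express $z(-\infty,\beta)$ via $p$).} We have $\mu_\beta(0)=c_0/Z(\beta)=1-p$. Hence
\[
z(-\infty,\beta)=\log\frac{Z(\beta)}{c_0}=-\log(1-p).
\]
Note that $c_0>0$ is needed here. This holds in the intended setting, since $Z(-\infty)=c_0$ is the normalizer. It is also forced by the hypothesis: if $c_0=0$ then $p=1>1/2\ge z'(\beta)$, contradicting Step 1.

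\emph{Step 3 (elementary inequality).} For $p\in[0,1/2]$ we have $-\log(1-p)\le 2p$. To see this, note the function $2p+\log(1-p)$ vanishes at $0$ and has derivative $2-1/(1-p)\ge 0$ on $[0,1/2]$. Combining the three steps gives $z(-\infty,\beta)\le 2p\le 2z'(\beta)$.

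None of these steps is a real obstacle. The only points needing care are noticing that the gap in the support (no values in $(0,1)$) is exactly what makes Step 1 work, and handling the degenerate case $c_0=0$, which the hypothesis excludes.
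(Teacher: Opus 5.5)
Your proof is correct and is essentially the paper's argument: the paper's bound $Z(\beta) \le c_0 + z'(\beta) Z(\beta)$ is your Step 1 ($p = 1 - c_0/Z(\beta) \le z'(\beta)$) multiplied through by $Z(\beta)$, and the paper finishes with $-\log(1-z'(\beta)) \le 2 z'(\beta)$, applying the elementary inequality at $z'(\beta)$ rather than at $p$ as you do. Your explicit exclusion of the case $c_0 = 0$ is a small point that the paper leaves implicit.
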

\begin{proof}
  Here $Z(-\infty) = c_0$, and $Z(\beta) = c_0 + \sum_{x \neq 0} c_x \ee^{\beta x}$.
  We have assumed that $\mu_{\beta}$ is supported on $\{0 \} \cup [1,\infty)$.
  From \Cref{zprimeprop}, this implies that
  $$
  Z(\beta) \leq c_0 + \sum_{x} x c_x \ee^{\beta x}
  = c_0 + z'(\beta) Z(\beta).
  $$

  So $Z(\beta) \leq \frac{c_0}{1 - z'(\beta)}$ and $z(-\infty, \beta) \leq -\log(1-z'(\beta))$; since $z'(\beta) \leq 1/2$, this is at most $2 z'(\beta)$.
\end{proof}

\begin{fact}[\cite{Huber:Gibbs}]
  \label{kappaa1}
  For $-\infty < \beta_1 < \beta_2$ the curvature $\kappa$ as defined in \Cref{section:PPE} satisfies $$\kappa(\beta_1, \beta_2) \leq z(\beta_1, \beta_2) \cdot \min \{1, \log( z'(\beta_2) / z'(\beta_1) ) \}.$$
\end{fact}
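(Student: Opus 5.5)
The plan is to write the curvature as a difference of two adjacent half-interval increments and bound it in two ways: once trivially, to get the factor $1$, and once via an elementary logarithmic inequality, to get the factor $\log(z'(\beta_2)/z'(\beta_1))$.

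Let $m=(\beta_1+\beta_2)/2$ and $\delta=(\beta_2-\beta_1)/2$. Then
$$\kappa(\beta_1,\beta_2)=z(m,\beta_2)-z(\beta_1,m).$$
By \Cref{zprimeprop}, $z'(\beta)=\E_{X\sim\mu_\beta}[X]\ge 0$, since $X$ is supported on $\{0\}\cup[1,\infty)$. So $z$ is nondecreasing, and $z'$ is nondecreasing by the same fact.

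\emph{The bound by $z(\beta_1,\beta_2)$.} Since $z(\beta_1,m)\ge 0$, we get $\kappa(\beta_1,\beta_2)\le z(m,\beta_2)\le z(\beta_1,\beta_2)$.

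\emph{The logarithmic bound.} First dispose of the degenerate case: if $z'(\beta_1)=0$, the logarithm is $+\infty$ and only the first bound matters, so assume $z'(\beta_1)>0$. Shifting the integration variable in the first increment gives
$$\kappa(\beta_1,\beta_2)=\int_{\beta_1}^{m}\bigl(z'(y+\delta)-z'(y)\bigr)\,\mathrm dy .$$
The key step is the elementary inequality $a-b\le a\log(a/b)$ for $a\ge b>0$, which is equivalent to $\log(a/b)\ge 1-b/a$. Apply it with $a=z'(y+\delta)$ and $b=z'(y)$. For $y\in[\beta_1,m]$, monotonicity of $z'$ gives
$$\frac{z'(y+\delta)}{z'(y)}\le \frac{z'(\beta_2)}{z'(\beta_1)},$$
because the numerator can only increase to $z'(\beta_2)$ and the denominator is at least $z'(\beta_1)$. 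Therefore
$$\kappa(\beta_1,\beta_2)\le \log\frac{z'(\beta_2)}{z'(\beta_1)}\int_{\beta_1}^{m} z'(y+\delta)\,\mathrm dy=\log\frac{z'(\beta_2)}{z'(\beta_1)}\cdot z(m,\beta_2)\le z(\beta_1,\beta_2)\log\frac{z'(\beta_2)}{z'(\beta_1)}.$$
Combining the two bounds gives the claimed minimum.

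The main obstacle is finding how to get a logarithm from a difference of derivatives. Without the inequality $a-b\le a\log(a/b)$, one is tempted to look for log-convexity of $z'$, which need not hold. With that inequality, everything else is monotonicity of $z'$.
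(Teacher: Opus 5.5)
Your proof is correct, and it takes a different route from the paper for the logarithmic bound. The first bound, $\kappa \le z(m,\beta_2) \le z(\beta_1,\beta_2)$, is identical in both.

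For the logarithmic bound, the paper never writes $\kappa$ as an integral. It sets $v = z'(\beta_2)/z'(\beta_1)$ and reduces to $v \le \ee$, since otherwise $\log v > 1$ and the first bound already suffices. It then uses only three endpoint comparisons that follow from monotonicity of $z'$:
\[
z(\alpha,\beta_2) \le z'(\beta_2)(\beta_2-\alpha), \quad
z(\beta_1,\alpha) \ge z'(\beta_1)(\alpha-\beta_1), \quad
z(\beta_1,\beta_2) \ge z'(\beta_1)(\beta_2-\beta_1).
\]
These give $\kappa/z(\beta_1,\beta_2) \le (v-1)/2$, and the paper finishes with $v-1 \le 2\log v$ on $[1,\ee]$.

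Your shifted-integral identity $\kappa = \int_{\beta_1}^{m}\bigl(z'(y+\delta)-z'(y)\bigr)\,\mathrm{d}y$, combined with $\log t \ge 1-1/t$, produces the logarithm pointwise. As a result you need no case split at $v=\ee$. You also obtain the slightly sharper statement $\kappa \le z(m,\beta_2)\log v$, valid for every $v \ge 1$.

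The paper's argument is more elementary: no integration, just three chord/tangent estimates. Its intermediate bound $(v-1)/2$ is at least as strong as $\log v$ on $[1,\ee]$, but it only becomes the stated bound after that restriction and the concavity inequality.

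You also handle the degenerate case $z'(\beta_1)=0$ explicitly, which the paper leaves implicit when it divides by $z'(\beta_1)$.
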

\begin{proof}
  Let $\alpha = \frac{\beta_1 + \beta_2}{2}$. So $\kappa(\beta_1, \beta_2) = z(\alpha, \beta_2) - z(\beta_1, \alpha)$. From monotonicity of $z$, this immediately implies $\kappa(\beta_1, \beta_2) \leq z(\alpha, \beta_2) \leq z(\beta_1, \beta_2)$.
  For the second bound, let $v = z'(\beta_2)/z'(\beta_1) \geq 1$. It suffices to show it for $v \leq e$. Since $z'$ is increasing, we have:
  $$
    z(\alpha, \beta_2) \leq z'(\beta_2) (\beta_2 - \alpha), \qquad
    z(\beta_1, \alpha) \geq z'(\beta_1) (\alpha - \beta_1), \qquad
    z(\beta_1, \beta_2) \geq z'(\beta_1) (\beta_2 - \beta_1).
    $$
    
  Thus,
  \begin{align*}
    \frac{ \kappa(\beta_1, \beta_2) }{z(\beta_1, \beta_2)} &\leq \frac{z'(\beta_2) (\beta_2 - \alpha) - z'(\beta_1) (\alpha - \beta_1)}{z'(\beta_1)(\beta_2 - \beta_1)} = \frac{z'(\beta_2) - z'(\beta_1)}{2 z'(\beta_1)}  = (v - 1)/2.
  \end{align*}
  Since $1 \leq v \leq \ee$, we have $v - 1 \leq 2 \log v$.
\end{proof}

\section{$\RNC$ Samplers for the Anti-ferromagnetic \texorpdfstring{$2$}{2}-spin Model}
\label{appendix:anti-ferro-sampler-monomer-dimer}
In this section we prove \Cref{lem:anti-ferro-sampler} and \Cref{lem:monomer-dimer}. We apply the parallel single-site dynamics framework developed in~\cite{liuParallelizeSingleSiteDynamics2024} to give $\RNC$ samplers. The proofs simply combine the optimal mixing time of the sequential Glauber dynamics and a bound on the Dobrushin influence matrix.
The Glauber dynamics is a widely used Markov chain for sampling from Gibbs distributions. In each step, it selects a vertex uniformly at random and updates the spin of the vertex according to the marginal distribution conditioned on the spins of other vertices.

Formally, the mixing time $T_{\text{mix}}(\eps)$ of a Markov chain with stationary distribution $\mu$ is defined as $T_{\text{mix}}(\eps) := \min \set{t \ge 0 \mid \max_{X_0 \in \Omega} \mathrm{d}_{\mathrm{TV}}(X_t, \mu) \le \eps}$,
where $X_0$ is the initial configuration, $X_t$ is the configuration after $t$ steps of the Markov chain.

Here we consider a graph $G = (V,E)$ with maximum degree $\Delta$. We treat $\gamma_1, \gamma_2$ as constants for our asymptotic bounds. The following two theorems give the mixing time of the Glauber dynamics for the scenarios of   \Cref{lem:anti-ferro-sampler} and \Cref{lem:monomer-dimer} respectively.

\begin{theorem}[{\cite[Theorem 1.3]{chenOptimalMixingTwostate2022}}]
  \label{theorem:mixing-time-anti-ferro}
  Let $\delta \in (0, 1)$.
  For the anti-ferromagnetic $2$-spin system with parameters $(\gamma_1, \gamma_2, \lambda)$ satisfying one of the following conditions:
  \begin{itemize}
    \item when $\gamma_2 \le 1$, $(\gamma_1, \gamma_2, \lambda)$ is $(\Delta - 1)$-unique with gap $\delta$;
    \item when $\gamma_2 > 1$, $(\gamma_1, \gamma_2, \lambda)$ is $(\Delta - 1)$-unique with gap $\delta$ and $G$ is $\Delta$-regular;
  \end{itemize}
  The mixing time of the Glauber dynamics for the Gibbs distribution $\mu^{\rm{\textnormal{2-spin}}}_{\gamma_1, \gamma_2, \lambda}$ is bounded as
  $$T_{\textnormal{mix}}(\eps) \le C(\delta) \cdot n \tp{2 \log n + \log \log \alpha + \log \log (\lambda + \lambda^{-1}) + \log (1 / 2 \eps^2)},$$
  where $\alpha$ is a constant depending on $\gamma_1, \gamma_2$ and  $C(\delta) = \exp(O(1 / \delta))$ is a constant depending on $\delta$.
\end{theorem}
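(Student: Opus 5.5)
This is a cited result (\cite[Theorem 1.3]{chenOptimalMixingTwostate2022}), so the plan is to recall how it is proved there and check that the hypotheses match; it is not meant to be re-derived from scratch. The argument follows the spectral independence / entropic independence route to optimal mixing of Glauber dynamics.

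The first step establishes $\ell_\infty$-spectral independence with a constant $\eta=O(1/\delta)$, uniformly over all pinnings. For this I would use the contraction of the tree recursion: under $(\Delta-1)$-uniqueness with gap $\delta$, pick a potential function $\varphi$ (for instance $\varphi' = 1/\sqrt{x(\gamma_1 x+1)(x+\gamma_2)}$) for which the recursion $T_d$ contracts by a factor $1-\Omega(\delta)$ in the $\varphi$-metric for every $d<\Delta$. In the first case this is automatic, because $|T_d'(\hat x_d)|$ is monotone in $d$ when $\gamma_2\le 1$. In the second case regularity fixes the branching to $\Delta-1$. Summing influences along the self-avoiding walk tree of Weitz then bounds the row sums of the influence matrix by $O(1/\delta)$, and this has to hold under arbitrary pinnings.

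The second step converts spectral independence into an approximate tensorization of entropy with a constant, which gives a modified log-Sobolev constant $\Omega(1/(C(\delta)n))$. Here I would use local-to-global entropy contraction (Chen--Liu--Vigoda), applied to the block dynamics that update $\Theta(n)$-sized blocks. To avoid a $\Delta^{O(\eta)}$ loss, I would add the marginal lower bound $b$ and use the field dynamics or a boosting via a bounded-degree-independent comparison, which is the key innovation of that paper. Degree dependence enters only through marginal bounds, and these become bounded after a suitable field-dynamics reduction. The third step is standard: the MLSI yields $T_{\mathrm{mix}}(\eps)\le C n(\log\log(1/\mu_{\min})+\log(1/2\eps^2))$. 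Since $\log(1/\mu_{\min})=O(n\log n+n\log(\lambda+\lambda^{-1})+m\log\alpha)$, we get $\log\log(1/\mu_{\min})\le 2\log n+\log\log\alpha+\log\log(\lambda+\lambda^{-1})$, which matches the stated form.

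The main obstacle is the second step, obtaining a constant $C(\delta)$ that does not depend on $\Delta$ or $\lambda$. Large $\lambda$, or small $\lambda$ in the upper and lower intervals, breaks marginal lower bounds. I would first handle this by the field-dynamics comparison: move to a $\theta$-rescaled field where the marginals are bounded, then prove entropy contraction there.
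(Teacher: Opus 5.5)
Your proposal matches the paper's treatment: the paper gives no proof of this statement and imports it directly as Theorem 1.3 of \cite{chenOptimalMixingTwostate2022}. Your outline of that source is a fair high-level summary: spectral independence from contraction of the tree recursion, then a field-dynamics boosting argument giving an $\Omega(1/(C(\delta) n))$ modified log-Sobolev constant, then the standard MLS-to-mixing bound, in which $\log\log(1/\mu_{\min})$ produces the terms $2\log n+\log\log\alpha+\log\log(\lambda+\lambda^{-1})$.

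One correction to your last paragraph: rescaling the field to $\theta\lambda$ does not restore marginal lower bounds, since it makes the $+$ marginals smaller. Its purpose is to place the pinned, tilted distributions in a small-field, weak-interaction regime where Glauber dynamics is easy to control. The spectral independence and marginal stability of $\mu$ are what control the field dynamics itself.
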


\begin{theorem}[{\cite[Theorem 30]{chen2025faster}}]
  \label{theorem:mixing-time-monomer-dimer}
  For the monomer-dimer model with edge weight $\lambda > 0$, the mixing time of the Glauber dynamics for the Gibbs distribution $\mu^{\rm{\textnormal{MD}}}_{\lambda}$ is bounded as
  $$T_{\textnormal{mix}}(\eps) \le O_{\lambda}(\Delta^3 m \log^2 n \cdot \log(1/\varepsilon)).$$
\end{theorem}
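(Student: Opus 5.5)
This bound is quoted from \cite{chen2025faster}, so within the paper we simply invoke it. If I had to reprove it, I would follow the now-standard route: establish \emph{approximate tensorization of entropy} for $\mu^{\text{MD}}_\lambda$, viewed as the hard-core model with fugacity $\lambda$ on the line graph of $G$. Approximate tensorization with constant $C_{\text{AT}}$ gives a modified log-Sobolev constant of order $1/(C_{\text{AT}} m)$ for the edge-update Glauber dynamics. The mixing time is then $O\bigl(C_{\text{AT}}\, m\,(\log\log(1/\mu_{\min}) + \log(1/\eps))\bigr)$. Since $\log(1/\mu_{\min}) = O_\lambda(m\log n)$, the starting-entropy term contributes only logarithmic factors. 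The target is therefore $C_{\text{AT}} = O_\lambda(\Delta^3 \log^2 n)$, up to how the $\log(1/\eps)$ and $\log n$ factors are split.

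The first step is \emph{spectral independence}. Using the Godsil/Heilmann--Lieb self-avoiding-walk tree of matchings and the contraction of the tree recursion $R \mapsto \lambda/(1+\sum_i R_i)$, I would bound the influence matrix by $\eta = O_\lambda(\sqrt{\Delta})$. This bound should hold uniformly under all pinnings, because pinning a matching only deletes vertices and so stays inside the family of monomer-dimer models. Pinnings that delete vertices also give marginal lower bounds of order $1/\Delta$, which keep the model $b$-marginally bounded.

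The second step converts spectral independence into approximate tensorization. Directly, the local-to-global theorem yields only $n^{O(\eta)}$, which is useless here. Instead I would use \emph{field dynamics}: rescale $\lambda \to \theta\lambda$ with $\theta \approx 1/\Delta$. In that low-fugacity regime, Dobrushin-type contraction gives $O(1)$ tensorization constants. The field-dynamics comparison then transfers block factorization back to $\lambda$ at a cost of $\theta^{-O(\eta^2)}$-type factors. The hard part is keeping this cost polynomial rather than quasi-polynomial. I would first try the refined approach, in which $\eta$ is a \emph{$\ell_\infty$-influence} bound that decays along the field-dynamics path. That should yield a cost of $\mathrm{poly}(\Delta)$, and I would aim for roughly $\Delta^2$.

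Finally, the block factorization for the rescaled measure is compared back to single-edge Glauber via a Dobrushin/coupling argument on blocks of size $O(\log n)$. I expect this to contribute the remaining factor $\Delta\log^2 n$, and combining the three steps gives the stated bound. The main obstacle is the second step: obtaining a polynomial-in-$\Delta$, rather than $n^{O(\sqrt\Delta)}$, transfer of spectral independence into entropy factorization.
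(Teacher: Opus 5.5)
Citing \cite[Theorem 30]{chen2025faster} is exactly what the paper does. The theorem is imported as a black box and the paper gives no proof of it, so for the purposes of this paper your proposal is complete.

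Your reproof sketch is not a proof, though. As you suspect, its second step is where it breaks, and the fix you propose does not repair it. Field-dynamics comparisons cost a factor of the form $\exp\bigl(\int_0^{\log(1/\theta)} O(\eta(t))\,\mathrm{d}t\bigr)$. Here $\eta(t)$ bounds spectral independence at edge weight $\ee^{-t}\lambda$; this is the same shape as the factor $\ee^{\int 4\alpha(t)\,\mathrm{d}t}$ in \Cref{lem:field-dynamics-mixing}. For monomer-dimer the available bound is $\eta(t)=O\bigl(\sqrt{1+\lambda\Delta\ee^{-t}}\bigr)$. Letting $\eta$ decay along the path does not help, because $\int_0^{\infty}\sqrt{\lambda\Delta\ee^{-t}}\,\mathrm{d}t = 2\sqrt{\lambda\Delta}$. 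The integral is therefore dominated by the start of the path, where the weight is still $\lambda$. The route yields an overhead $\exp(c\sqrt{\lambda\Delta})$ for some constant $c>0$, which is superpolynomial in $\Delta$; your target of roughly $\Delta^2$ has no mechanism behind it. The final accounting that produces $\Delta^3\log^2 n$ is likewise guessed rather than derived.

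The marginal lower bound in your first step is also off. It is $\Theta(1/\Delta^2)$, not $\Theta(1/\Delta)$. For the central edge of a double star whose two centres each carry $\Delta-1$ leaves, that edge lies in the matching with probability $\lambda/(\lambda+(1+\lambda(\Delta-1))^2)$.

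The cited work obtains polynomial $\Delta$-dependence by a different route: it combines canonical-path arguments of Jerrum--Sinclair type with local-to-global techniques, rather than using field dynamics.
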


We then introduce the Dobrushin influence matrix and its connection to parallel simulation.

\begin{definition}[Dobrushin influence matrix]
For a Gibbs distribution $\mu$ over sample space
$\Omega=\{+1,-1\}^V$, the Dobrushin influence matrix
$R=(R_{u,v})_{u,v\in V}$ is defined by $R_{v,v}=0$ and, for $u\neq v$,
\[
R_{u,v}
=
\max_{\substack{
\sigma,\tau\in\{+1,-1\}^{V\setminus\{v\}}\\
\sigma_{V\setminus\{u,v\}}=\tau_{V\setminus\{u,v\}}
}}
d_{\mathrm{TV}}\!\left(\mu_v^\sigma,\mu_v^\tau\right),
\]
where 
$\mu_v^\sigma$ is the marginal distribution at $v$ conditioned on
configuration $\sigma$ on $V\setminus\{v\}$.
\end{definition}

\begin{theorem}[{\cite[Theorem 1.2]{liuParallelizeSingleSiteDynamics2024}}]
  \label{theorem:parallel-single-site}
  For any single-site dynamics updating vertices on a spin system on $G$ with Dobrushin influence matrix $R$ satisfying $\norm{R}_p \le \rho$ for some $p \ge 1, \rho \ge 1$, there is a parallel algorithm to simulate $T$ steps of the dynamics in $O(\rho \cdot (T / n + \log n) \cdot \log \Delta)$ depth using $\tilde{O}(m + n)$ processors.
\end{theorem}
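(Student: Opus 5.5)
The plan is to simulate the $T$ steps of the single-site dynamics by a Jacobi-style fixed-point iteration over the whole time history, rather than step by step.

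\textbf{Setup.} Pre-sample, in parallel, the update schedule: vertices $v_1,\dots,v_T$ and independent seeds $r_1,\dots,r_T$. Realize each update as a deterministic function: step $t$ sets $\sigma_{v_t} \leftarrow f_{v_t}(\sigma_{N(v_t)}, r_t)$. Choose $f_v$ to be the optimal (grand) coupling of the conditional marginals $\mu_v^\sigma$. Then for two neighbourhood configurations differing only at $u$,
\[
\Pr{f_v(\sigma,r) \neq f_v(\tau,r)} \leq R_{u,v}.
\]
The outcome of step $t$ depends only on the outputs of the most recent earlier updates at the neighbours of $v_t$, or on the initial configuration if there are none. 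These "predecessor" pointers are computed once by sorting and prefix operations, in $O(\log T)$ depth with $\tilde O(m+n)$ processors.

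\textbf{Iteration.} Start from an arbitrary guess for every step's output, for instance the initial spin of the vertex. In each round, recompute all $T$ updates simultaneously, reading each neighbour's value from the current guess of the predecessor step. Each round costs $O(\log \Delta)$ depth, since an update aggregates at most $\Delta$ neighbour values. A step's guess is correct after round $\ell$ whenever all of its predecessors were correct after round $\ell-1$. Hence after $\ell$ rounds the true trajectory is reproduced on every step whose dependency chains of length $\ell$ all bottom out at time $0$. Equivalently, step $t$ can be wrong only if there is a strictly time-decreasing chain $t = t_0 > t_1 > \dots > t_\ell$ of dependencies, every one of which is a discrepancy-propagating link.

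\textbf{Key estimate.} This is where I expect the main difficulty. The goal is to bound the expected number of such chains of length $\ell$ by roughly
\[
n\Bigl(\tfrac{C\rho T}{n\ell}\Bigr)^{\ell}.
\]
The heuristic is as follows. Each link from $u$ to $v$ costs a factor $R_{u,v}$ via the coupling, which produces $\|R^\ell\|$-type sums bounded by $\rho^\ell$ using $\|R\|_p \le \rho$ together with Hölder. The time-ordering constraint on $\ell$ update times chosen among roughly $T/n$ updates per vertex contributes the factor $(T/n)^\ell/\ell!$.

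The hard part is making the link probabilities multiply. They are not independent, because the same seeds $r_t$ appear in several chains. My first attempt is a "path-coupling along witness sequences" argument: reveal the seeds along a fixed candidate chain in order of time. Conditioned on the past, the discrepancy probability at each new link is at most $R_{u,v}$ because of the coupling, and a union bound over candidate chains then finishes the estimate. If $\|R\|_p$ for general $p$ causes trouble, I would instead weight vertices by a Perron-type vector adapted to the $p$-norm.

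\textbf{Conclusion.} Choose $\ell = C(\rho T/n + \log n)$. Then the expected number of bad chains is $n^{-\Omega(1)}$, so by Markov's inequality the simulation is exact with high probability. The resulting total depth is $O(\rho(T/n + \log n)\log\Delta)$, and every round uses $\tilde O(m+n)$ processors.
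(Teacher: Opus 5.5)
The paper does not prove this statement. It imports it as a black box from the cited work of Liu and Yin, and your plan is essentially that source's argument. You pre-sample all update sites and seeds, recompute every update in parallel from the previous round's guesses, and bound the number of rounds by a union bound over time-decreasing disagreement chains. The expected number of such chains is at most $\binom{T}{\ell+1}n^{-\ell-1}\,\mathbf 1^{\top}R^{\ell}\mathbf 1\le \frac{T}{\ell+1}\bigl(e\rho T/(n\ell)\bigr)^{\ell}$. Hölder with the all-ones vector already gives $\mathbf 1^{\top}R^{\ell}\mathbf 1\le n\rho^{\ell}$ for every $p$, so no Perron-type weighting is needed. Three steps of your sketch still need to be made precise; the first is the real technical point of the cited work.

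\emph{The coupling.} The same seed $r_t$ is evaluated against neighbourhood guesses that change from round to round. So you need a single map $f_v(\cdot,r)$ that is good for all pairs simultaneously, and in general no coupling is optimal for every pair at once. For spins in $\{+1,-1\}$, which is the setting of the paper's definition of $R$, one exists. Take $r$ uniform on $[0,1]$ and set $f_v(\sigma,r)=+1$ iff $r<\mu_v^{\sigma}(+1)$; the disagreement probability is then exactly $d_{\mathrm{TV}}(\mu_v^{\sigma},\mu_v^{\tau})$. For larger spin domains you need the cited work's universal (Jaccard) coupling. Its disagreement probability is at most $2d_{\mathrm{TV}}/(1+d_{\mathrm{TV}})$, which costs a factor $2$ in $\rho$.

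\emph{Several wrong predecessors.} Your link bound only covers a single differing neighbour. To define a link event in general, pass from the guessed neighbourhood to the true one by flipping the discrepant neighbours one at a time. A disagreement at step $t$ implies one of these single-flip disagreements. Each has conditional probability at most $R_{u,v_t}$ given the seeds of earlier steps, since those seeds determine the intermediate configurations and $r_t$ is independent of them. Conditioning along a fixed chain in increasing time order then gives the product $\prod_j R_{u_{j+1},u_j}$ you were after.

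\emph{Processors.} Recomputing all $T$ updates every round costs about $T\cdot 2m/n$ work per round. This is $\tilde O(m+n)$ only when $T/n$ is polylogarithmic. That fails, for example, in the monomer--dimer application, where $T$ exceeds the number of sites by roughly a $\Delta^{3}$ factor. The fix is to run the iteration on consecutive blocks of $\Theta(n\log n)$ steps, each started from the exact final state of the previous block. Each block then takes $O(\rho\log n)$ rounds, so the total is still $O(\rho(T/n+\log n))$ rounds of depth $O(\log\Delta)$.
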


We show \Cref{lem:anti-ferro-sampler} and \Cref{lem:monomer-dimer} by combining the above theorems with a bound on the Dobrushin influence matrix.

\begin{lemma}[{\cite[Variant of Claim 8.17]{chenRapidMixingGlauber2024}}]
  \label{lem:anti-ferro-influence-matrix}
  For an anti-ferromagnetic $2$-spin system that is $(\deg(u)-1)$ unique at each vertex $u$ with gap $\delta$, the Dobrushin influence matrix $R$ satisfies $\norm{R}_1 \le 64 / \delta^2$.
\end{lemma}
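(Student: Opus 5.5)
We need to bound $\norm{R}_1$, i.e. the maximum column sum $\max_v \sum_u R_{u,v}$ (or row sum, depending on convention), by $64/\delta^2$. We bound each entry and then sum. For $u$ not adjacent to $v$ we have $R_{u,v}=0$, because the conditional marginal at $v$ depends only on the spins of its neighbours. So fix $v$ and a neighbour $u$. Let $k_+$ and $k_-$ be the numbers of $+1$ and $-1$ neighbours of $v$ other than $u$. The conditional odds ratio of $\sigma_v=+1$ versus $-1$ is
\[
\lambda\,\gamma_1^{k_+ + \mathbf 1[\sigma_u=+]}\,\gamma_2^{-(k_- + \mathbf 1[\sigma_u=-])}.
\]
Flipping $u$ therefore multiplies the odds by $\gamma_1\gamma_2<1$ (or its inverse). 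For a Bernoulli variable with odds $x$, changing the odds to $cx$ moves the probability by
\[
\frac{x}{1+x}-\frac{cx}{1+cx}=\frac{(1-c)x}{(1+x)(1+cx)}.
\]
This gives
\[
R_{u,v}\le \max_{x}\frac{(1-\gamma_1\gamma_2)x}{(1+\gamma_1 x)(x+\gamma_2)},
\]
after reparametrizing by $x$ = odds times $\gamma_2$-type factors, matching the form in the characterization of the uniqueness regime. The maximum is taken over the odds values achievable from $k_\pm$ with $k_++k_-=\deg(v)-1$.

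The key step is to relate this quantity to the uniqueness gap. The function $\frac{(d)(1-\gamma_1\gamma_2)x}{(\gamma_1x+1)(x+\gamma_2)}$ equals $|T_d'|$ evaluated at the point whose image is $x$, since $|T_d'(y)|/T_d(y)\cdot y$ has exactly this form. Uniqueness with gap $\delta$ at the fixed point does not bound it at every $x$, however. The standard remedy, as in the source claim of \cite{chenRapidMixingGlauber2024}, is not to bound $R$ directly in $\norm{\cdot}_1$. Instead we use the fact that only $x$ values realized as ratios near the fixed point matter after a potential-function or rescaling argument, or we use a weighted norm $D R D^{-1}$.

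Concretely, I would proceed in three steps. First, I would show that every realizable odds value lies in the interval $[T_d(\infty),T_d(0)]$ or its analogue, where $d=\deg(v)-1$. Second, I would invoke the known lemma (the variant of Claim 8.17) stating that on this range $d\cdot\frac{(1-\gamma_1\gamma_2)x}{(\gamma_1x+1)(x+\gamma_2)}\le C/\delta^2$. The mechanism is that the contraction $1-\delta$ holds at the fixed point and the function degrades at most quadratically in $1/\delta$ over the two-step image. Third, summing over the at most $\deg(v)$ neighbours and absorbing $\deg(v)/(\deg(v)-1)\le2$ gives the bound $64/\delta^2$.

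The main obstacle is the second step, controlling the influence away from the fixed point. Here I would rely on the monotonicity of $T_d$ and a two-step iterate $T_d\circ T_d$ argument to confine $x$ to the relevant range, and I would track constants loosely to reach $64$.
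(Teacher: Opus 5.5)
\paragraph{Verdict.} There is a genuine gap: your Step~2 is the whole lemma, and you do not prove it.

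\paragraph{What is right, and where it breaks.} The paper gives no argument for this lemma; it only attributes it to a variant of Claim~8.17 of \cite{chenRapidMixingGlauber2024}. Your reduction is correct. For $u\sim v$ and $d=\deg(v)-1$ one has exactly $R_{u,v}=\max_{k_++k_-=d} f(\lambda\gamma_1^{k_+}\gamma_2^{-k_-})$, where $f(y)=\frac{(1-\gamma_1\gamma_2)y}{(\gamma_1y+1)(y+\gamma_2)}$. These arguments lie in $[T_d(\infty),T_d(0)]$. With $g=d f$, the column sum at $v$ is at most $\frac{d+1}{d}\sup g$ over that interval.

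Only the column-sum reading works under a per-vertex hypothesis. On a star with $\gamma_1=0<1<\gamma_2=\lambda$, the center's row sum is $\Delta/2$.

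You dispose of Step~2 by invoking ``the known lemma (the variant of Claim~8.17)'', which is the statement you are proving. The mechanisms you sketch do not supply it:
\begin{itemize}
\item Passing to a weighted norm $DRD^{-1}$ bounds a different quantity than $\norm{R}_1$.
\item No ``two-step iterate'' can confine the odds further. Every value $\lambda\gamma_1^{k_+}\gamma_2^{-k_-}$ is realized by some boundary condition, so the maximum really is over the whole set.
\item $g(y)=y\,|T_d'(y)|/T_d(y)$ is evaluated at $y$ itself, not as $|T_d'|$ at a preimage of $y$. It coincides with $|T_d'(\hat x_d)|$ only at the fixed point.
\end{itemize}

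\paragraph{The missing idea.} Compare the endpoints with the fixed point $\hat x=\hat x_d$ via
\[
\log\frac{T_d(0)}{\hat x}=\int_0^{\hat x}\frac{g(s)}{s}\,\mathrm{d}s,\qquad \log\frac{\hat x}{T_d(\infty)}=\int_{\hat x}^{\infty}\frac{g(s)}{s}\,\mathrm{d}s .
\]
Use the following properties of $g$:
\begin{itemize}
\item $g(s)/s$ is decreasing.
\item $g$ is unimodal with peak at $y^*=\sqrt{\gamma_2/\gamma_1}$, and invariant under $y\mapsto y^{*2}/y$.
\item $g(cy)\le c\,g(y)$ and $g(y/c)\le c\,g(y)$ for $c\ge 1$.
\end{itemize}

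Suppose $\hat x\le y^*$.
\begin{itemize}
\item If $\hat x\ge\gamma_2$, then $g(\hat x)\ge d(1-\gamma_1\gamma_2)/4$. Hence $\sup_y g\le d(1-\gamma_1\gamma_2)\le 4(1-\delta)$.
\item If $\hat x\le\gamma_2$, then $\int_0^{\hat x}g(s)/s\,\mathrm{d}s\le d(1-\gamma_1\gamma_2)\hat x/\gamma_2\le 4g(\hat x)$. So $T_d(0)\le \ee^{4}\hat x$, giving $g\le \ee^4(1-\delta)$ on $[\hat x,T_d(0)]$. On $[T_d(\infty),\hat x]$, monotonicity gives $g\le g(\hat x)\le 1-\delta$.
\end{itemize}
The case $\hat x\ge y^*$ follows by the symmetry $y\mapsto y^{*2}/y$, which maps $\int_{\hat x}^\infty g(s)/s\,\mathrm{d}s$ to $\int_0^{y^{*2}/\hat x} g(r)/r\,\mathrm{d}r$.

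This yields $\norm{R}_1\le\max\{1,\,2\ee^4(1-\delta)\}\le 64/\delta^2$. No quadratic degradation in $1/\delta$ is needed.
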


\begin{proof}[Proof of \Cref{lem:anti-ferro-sampler}] Define $A = \max\{1, \gamma_2^{-\Delta} \}$. If $\lambda \leq \eps/(n A)$, then we simply return the fixed all-minus configuration. This deterministic distribution is within $\eps$ of the Gibbs distribution since, for any vertex $v$ with $k_{\pm}$ neighbors of spin $\pm 1$, the conditional odds of assigning spin $+1$ to $v$ are $\lambda\gamma_1^{k_+} \gamma_2^{k_-}
   \leq \lambda A$,  and so by a union bound the total probability of any $+1$ vertex is at most $n \lambda A \leq \eps$. 

A similar argument shows that when 
$\gamma_1 > 0, \lambda\geq n/(\eps\gamma_1^\Delta)$, we can simply return the all-plus configuration,  within $\eps$ total-variation distance of the Gibbs distribution. When $\gamma_1 = 0$, then the fact that $\lambda$ is within the uniqueness regime implies that $\lambda \leq \exp(O_{\delta}(\Delta))$. 

Thus, aside from these aforementioned cases,  we assume that both $\lambda$ and $\lambda^{-1}$ are at most $\exp(O_{\delta}(n/\eps))$. So $\log\log(\lambda+\lambda^{-1})
       = O_{\delta}(\log(n/\eps)).$  \Cref{theorem:mixing-time-anti-ferro} therefore shows that $$
       T= O_{\delta}(n \log(n/\eps))
       $$
       steps of the Glauber dynamics suffice to produce a sample within
total-variation distance $\eps$.

By   \Cref{lem:anti-ferro-influence-matrix} and \Cref{theorem:parallel-single-site}, we can simulate these $T$ steps of Glauber updates in parallel in depth
$O_\delta (\log\Delta\cdot\log(n/\eps) )$ using $\widetilde O_\delta(m+n)$ processors.
\end{proof}

\begin{proof}[Proof of \Cref{lem:monomer-dimer}]
  Consider the Glauber dynamics for the monomer-dimer model with edge weight $\lambda>0$.
  By \Cref{theorem:mixing-time-monomer-dimer}, running the Glauber dynamics for $T = O_{\lambda}(\Delta^3 m \log^2 n \cdot \log(1/\varepsilon))$ steps yields a distribution within total variation distance $\varepsilon$ from the stationary distribution $\mu_{G}$.

  For the monomer-dimer Glauber dynamics, each step updates the state of a single edge. Thus, we will apply \Cref{theorem:parallel-single-site} to the line graph of $G$.    For the line graph, the Dobrushin influence matrix $R \in \mathbb{R}^{m \times m}$ has a nonzero entry $R_{f,e}$  only when $e$ and $f$ share a vertex.
  Since each edge is incident to at most $2(\Delta-1)$ other edges (excluding itself), each column of $R$ has at most $2(\Delta-1)$ nonzero entries. Since each entry $R_{f,e}$ lies in $[0,1]$, we have $\|R\|_1 \le 2\Delta$.
   \Cref{theorem:parallel-single-site} then gives  a parallel speedup factor of $\Omega(m / \Delta)$.
  So the parallel depth required to simulate $T$ steps of the Glauber dynamics is $T \cdot \Delta/m \cdot \log \Delta = O_{\lambda}(\Delta^4 \log^3 n \cdot \log(1/\varepsilon))$, while using  $\tilde{O}(m \Delta)$ processors.
\end{proof}

\section{$\RNC$ Sampler for the Ferromagnetic Ising Model}
\label{appendix:ferro-ising-sampler}

In this section we prove \Cref{lem:ferro-ising-sampler}. By the external-field Edwards--Sokal coupling~\cite[Proposition~2.3]{FGW23} and its parallelism in \cite[Lemma 3.3]{chen2025efficient}, it suffices to give a parallel sampler for the corresponding random cluster model.

Our algorithm uses the \emph{field dynamics}, a novel Markov chain first introduced in~\cite{chenRapidMixingGlauber2024}. Given a parameter $\theta \in (0, 1)$ for the random cluster model $\mu^{\text{RC}}_{p, \lambda}$, this process updates a configuration $X \subseteq E$ as follows:
\begin{itemize}[leftmargin=*]
  \item Add each edge $e \in E$ into $S$ independently with probability $\theta$;
  \item Replace $X$ with a draw from the distribution $\mu^{\text{RC}}_{p^*, \lambda}$, where $p^*_e = \frac{p_e}{p_e + \theta (1 - p_e)} {\mathbf 1}_{e \in S \cup X}$.
\end{itemize}

Here $\mathbf 1$ is the indicator function. Equivalently, $X$ is drawn from the distribution $\mu^{\text{RC}}_{p^*, \lambda}$ with vector $p^* = \frac{p}{p + \theta (1 - p)}$ restricted to subgraph $G' = (V, S \cup X)$.

Note that a previous parallel sampler is given in \cite{chen2025efficient}; we improve the depth of the algorithm by removing the dependence of the exponent of $\log n$ on $\delta$.

\begin{lemma}
\label{marginal-lemma}
  For any vertex $v$, the ferromagnetic Ising model has $\Pr{\sigma_v = +1} \leq \lambda_v$.
\end{lemma}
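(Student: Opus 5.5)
We need to show that under the ferromagnetic Ising measure with edge activities $\gamma_e\ge 1$ and vertex activities $\lambda\in(0,1]^V$, every vertex satisfies $\Pr{\sigma_v=+1}\le\lambda_v$. The plan is to condition on everything outside $v$ and bound the conditional odds of $+1$ at $v$ directly. That works only when the neighbours of $v$ are mostly minus, so a pure conditional-odds argument is not enough. I will therefore compare the marginal at $v$ with a reference state via a flip map, using the ferromagnetic structure together with $\lambda_u\le 1$ for the other vertices.

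\textbf{Step 1 (the flip map).} Consider the global spin-flip $\sigma\mapsto-\sigma$. It preserves the edge factor $\prod_{\sigma(u)=\sigma(w)}\gamma_e$ exactly, and it changes the vertex factor from $\prod_{u:\sigma_u=+1}\lambda_u$ to $\prod_{u:\sigma_u=-1}\lambda_u$.

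\textbf{Step 2 (weight comparison).} Fix $\sigma$ with $\sigma_v=+1$ and write $P=\{u:\sigma_u=+1\}\ni v$. Then
\[
\frac{w(\sigma)}{w(-\sigma)}=\frac{\prod_{u\in P}\lambda_u}{\prod_{u\notin P}\lambda_u}.
\]
This ratio is not obviously at most $\lambda_v$, since the denominator is at most $1$. So the global flip alone fails, and I will restrict to a smaller flip.

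\textbf{Step 3 (cluster flip via Edwards--Sokal).} Instead, flip only the cluster containing $v$ in the random-cluster representation recalled above. In the joint Edwards--Sokal coupling with external field, conditioned on the edge set $S$, each component $C$ independently takes spin $+1$ with probability
\[
\frac{\prod_{u\in C}\lambda_u}{1+\prod_{u\in C}\lambda_u}.
\]
This matches the factor $\bigl(1+\prod_{u\in C}\lambda_u\bigr)$ in $w^{\text{RC}}$. For the component $C_v$ containing $v$, every $\lambda_u\le 1$, so
\[
\Pr{\sigma_v=+1\mid S}=\frac{\prod_{u\in C_v}\lambda_u}{1+\prod_{u\in C_v}\lambda_u}\le\prod_{u\in C_v}\lambda_u\le\lambda_v.
\]
Averaging over $S$ then gives the lemma.

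\textbf{Step 4 (verifying the coupling).} The main step to check is this conditional component law. I would verify it directly: the joint weight on pairs $(\sigma,S)$ is
\[
\prod_{e\in S}p_e\prod_{e\notin S}(1-p_e)\,\mathbf 1[\sigma\text{ is constant on each edge of }S]\prod_{u:\sigma_u=+1}\lambda_u.
\]
Summing over $S$ recovers the Ising weight up to the factor $\prod_e\gamma_e$, using $p_e=1-1/\gamma_e$. Summing instead over $\sigma$, with $S$ fixed, factorizes over components and yields the random-cluster weight, together with the stated per-component conditional law. This is exactly the cited \cite[Proposition~2.3]{FGW23}, so I would invoke it and include this short computation for completeness.
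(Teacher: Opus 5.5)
Your argument is correct and is essentially the paper's. The paper writes the marginal as $1 - Z^{\text{Ising}}_{\gamma,\tilde\lambda}/Z^{\text{Ising}}_{\gamma,\lambda}$ with $\tilde\lambda_v = 0$ and bounds each random-cluster weight ratio $w^{\text{RC}}_{p,\tilde\lambda}(S)/w^{\text{RC}}_{p,\lambda}(S) = 1/(1+\prod_{u\in C_v}\lambda_u) \ge 1-\lambda_v$, which is exactly your per-$S$ bound $\prod_{u\in C_v}\lambda_u/(1+\prod_{u\in C_v}\lambda_u)\le\lambda_v$, phrased through the partition-function identity rather than the full joint coupling (your Steps 1--2 with the global flip are a dead end and can simply be dropped).
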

\begin{proof}
  We can write the marginal probability as
  $$
  \Pr{\sigma_v = +1} = \frac{ \sum_{\sigma: \sigma_v = +1} w(\sigma)}{\sum_{\sigma} w(\sigma) } = 1 - \frac{ \sum_{\sigma: \sigma_v = -1} 
 \prod_{v': \sigma(v') = +1} \lambda_{v'} \prod_{\substack{e = (u',v') \in E: \\ \sigma(u') = \sigma(v')}} \gamma_e}{ Z^{\text{Ising}}_{\gamma, \lambda}}.
  $$

  The sum in the numerator is precisely the partition function $Z^{\text{Ising}}_{\gamma, \tilde \lambda}$ 
  where $\tilde \lambda$ is obtained from $\lambda$ by setting $\tilde \lambda_v = 0$. That is, we have 
  $\Pr{\sigma_v = +1} = 1 - Z^{\text{Ising}}_{\gamma, \tilde \lambda}/Z^{\text{Ising}}_{\gamma, \lambda}$. In turn, for any subset of edges $S \subseteq E$, the ratio of the random-cluster model weights for $\lambda$ and $\tilde \lambda$ can be bounded as
  
  $$
  1 \ge \frac{w^{\text{RC}}_{p, \tilde{\lambda}}(S)}{w^{\text{RC}}_{p, \lambda}(S)} = \frac{1}{1 + \prod_{u \in C_v} \lambda_u} \ge 1 - \prod_{u \in C_v} \lambda_u \ge 1 - \lambda_v,$$
  where $C_v$ is the connected component containing vertex $v$ in the subgraph $(V, S)$.
  Summing over all subsets of edges $S \subseteq E$, we conclude that
  \begin{equation*}
    1 \ge \frac{Z^{\text{Ising}}_{\gamma, \tilde{\lambda}}}{Z^{\text{Ising}}_{\gamma, \lambda}} = \frac{Z^{\text{RC}}_{p, \tilde{\lambda}}}{Z^{\text{RC}}_{p, \lambda}} = \frac{\sum_{S \subseteq E} w^{\text{RC}}_{p, \tilde{\lambda}}(S)}{\sum_{S \subseteq E} w^{\text{RC}}_{p, \lambda}(S)} \ge 1 - \lambda_v. \qedhere
  \end{equation*}
\end{proof}

\begin{lemma}[{\cite[Variant of Theorem 8.1 and Lemma 8.4]{fengRapidMixingGlauber2025}}]
  \label{lem:field-dynamics-mixing}
  The field dynamics started from the all-edge configuration satisfies
  $$T^{\rm{\textnormal{FD}}}_{\text{mix}}(\eps) \le \Bigl( \frac{2}{p_{\min} (1 - \lambda_{\max})} \Bigr)^{30/(1 - \lambda_{\max})^2} \cdot \frac{ \log(n/\eps) }{ \theta^{2 \cdot 10^5} }
  $$
  where $p_{\min} = \min_e p_e$ and $\lambda_{\max} = \max_{v \in V} \lambda_v$.
\end{lemma}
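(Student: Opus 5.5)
The plan is to derive the bound from the two cited ingredients of \cite{fengRapidMixingGlauber2025}, re-tracking constants so that the only instance-dependent quantities are $p_{\min}$ and $\lambda_{\max}$. We use an entropy (or variance) contraction estimate for the field dynamics to control the per-step progress. We then use monotonicity of the random cluster model to turn that progress into a total-variation bound from the all-edge start, which costs only a $\log(n/\eps)$ factor rather than $\log(1/\mu_{\min})$.

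\textbf{Step 1: spectral/entropic independence with explicit constants.} First, we establish that $\mu^{\text{RC}}_{p,\lambda}$, viewed as a distribution on $\{0,1\}^E$ and under external-field tilts $p\mapsto \frac{p}{p+\theta(1-p)}$, is $\eta$-spectrally independent. Here $\eta$ should depend only on $p_{\min}$ and $1-\lambda_{\max}$, for instance $\eta \le \frac{C}{(1-\lambda_{\max})^2}\log\frac{2}{p_{\min}(1-\lambda_{\max})}$. This is the content of Theorem~8.1 of \cite{fengRapidMixingGlauber2025}. To get it we will re-read their argument, which bounds influences through the external field via a zero-freeness/marginal bound. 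The analogue of \Cref{marginal-lemma}, $\Pr{\text{$v$ not connected to a ``ghost''}}\ge 1-\lambda_v$, is the place where $1-\lambda_{\max}$ enters. Since every tilt lowers $p$ but keeps $p^*\ge p_{\min}$ up to a factor $\theta$, the bound holds uniformly along the localization path.

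\textbf{Step 2: contraction of the field dynamics.} Next, we apply the field-dynamics comparison: $\eta$-spectral independence along the whole tilt path $\theta'\in[\theta,1]$ gives approximate tensorization. This yields an entropy contraction factor of $1-\Omega(\theta^{c\eta'})$ per step for a suitable constant-multiple exponent, and after writing $\theta^{-\mathrm{poly}}$ conservatively it becomes the factor $\theta^{2\cdot 10^5}$ in the statement. The factor $\bigl(\frac{2}{p_{\min}(1-\lambda_{\max})}\bigr)^{30/(1-\lambda_{\max})^2}$ arises as $e^{O(\eta)}$ from integrating the spectral independence bound.

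\textbf{Step 3: mixing from the all-edge start (Lemma~8.4 variant).} Here we avoid $\log(1/\mu(E))$, which can be $\Theta(m)$. The random cluster model with $q=2$-type weights $1+\prod\lambda$ is monotone (FKG), and one step of the field dynamics admits a grand monotone coupling. We therefore run the chains from the all-edge and the empty configurations under a common coupling. The top chain dominates the stationary chain, which dominates the bottom chain. The total-variation distance is then at most $\sum_e \bigl(\Pr{e\in X^{\text{top}}_t}-\Pr{e\in X^{\text{bot}}_t}\bigr)$. Each summand is the gap between two monotone marginals and is controlled by the variance/entropy decay from Step~2 applied to the edge indicator. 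This gives a bound of the form $m\,(1-\rho)^t$ with $\rho$ the contraction rate, so $t=\rho^{-1}\log(m/\eps)=\rho^{-1}O(\log(n/\eps))$ suffices.

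\textbf{Main obstacle.} The main obstacle is Step~3. The decay rate from Step~2 is for entropy/variance relative to $\mu$, and transferring it to single-edge marginals from a worst-case (all-edge) start without paying $\log(1/\mu_{\min})$ requires care. My first attempt is to bound $\Pr{e\in X^{\text{top}}_t}-\mu(e\in X)$ by the $L^2$ norm of $P^t f_e$ with $f_e=\mathbf 1_{e}-\mu(e)$ evaluated at the top state. I would use monotonicity so that the top state is the extremal one and the spectral gap acts on the monotone function $f_e$. If that is not tight enough, the fallback is to accept one burn-in step, after which the configuration is stochastically below the all-edge state and has bounded log-density ratio $O(n\log(1/\theta))$ in the vertex/component variables. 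This still gives a logarithm polynomial in $n$.
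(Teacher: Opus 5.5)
Your Steps~1--2 roughly track what the paper imports from the cited work. Step~3, which you flag as the main obstacle, is where the gap is: it rests on a misdiagnosis and, as written, does not yield the $\log(n/\eps)$ factor. Once you have the entropy contraction of Step~2, i.e.\ $D_{\mathrm{KL}}(\nu P\,\|\,\mu)\le(1-\rho)\,D_{\mathrm{KL}}(\nu\,\|\,\mu)$ for every $\nu$ with $\rho=\exp\bigl(-\int_0^{\log(1/\theta)}4\alpha(t)\,\mathrm{d}t\bigr)$, the all-edge start costs only $\log\log(1/\mu(E))$, not $\log(1/\mu(E))$. From the point mass at $E$ you get $D_{\mathrm{KL}}(P^t(E,\cdot)\,\|\,\mu)\le \ee^{-\rho t}\log(1/\mu(E))$. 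Pinsker's inequality then gives total variation at most $\eps$ once $t\ge\rho^{-1}\bigl(\log\log(1/\mu(E))+\log(1/(2\eps^2))\bigr)$. That is the paper's entire argument for this part. It uses the crude bound $\mu^{\mathrm{RC}}_{p,\lambda}(E)\ge(p_{\min}/2)^{n^2}$: the all-edge weight is at least $p_{\min}^m$ and the partition function is at most $2^{n^2}$. Hence $\log\log(1/\mu(E))\le 2\log n+\log\log(2/p_{\min})$, which is where the $\log(n/\eps)$ factor comes from. No monotonicity, coupling or burn-in is needed.

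Your alternative route in Step~3 does not work as stated. The $\log(1/\mu(E))$ penalty belongs to \emph{variance} (spectral-gap) arguments, and Step~3 falls back on exactly such an argument. Bounding the pointwise value $(P^tf_e)(E)$ through $\|P^tf_e\|_{L^2(\mu)}$ loses a factor $\mu(E)^{-1/2}$, so you again need $t\gtrsim\rho^{-1}\log(1/\mu(E))$, which can be of order $m/\rho$. Monotonicity only tells you that $P^tf_e$ is maximized at $E$; it gives no bound on that maximum, so the claimed $m(1-\rho)^t$ does not follow. (The FKG property and monotone coupling for the field dynamics would also need proof.) The burn-in fallback has the same defect. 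Under a variance argument, a start whose log-density ratio is polynomial in $n$ gives a mixing time polynomial in $n$. Only the entropy argument turns it into $O(\log n)$, and then the burn-in is superfluous, since $\log(1/\mu(E))=O(n^2\log(2/p_{\min}))$ is already polynomial.

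Separately, in Steps~1--2, note that the tilt $p\mapsto p/(p+\theta(1-p))$ \emph{raises} $p$; it does not lower it. The universal exponent $2\cdot10^5$ on $\theta$ comes from a two-phase coefficient:
\begin{itemize}
\item $\alpha(t)=3(1-\lambda_{\max})^{-2}$ for $t\le\log(1/\theta_0)$, with $\theta_0=p_{\min}(1-\lambda_{\max})^2/2$;
\item $\alpha(t)=5\cdot10^4$ afterwards, once the tilted edge probabilities are close to $1$.
\end{itemize}
So all dependence on $p_{\min}$ and $\lambda_{\max}$ sits in the prefactor $\bigl(2/(p_{\min}(1-\lambda_{\max})^2)\bigr)^{12/(1-\lambda_{\max})^2}$. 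A single constant $\eta(p_{\min},\lambda_{\max})$ used uniformly along the whole path, as in your Step~1, would put $\lambda_{\max}$ into the exponent of $\theta$. With $\theta=C_1/\log(n/\eps)$ in {\tt SampleRC}, it would then appear in the exponent of $\log(n/\eps)$, which the stated bound rules out.
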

\begin{proof}
A more general precise bound is stated in \cite{fengRapidMixingGlauber2025}. Since our field dynamics starts from the all-edge configuration $E$, the same entropy argument gives
  $$T^{\rm{\textnormal{FD}}}_{\text{mix}}(\eps) \le \ee^{ \int_0^{\log 1/\theta} 4 \alpha(t) \, \d t }\cdot\Bigl(\log \log \frac{1}{\mu^{\rm{\textnormal{RC}}}_{p,\lambda}(E)} + \log \frac{1}{2\eps^2}\Bigr) + 1,$$
  where $\theta_0 := p_{\min} (1 - \lambda_{\max})^2/2$,
  and
  $$\alpha(t) := \begin{cases} 3(1 - \lambda_{\max})^{-2} & \text{if } 0 \le t \le \log \frac{1}{\theta_0}, \\ 5 \cdot 10^4 & \text{if } t > \log \frac{1}{\theta_0}. \end{cases}$$

  We can bound the integral term by:
  \begin{align*}
    \int_0^{\log 1/\theta} 4 \alpha(t) \d t &\leq
    \int_0^{\log 1/\theta_0} 4 \cdot 3 (1 - \lambda_{\max})^{-2} \d t + 2 \cdot 10^5 \log \theta^{-1} \\
    &= 12 ( 1 - \lambda_{\max})^{-2} \log \Bigl( \frac{2}{(1 - \lambda_{\max})^2 p_{\min}} \Bigr) + 2 \cdot 10^5 \log \theta^{-1}.
  \end{align*}

  We use the lower bound $\mu^{\rm{\textnormal{RC}}}_{p,\lambda}(E) \geq p_{\min}^{n^2} / 2^{n^2}$ for the random cluster model, by the same argument as in \cite[Appendix A.2]{chen2025efficient}: the all-edge configuration has weight at least $p_{\min}^m$, while the partition function is at most $2^{n^2}$.
  Hence
  $$T^{\rm{\textnormal{FD}}}_{\text{mix}}(\eps)
    \le \Bigl( \frac{2}{p_{\min}(1 - \lambda_{\max})^2} \Bigr)^{12/(1 - \lambda_{\max})^2}
    \theta^{-2 \cdot 10^5} \cdot
    \Bigl( 2 \log n + \log \log \frac{2}{p_{\min}} + \log \frac{1}{2 \eps^2} \Bigr) + 1.$$
  The stated bound follows by an upper bound on this quantity.
\end{proof}

The basic idea for our sampling algorithm is to simulate the field dynamics with an appropriate parameter $\theta$, where each update of the field dynamics is in turn simulated by the Glauber dynamics. The key is to choose $\theta$ so as to place the random cluster model $\mu^{\mathrm{RC}}_{p^*, \lambda}$ into a more tractable regime, where the Glauber dynamics admits simple analysis for fast mixing and efficient parallelization.

We first recall the parallelism result for the large-$p$ random cluster model in~\cite{chen2025efficient} as follows.

\begin{lemma}[{\cite[Variant of Theorem 4.3]{chen2025efficient}}]
  \label{lem:parallel-glauber-rc}
  Suppose that $n \ge 3$, $\eps \in (0, 1 / 2)$ and
  $$
  p_{\min} \geq 1 - \frac{1 - \lambda_{\max}}{\ee^{45} \log (n/\eps)}.
  $$
  
  The Glauber dynamics for the random cluster model $\mu^{\rm{\textnormal{RC}}}_{p, \lambda}$ is fast mixing and can be efficiently parallelized.
  Formally, there is a parallel algorithm {\rm{\texttt{ParallelRC}}}$(G, p, \lambda, \eps)$ to simulate the Glauber dynamics with
  $O(\log^2 (m/\eps))$ depth and $\tilde{O}(m^2 \log(1/\eps))$ processors,
  such that the output distribution is within total variation distance $\eps$ from $\mu^{\rm{\textnormal{RC}}}_{p, \lambda}$.
\end{lemma}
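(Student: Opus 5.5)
The plan is to follow the structure of the proof of \cite[Theorem 4.3]{chen2025efficient}, with the slack in the hypothesis used to make the Glauber dynamics ``almost deterministic''. I would proceed in four steps.

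\textbf{Step 1: one-edge conditional marginals.} Consider the heat-bath update of a single edge $e=(u,v)$, given the rest of the configuration $X\setminus\{e\}$.
\begin{itemize}
\item If $u$ and $v$ are already connected in $X\setminus\{e\}$, then $e$ is present with probability exactly $p_e$.
\item Otherwise, let $A,B$ be the two components containing $u$ and $v$. Then $e$ is present with probability
$$\frac{p_e(1+\lambda_A\lambda_B)}{p_e(1+\lambda_A\lambda_B)+(1-p_e)(1+\lambda_A)(1+\lambda_B)},$$
where $\lambda_A=\prod_{w\in A}\lambda_w$.
\end{itemize}
Since $\lambda_A,\lambda_B\le\lambda_{\max}$, the ratio $(1+\lambda_A)(1+\lambda_B)/(1+\lambda_A\lambda_B)$ is bounded by a constant multiple of $(1-\lambda_{\max})^{-1}$. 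Hence in every case $e$ is \emph{absent} with probability at most
$$\eta:=O\Bigl(\frac{1-p_{\min}}{1-\lambda_{\max}}\Bigr)\le \frac{O(1)}{\ee^{45}\log(n/\eps)}.$$
Consequently each update can be driven by a uniform $U\in[0,1]$ such that $U\le 1-\eta$ forces $e$ to be present, regardless of the current state. Only ``rare'' updates with $U>1-\eta$ need a connectivity query.

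\textbf{Step 2: mixing via a grand coupling.} I would run the chain for $T=O(m\log(m/\eps))$ random edge updates from two arbitrary initial states, using the same edge choices and uniforms. After coupon collection, every edge has been updated at least once. All non-rare last updates then agree, so disagreements can only live on the sparse random set of rare updates. Each rare update's outcome depends only on whether its endpoints are connected. Since all but an $\eta$-fraction of edges are forced present, connectivity is determined except through chains of rare edges.

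I would bound the probability of a disagreement chain of length $\ell$ by $(C\eta\cdot\text{polylog})^{\ell}$. With $\eta\le \ee^{-45}/\log(n/\eps)$ this is summable, and a union bound over the $T$ updates gives total variation distance at most $\eps$. This step is where the constant $\ee^{45}$ and the $\log(n/\eps)$ factor are consumed. It is also the main obstacle, because connectivity is nonlocal: a single rare absent edge can in principle affect edges far away. What I would try first is a disagreement-percolation argument. A disagreement can propagate from $f$ to $e$ only if $f$ is a bridge between the endpoints of $e$ in the current graph, which forces a whole witnessing set of rare updates to have occurred. I would then count such witness structures, in the style of the argument in \cite{chen2025efficient}, using that the all-edge configuration dominates.

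\textbf{Step 3: parallel simulation.} Precompute all edge choices and uniforms, and set all forced updates in one parallel step. To resolve the rare updates I would use iterative fixing, as in \cite{liuParallelizeSingleSiteDynamics2024}:
\begin{itemize}
\item Maintain a guess for the value of every rare update.
\item In each round, recompute every rare update's connectivity query against the configuration implied by the current guesses at that time. Each query is a graph connectivity computation, taking $O(\log m)$ depth with $\tilde O(m)$ processors. With up to $m$ queries per round, this gives $\tilde O(m^2\log(1/\eps))$ processors (the $\log(1/\eps)$ accounting for $T/m$ time slices).
\end{itemize}
By the same contraction as in Step 2, the dependency chains among rare updates have length $O(\log(m/\eps))$ with high probability. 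Hence the guesses stabilize to the true trajectory after $O(\log(m/\eps))$ rounds, for total depth $O(\log^2(m/\eps))$.

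\textbf{Step 4: error accounting.} Finally, I would add the failure probability of this stabilization to the mixing error, rescaling $\eps$ by a constant, to conclude that $\texttt{ParallelRC}$ outputs a sample within total variation distance $\eps$ of $\mu^{\mathrm{RC}}_{p,\lambda}$.
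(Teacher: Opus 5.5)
\textbf{There is a genuine gap: your Step~2, which Step~3 reuses as ``the same contraction,'' gives no mechanism for the nonlocality you yourself flag.} Your overall architecture does match the paper's. The paper runs Algorithm~2 of \cite{chen2025efficient} with $T^{\textnormal{GD}}=\lceil 2m\log(4m/\eps)\rceil$ Glauber steps, simulated in $T^{\textnormal{PA}}=\lceil 3\log(4T^{\textnormal{GD}}/\eps)\rceil$ parallel rounds, each a connectivity computation. It splits the error into an $\eps/2$ mixing part and an $\eps/2$ simulation-failure part, as your Steps 2--4 do. But all the real content sits in Step~2, where you offer only a direction. Three concrete problems:

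(i) Contrary to what you say in Step~2, a rare update does not depend only on whether $u,v$ are connected in $X\setminus e$. When they are not, it depends on $\lambda_A\lambda_B$, i.e.\ on the entire components of $u$ and $v$.

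(ii) ``All but an $\eta$-fraction of edges are present'' does not bound how many updates one disagreement can affect. If $w$ has degree $\Theta(n)$ and $f$ is its only present edge, toggling $f$ changes the update law of every other edge at $w$. So no uniform per-step bound of the form $(C\eta\,\mathrm{polylog})^{\ell}$ is available.

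(iii) Even a polylogarithmic fan-out would be too large. The hypothesis $1-p_{\min}\le(1-\lambda_{\max})\ee^{-45}/\log(n/\eps)$ pays for exactly one factor of $\log(n/\eps)$. So your ``summable'' claim already fails for a $\log^2$ fan-out.

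\textbf{The missing idea is the good set used in the paper's proof.} Let $\mathcal C=\{S\subseteq V: |S|\le n/2,\ |E(S,V\setminus S)|\ge|S|\log(n/\eps)\}$. Let $\mathcal G$ be the set of configurations $X$ with $|X\cap E(S,V\setminus S)|>0$ for all $S\in\mathcal C$.

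\begin{itemize}
\item In $\mathcal G$, every component $S$ with $|S|\le n/2$ has fewer than $|S|\log(n/\eps)$ edges of $G$ leaving it. This is what controls how many updates a disagreement at the edge attaching $S$ can influence. It is exactly where the single $\log(n/\eps)$ of the hypothesis is spent. In particular, the situation in (ii), where one of the two coupled configurations isolates $w$, is ruled out once $\deg(w)\ge\log(n/\eps)$, since then $\{w\}\in\mathcal C$.
\item Separately, you need the trajectory to stay in $\mathcal G$ except with tiny probability. This follows from a union bound: isolating some $S\in\mathcal C$ requires at least $|S|\log(n/\eps)$ absent edges. Each is absent with probability at most $2(1-p_{\min})$ given everything else, and there are at most $n^{|S|}$ choices of $S$.
\end{itemize}

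The paper's proof consists precisely of redefining $\mathcal C$ and $\mathcal G$ this way. It then verifies Condition~1 of \cite{chen2025efficient} with $\eta=\eps/(16T^{\textnormal{GD}})$ and reruns the proof of that paper's Theorem~4.3. Your ``witnessing set of rare updates'' heuristic points in this direction, but without this ingredient Steps~2 and~3 remain unproved.
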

\begin{proof}
  The algorithm is identical to \cite[Algorithm 2]{chen2025efficient}, except with a change in parameters
  $T^{\textnormal{GD}} = \lceil 2m \log (4m/\eps) \rceil$ and $T^{\textnormal{PA}} = \lceil 3 \log (4 T^{\textnormal{GD}} / \eps) \rceil$.
  The algorithm simulates $T^{\textnormal{GD}}$ steps of the Glauber dynamics within $T^{\textnormal{PA}}$ depth. The proof of \cite[Theorem 4.3]{chen2025efficient} consists of two parts: on one hand, the Glauber dynamics returns a random sample within $\eps / 2$ total variation distance of $\mu_{p, \lambda}^{\text{RC}}$ within $T^{\textnormal{GD}}$ steps; on the other hand, the parallel algorithm successfully simulates the sequential dynamics with probability at least $1 - \eps / 2$. The only change to make this proof work with our new condition on $p_{\min}$ is to change the definition of the large-expansion vertex set families to:
  \begin{align*}\mathcal{C} &:= \set{S \subseteq V \mid \abs{S} \le n / 2 \text{ and } \abs{E(S, V \setminus S)} \ge \abs{S} \log (n / \eps)} \\  
  \mathcal{G} &:= \set{X \subseteq E \mid \forall S \in \mathcal{C}, \abs{X \cap E(S, V \setminus S)} > 0}.
  \end{align*}  
  where $E(S, V \setminus S)$ denotes the set of edges between $S$ and $V \setminus S$.
  
  We claim that, under the given condition on $p_{\min}$, the Condition 1 of \cite{chen2025efficient} holds with $\mathcal{G}$ and $\eta = \eps / (16 T^{\textnormal{GD}})$. The verification is essentially identical to that in \cite{chen2025efficient}, except for the corresponding change in parameters. Conditioned on this good event, the analysis of the parallel algorithm is the same as in the proof of \cite[Theorem 4.3]{chen2025efficient}.
\end{proof}

The sampling algorithm for the random cluster model is now given in~\Cref{alg5}, denoted \texttt{SampleRC}, below.
It takes as input a graph $G = (V, E)$, edge activities $p \in (0, 1)^E$, vertex activities $\lambda \in [0, 1)^V$ and error bound $\eps \in (0, 1)$.

\begin{algorithm}[H]
  \DontPrintSemicolon
  \caption{\label{alg5}\label{alg:random-cluster-sampler} Algorithm \texttt{SampleRC}$(G,p,\lambda,\eps)$}

  \nonl
  \vspace{0.15em}
  \begingroup
  \setlength{\fboxsep}{4pt}
  \noindent\hspace*{-\algomargin}\hspace*{0.25in}%
  \fbox{%
  \begin{minipage}{\dimexpr\linewidth-0.5in-2\fboxsep-2\fboxrule\relax}
  \centering
  {\footnotesize
  \textbf{Parameters}\par
  \vspace{0.2em}
  \begin{tabular}{@{}c@{\hspace{1.25em}}c@{\hspace{1.25em}}c@{}}
  $C_1 = 10^{-27}(1-\lambda_{\max})p_{\min},$
  &
  $C_2 = \bigl(2p_{\min}^{-1}(1-\lambda_{\max})^{-1}\bigr)^{30/(1-\lambda_{\max})^2},$
  &
  $N_0 = \max\{1/C_1,C_2\},$
  \\[0.35em]
  \multicolumn{3}{c}{
    \begin{tabular}{@{}c@{\hspace{3.5em}}c@{}}
    $\theta = C_1/\log(n/\eps),$
    &
    $T = C_2\theta^{-2\cdot 10^5}\log(2n/\eps).$
    \end{tabular}
  }
  \end{tabular}
  }
  \end{minipage}%
  }\;
  \endgroup
  \vspace{-0.3em}
\;

  \If{$n < N_0$}{
    \textbf{return} $X \sim \mu^{\mathrm{RC}}_{p,\lambda}$ by brute force \;
  }
  initialize $X \gets E$ \;
  \For{$i = 1,2,\ldots,T$}{
    construct $S \subseteq E$ by including each edge $e \in E$ independently with probability $\theta$ \;
    update $X \gets \texttt{ParallelRC}(G',p^*,\lambda,\frac{\eps}{2T})$, with
    $G' = (V,S \cup X)$ and
    \[
      p^*_e =
      \frac{p_e}{p_e+\theta(1-p_e)}
      \mathbf{1}[e \in S \cup X]
    \]
    \tcc{Simulate Glauber dynamics for the biased distribution on $G'$.}
  }
  \textbf{return} $X$ \;
\end{algorithm}

The following is the key observation about the algorithm:
\begin{proposition}
  When $n \geq N_0$, the biased distribution $p^*$ satisfies the preconditions of \Cref{lem:parallel-glauber-rc}.
\end{proposition}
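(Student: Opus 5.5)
The plan is to verify the three preconditions of \Cref{lem:parallel-glauber-rc} directly for the call ${\tt ParallelRC}(G',p^*,\lambda,\eps/(2T))$. These are $n\ge 3$, error parameter below $1/2$, and the lower bound on $\min_e p^*_e$.

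The graph $G'=(V,S\cup X)$ has the same vertex set as $G$, and $\lambda$ is unchanged, so $\lambda_{\max}$ is the same. The only edges are those in $S\cup X$, so the minimum is taken only over $e\in S\cup X$, where $p^*_e=p_e/(p_e+\theta(1-p_e))$. The first two conditions are immediate:
\begin{itemize}
\item $N_0\ge 1/C_1\ge 10^{27}$, so $n\ge N_0$ gives $n\ge 3$;
\item $\eps/(2T)<1/2$ because $\eps<1$ and $T\ge 1$ (the factor $\theta^{-2\cdot 10^5}$ is huge).
\end{itemize}

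For the main condition, bound the deficiency of each $p^*_e$:
$$1-p^*_e=\frac{\theta(1-p_e)}{p_e+\theta(1-p_e)}\le \frac{\theta}{p_e}\le\frac{\theta}{p_{\min}}=\frac{10^{-27}(1-\lambda_{\max})}{\log(n/\eps)}.$$
It then suffices to show
$$\frac{10^{-27}}{\log(n/\eps)}\le \frac{1}{\ee^{45}\log(2nT/\eps)},\qquad\text{equivalently}\qquad \log(2nT/\eps)\le 10^{27}\ee^{-45}\log(n/\eps).$$
Since $\ee^{45}<10^{20}$, the right-hand side is at least $10^{7}\log(n/\eps)$.

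The key step, and the only real obstacle, is bounding $\log T$ by $O(10^5)\cdot\log(n/\eps)$, using $n\ge N_0$ to absorb the constants $C_1,C_2$. We have
$$\log T=\log C_2+2\cdot 10^5\log(1/\theta)+\log\log(2n/\eps).$$
The three terms are handled as follows:
\begin{itemize}
\item $n\ge C_2$ gives $\log C_2\le\log n\le\log(n/\eps)$.
\item $\log(1/\theta)=\log(1/C_1)+\log\log(n/\eps)$. Here $n\ge 1/C_1$ gives $\log(1/C_1)\le \log(n/\eps)$, and $\log\log(n/\eps)\le\log(n/\eps)$. So $\log(1/\theta)\le 2\log(n/\eps)$.
\item $\log\log(2n/\eps)\le \log(n/\eps)+1$.
\end{itemize}
Hence
$$\log(2nT/\eps)\le \log 2+\log(n/\eps)+\log(n/\eps)+4\cdot 10^5\log(n/\eps)+\log(n/\eps)+1\le 10^6\log(n/\eps),$$
using $\log(n/\eps)\ge\log n\ge 1$. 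This is comfortably below $10^7\log(n/\eps)$.

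The care needed is only in bookkeeping: the error parameter passed to \texttt{ParallelRC} is $\eps/(2T)$, not $\eps$, which is where $\log T$ enters. The generous constant $10^{-27}$ in $C_1$ was chosen precisely to absorb the $2\cdot 10^5$ exponent and $\ee^{45}$. Edges outside $S\cup X$ have $p^*_e=0$ but are not edges of $G'$, so they do not affect $p^*_{\min}$.
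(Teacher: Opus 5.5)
Your proposal is correct and follows essentially the same route as the paper. Both proofs bound $1-p^*_{\min}$ by $\theta/p_{\min} = 10^{-27}(1-\lambda_{\max})/\log(n/\eps)$, reduce to $10^{27}\log(n/\eps)\ge \ee^{45}\log(2Tn/\eps)$, and absorb $\log C_2$ and $\log(1/C_1)$ into $\log(n/\eps)$ using $n\ge N_0$. You also explicitly check the side conditions $n\ge 3$ and $\eps/(2T)<1/2$, which the paper leaves implicit.
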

\begin{proof}
  For the biased distribution $p^*$, we have
  $$
  1 - p_{\min}^* = 1 - \frac{p_{\min}}{p_{\min} + \theta (1 - p_{\min})}  \leq \frac{1 - \lambda_{\max}}{10^{27} \log(n/\eps)}
  $$

  We claim that this satisfies the condition of \Cref{lem:parallel-glauber-rc} with the parameter $\eps' = \frac{\eps}{2 T}$. To show this, it suffices to show that
  \begin{equation}
    \label{ggr3}
    10^{27} \log(n/\eps) \geq \ee^{45} \log(2 T n / \eps)
  \end{equation}

  To that end, we expand out the definition of $T$ to calculate:
  \begin{align*}
    (\ee^{45} \log(2 T n / \eps)) / (10^{27} \log(n/\eps))
    &\leq \;  10^{-7} \Bigl( 1 + \frac{\log (2 C_2)}{\log(n/\eps)} + \frac{ -2 \cdot 10^5 \log C_1}{\log(n/\eps)} + \frac{(2 \cdot 10^5 + 1) \log(n/\eps)}{\log(n/\eps)} \Bigr) \\
    &\leq \; 10^{-7} \Bigl( 1 + \log 2 + \frac{\log C_2}{\log C_2} + \frac{ -2 \cdot 10^5 \log C_1}{\log(1/C_1)} + (2 \cdot 10^5 + 1) \Bigr),
  \end{align*}
  which is at most $0.05$. This establishes \Cref{ggr3}.
\end{proof}

\begin{lemma}
  The algorithm {\tt SampleRC} uses $\tilde O_{\delta}(m^2)$ processors and depth $O_{\delta} ( \log(n/\eps)^{O(1)})$. It returns a sample within total variation distance $\eps$ from $\mu^{\text{RC}}_{p, \lambda}$.
\end{lemma}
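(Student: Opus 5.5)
Three things have to be checked separately: correctness in total variation, the depth, and the processor count. The case $n < N_0$ is trivial, because $N_0 = \max\{1/C_1, C_2\}$ depends only on $\delta$. By \Cref{lem:ferro-ising-sampler}'s setup, $p_{\min} = 1-1/\gamma_{\min} \ge \delta/(1+\delta)$ and $\lambda_{\max} \le 1-\delta$. Brute-force enumeration over the $2^m$ subsets, with $m \le n^2 = O_\delta(1)$, therefore costs $O_\delta(1)$ work and depth. From now on we assume $n \ge N_0$.

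\emph{Correctness.} I would couple the algorithm with the ideal field dynamics started from $X_0 = E$. By \Cref{lem:field-dynamics-mixing}, $T = C_2\theta^{-2\cdot 10^5}\log(2n/\eps)$ steps suffice for the ideal chain to be within $\eps/2$ of $\mu^{\mathrm{RC}}_{p,\lambda}$. The factor $(2/(p_{\min}(1-\lambda_{\max})))^{30/(1-\lambda_{\max})^2}$ is exactly $C_2$, and I take the target accuracy to be $\eps/2$, absorbing constants into $\log(2n/\eps)$.

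In each step $i$, the exact update draws $X$ from $\mu^{\mathrm{RC}}_{p^*,\lambda}$ on $G'$. The algorithm instead calls \texttt{ParallelRC} with error $\eps/(2T)$. The preceding proposition shows that $p^*$ satisfies the precondition of \Cref{lem:parallel-glauber-rc} with $\eps' = \eps/(2T)$. One technicality: the proposition bounds $p^*_{\min}$ only over edges of $G'$, and edges outside $S \cup X$ have $p^*_e = 0$. I handle this by applying \Cref{lem:parallel-glauber-rc} to the graph $G'$ itself, noting that $G'$ has the same $n$ and that $\lambda_{\max}$ is unchanged.

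Each call then has output within $\eps/(2T)$ of the exact update, for every fixed $S$ and $X$. A standard hybrid/coupling argument (the data-processing inequality applied step by step) bounds the accumulated error by $T\cdot \eps/(2T) = \eps/2$. Adding the mixing error gives total variation at most $\eps$.

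\emph{Complexity.} Because of the $n \ge N_0$ constraint, $\theta = C_1/\log(n/\eps)$ is at most $1$. It follows that $T = O_\delta(\log(n/\eps)^{2\cdot 10^5+1})$. Each iteration has two parts. Sampling $S$ takes $O(1)$ depth with $m$ processors. The call to \texttt{ParallelRC} on $G'$ has at most $m$ edges and, by \Cref{lem:parallel-glauber-rc}, uses depth $O(\log^2(mT/\eps)) = O_\delta(\log^2(n/\eps))$, since $\log T = O_\delta(\log\log(n/\eps))$ and $m \le n^2$. The same lemma gives $\tilde O(m^2\log(T/\eps)) = \tilde O_\delta(m^2)$ processors, with $\tilde O$ hiding polylogarithmic factors in $n/\eps$. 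The $T$ iterations run sequentially, so the total depth is $T\cdot O_\delta(\log^2(n/\eps)) = O_\delta(\log(n/\eps)^{O(1)})$ and the processor count stays $\tilde O_\delta(m^2)$.

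\emph{Main obstacle.} The delicate step is the reduction to \Cref{lem:parallel-glauber-rc}. It has to be stated on the subgraph $G'$, with $p^*$ restricted to it, and the error must be tracked when the ideal and simulated chains diverge. I would handle the divergence by the hybrid argument, which uses only that each call's guarantee holds uniformly over its input state, so no concern about adaptivity arises.
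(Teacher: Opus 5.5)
Your proposal is correct and follows essentially the same route as the paper: brute force for $n<N_0$; per-iteration error $\eps/(2T)$ from \Cref{lem:parallel-glauber-rc} (via the preceding proposition), accumulating to $\eps/2$ over $T$ steps; plus $\eps/2$ mixing error from \Cref{lem:field-dynamics-mixing}; and the depth and processor bounds read off from $T=O_\delta(\log(n/\eps)^{O(1)})$ sequential calls to \texttt{ParallelRC}. Your explicit handling of the call on the subgraph $G'$ and of the hybrid argument spells out steps that the paper leaves implicit.
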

\begin{proof}
  If $n < N_0$, we sample by brute force using $2^{O(N_0^2)} \leq O_{\delta}(1)$ processors. So we suppose $n \geq N_0$.

  The algorithm runs for $T$ iterations and each iteration runs in $O(\log^2(m T /\eps))$ depth using $\tilde{O}(m^2 \log(T/\eps))$ processors. Plugging in the parameters gives the claimed complexity bounds.

  Next, we turn to the output distribution. By \Cref{lem:parallel-glauber-rc}, each iteration of {\tt ParallelRC} samples from the Glauber dynamics, up to variation distance of $\frac{\eps}{2 T}$. Thus, up to variation distance $T \cdot \frac{\eps}{2 T}$, the procedure simulates the $T$ steps of the field dynamics. Also, by \Cref{lem:field-dynamics-mixing}, we have $T \geq T^{\text{FD}}_{\text{mix}}(\eps/2)$. So, the field dynamics itself produces a sample within total variation distance $\eps/2$ of $\mu_{p, \lambda}^{\text{RC}}$.

  Overall, the error incurred is at most $T \cdot \frac{\eps}{2 T} + \frac{\eps}{2} = \eps$, concluding the proof of the sampling algorithm.
\end{proof}
{
\printbibliography
}
\end{document}